\theoremstyle{plain}
\newtheorem{theorem}{Theorem}
\newtheorem{lemma}[theorem]{Lemma}
\newtheorem{corollary}[theorem]{Corollary}
\theoremstyle{definition}
\newtheorem{example}[theorem]{Example}
\newtheorem{remark}[theorem]{Remark}
\newtheorem{definition}{Definition}
\newtheorem*{ack}{Acknowledgment}
\newtheorem*{funding}{Funding}
\newcommand{\calA}{\ensuremath{\mathcal{A}}}
\newcommand{\calB}{\ensuremath{\mathcal{B}}}
\newcommand{\calF}{\ensuremath{\mathcal{F}}}
\newcommand{\calI}{\ensuremath{\mathcal{I}}}
\newcommand{\calL}{\ensuremath{\mathcal{L}}}
\newcommand{\calP}{\ensuremath{\mathcal{P}}}
\newcommand{\calQ}{\ensuremath{\mathcal{Q}}}
\newcommand{\N}{\ensuremath{\mathbb{N}}}
\newcommand{\QQ}{\ensuremath{\mathbb{Q}}}
\newcommand{\RR}{\ensuremath{\mathbb{R}}}
\newcommand{\concake}{[0,1]}
\newcommand{\abscake}{\ensuremath{X}}
\newcommand{\LIM}{\mathop{\operatorname{LIM}}}
\newcommand{\citeA}[2][]{\citet[#1]{#2}}
\newcommand{\citeR}[2][]{\citealp[#1]{#2}}
\renewcommand{\S}{§}
\begin{document}
\title{Cutting a Cake Is Not Always a \enquote{Piece of Cake}: A
  Closer Look at the Foundations of Cake-Cutting Through the Lens of
  Measure Theory}

\author[1]{Peter Kern}
\author[2]{Daniel Neugebauer}
\author[2]{Jörg Rothe}
\author[3]{René L.\ Schilling}
\author[4]{Dietrich Stoyan}
\author[2,*]{Robin Weishaupt}

\affil[1]{Mathematisches Institut, Heinrich-Heine-Universität Düsseldorf}
\affil[2]{Institut für Informatik, Heinrich-Heine-Universität Düsseldorf}
\affil[3]{Institut für Mathematische Stochastik, TU Dresden}
\affil[4]{Institut für Stochastik, TU Bergakademie Freiberg}
\affil[*]{Corresponding Author: robin.weishaupt@hhu.de}

\date{\vspace{-5ex}}

\maketitle

\begin{abstract}
  Cake-cutting is a playful name for the fair division of a
  heterogeneous, divisible good among agents, a well-studied problem at
  the intersection of mathematics, economics, and artificial
  intelligence.  The cake-cutting literature is rich and edifying.
  However, different model assumptions are made in its many papers, in
  particular regarding the set of allowed pieces of cake that are to be
  distributed among the agents and regarding the agents' valuation
  functions by which they measure these pieces.  We survey the commonly
  used definitions in the cake-cutting literature, highlight their
  strengths and weaknesses, and make some recommendations on what
  definitions could be most reasonably used when looking through the
  lens of measure theory.\\

  \noindent\textbf{Keywords}: cake-cutting protocol; admissible piece of cake; finitely
  additive measure; continuity properties of measures
\end{abstract}

\section{Introduction}
\label{chap:introduction}

Since the groundbreaking work of \citeA{ste:j:fair-division},
cake-cutting is a metaphor for the so-called \emph{fair division
  problem for a divisible, heterogeneous good}, which addresses the
problem to split a contested quantity (a \enquote{cake}) in a fair way
among several parties $A,B,C,\dots$; each party may have its own idea
about the value of the different parts of the cake.

While mainly mathematicians and economists were concerned with the
study of cake-cutting early on, \emph{``in recent years, cake cutting
  has emerged as a major research topic in artificial intelligence,''}
as \citeA[p.~567]{bal:c:simultaneous-cake-cutting} note.  They
substantiate their claim by listing ten papers on cake-cutting five of
which appeared in AAAI (e.g.,
\cite{coh-lai-par-pro:c:optimal-envy-free-cake-cutting}), three in
IJCAI (e.g., \cite{pro:c:though-shalt-covet}), and the remaining two
in AAMAS proceedings (e.g.,
\cite{aum-dom-has:c:socially-efficient-cake-divisions}).  For more
than a decade now, AAAI and IJCAI (the two top AI conferences) and
AAMAS (the leading venue for research on multiagent systems) have
published numerous research papers on fair division and, in
particular, on cake-cutting.
\citeA[p.~567]{bal:c:simultaneous-cake-cutting} go on to write,
\emph{``The growing interest in cake cutting, and fair division more
  broadly, is partly motivated by potential applications in AI, such
  as industrial procurement, manufacturing and scheduling, and airport
  traffic management
  \citep{che-etal:j:multiagent-resource-allocation}.  For example,
  concrete applications to the allocation of multiple computational
  resources in shared computing systems have recently received
  significant attention
  \citep{gut-nis:c:fair-allocation-without-trade,kas-pro-sha:c:no-agent-left-behind}.''}
The main purpose of our paper, however, are neither applications nor
novel protocols for cake-cutting; instead, we will have a closer look
at the mathematical foundations of cake-cutting, establishing the
connection to measure theory. Under very modest assumptions (e.g., the
possibility to make continuous cuts) this will empower researchers in
the field with new tools which are potent enough to deal with
situations that are currently seen as ``exotic'' or ``theoretical.''

A traditional way
of fair division between two parties $A$ and $B$ would be to let $A$
divide the cake into two pieces (depending on their own valuation)
while $B$ has the right to choose one of the pieces, the so-called
\emph{cut \& choose} protocol. There are other possibilities for two
parties as well as extensions to more than two parties (see, e.g.,
\citeR{pro:b:handbook-comsoc-cake-cutting-algorithms,
lin-rot:b:economics-and-computation-cake-cutting}, for an overview).
Yet, while the basic rules of the game are pretty clear, the
assumptions on the actual cutting process are often treated in a
gentlemanlike manner. If the whole cake is represented by an interval,
say $\concake$, many authors think of the pieces as \enquote{intervals,}
without specifying whether the intervals are open $(a,b)\subset
\concake$, half-open $(a,b], [a,b)\subset \concake$, or closed
$[a,b]\subseteq\concake$, and how to treat the~-- possibly twice
counted~-- end points, i.e., $[0,\nicefrac{1}{2}) \cup
[\nicefrac{1}{2},1]$ vs.\ $[0,\nicefrac{1}{2}] \cup
[\nicefrac{1}{2},1]$; this is, of course, not an issue if a one-point
set like $\{\nicefrac{1}{2}\}$ has zero value for all
parties. However, this simple example shows that a formal mathematical
approach to cake-cutting needs to address questions like:
\begin{itemize}
\item Are (open, closed, half-open) intervals the only possible pieces
  of cake?
\item Do we allow for finitely many or infinitely many cuts?  A
  \enquote{cut} means the split of any subset of $\concake$ at a single
  point; it depends on the particular protocol to which interval the
  point will belong.
\item Which properties should a valuation function (by which an agent
  individually evaluates the pieces of cake) have, and how does it
  interact with the family of admissible pieces of cake?
\end{itemize}
For some cases, there is an obvious answer: If we use only finitely many
cuts, finite unions of intervals of the form $\langle a,b\rangle$~--
where the angular braces indicate either open or closed ends~-- is all we
can get; and if, in addition, any single point $a\in \concake$ has zero
value, we do not have to care about the open or closed ends anymore.
We will see in Section~\ref{chap:preliminaries:finite} below that this
rather implicit assumption brings us in a much more potent framework
that can effectively deal with a countably infinite number of cuts.

Let us briefly discuss situations where an infinite number of cuts may
actually be inevitable.  While most research in cake-cutting has
focused on finite protocols and on minimizing the required number of
cuts, there are also some impossibility results that show that no
finite cake-cutting protocol (even if unbounded) can guarantee all
players their fair share of the cake (for various notions of fairness
such as proportionality, exactness, or envy-freeness).  For example,
\citeA{str:j:finite-protocols-cannot-ef} shows that no finite
cake-cutting protocol can guarantee an envy-free division of a cake
among three or more players who each are to receive a single connected
piece.  As another example, in contrast to the moving-knife procedure
due to \citeA{aus:j:moving-knife-cut-and-choose} that guarantees two
players an exactly proportional share,
\citeA{rob-web:b:cake-cutting-algorithms-be-fair-if-you-can} show that
no finite cake-cutting protocol, bounded or unbounded, can guarantee
an exactly proportional division of the cake for two players; see also
the related nearly exact, envy-free, finite unbounded protocol by
\citeA{rob-web:j:near-exact-and-envy-free-cake-division}.
Such impossibility results indicate that infinite cake-cutting protocols
are unavoidable when one has to deal with certain valuations of the
cake (which, admittedly, are usually constructed specifically for the
purpose of proving the desired impossibility result).

As soon as we allow for countably infinitely many cuts, things change
dramatically, as the following example shows.

\begin{example}[Cantor dust; Cantor's ternary set]\label{ex:cantor-set}
  Start with the complete cake as a single piece, i.e., $A_0 =
  \concake$. Now, cut out the middle third of $A_0$ to obtain the
  intermediate piece $A_1 = A_0\setminus (\nicefrac{1}{3},
  \nicefrac{2}{3}) = [0,\nicefrac{1}{3}] \cup [\nicefrac{2}{3},1]$
  comprising two closed intervals. Next, cut out the middle third of
  both remaining pieces in $A_1$ to obtain a union of four closed
  intervals $A_2 = [0,\nicefrac{1}{9}] \cup
  [\nicefrac{2}{9},\nicefrac{1}{3}] \cup [\nicefrac{2}{3},
  \nicefrac{7}{9}] \cup [\nicefrac{8}{9}, 1]$, see
  \cref{fig:motivation:cantor-construction-sets-1}.
  If this procedure is
  repeated on and on, we will remove countably many open intervals, and
  the remainder set is $C_{\nicefrac{1}{3}} = \bigcap_{i = 1}^{\infty}
  A_i$. The set $C_{\nicefrac{1}{3}}$ is the \textbf{Cantor (ternary)
    set} (see, e.g.,~\citeR[\S~2.5]{schi:b:counterexamples}), and one can
  show
  that this is a closed set, which has more than countably many points,
  does not contain any interval, and
  is dense in itself, i.e., each
  of its points is a limit point of a sequence inside
  $C_{\nicefrac{1}{3}}$. In the usual measuring scale, the original cake
  had length~$1$, and the recursively removed pieces have
  total length
  \begin{gather*}
    \frac 13 + \left(\frac 19 + \frac 19\right) + \left(\frac 1{27} +
      \frac 1{27}+\frac 1{27} + \frac 1{27}\right) + \cdots
    = \sum_{i\in\N} \frac{2^{i-1}}{3^i}
    = 1,
  \end{gather*}
  so that $C_{\nicefrac{1}{3}}$ has zero \enquote{length,} but it still
  contains more than countably many points.

  The same construction principle, removing at each stage $2^{i-1}$
  identical open middle intervals, each having length $p^i$ for some~$p$,
  $0<p\leq \nicefrac{1}{3}$, leads to the Cantor set $C_p$, which is,
  again, closed, uncountable, and does not contain any interval. If,
  say, $p=\nicefrac{1}{4}$, the removed intervals have total
  length $\nicefrac{1}{2}$ and the remaining Cantor dust has
  \enquote{length} $1- \nicefrac{1}{2} = \nicefrac{1}{2}$. This is not
  quite expected.

  While it is intuitive that the removed intervals should have a
  certain length, it feels unnatural to speak of the \enquote{length} of
  a dust-like set as $C_p$. In fact, we are dealing here with
  (one-dimensional) Lebesgue measure, which is the mathematically formal
  extension of the familiar notion of \enquote{length.}
\end{example}

An alternative, slightly more formal way of illustrating the Cantor
dust is given in the appendix as Example~\ref{ex:cantor-set-appendix}.

This example shows that, as soon as we allow for countably many cuts,
there can appear sets which may not be written as a countable union of
intervals; moreover, although these sets consist of limit points only,
they may have strictly positive length.
\begin{figure}[t]
  \centering
  \includegraphics[width=\textwidth]{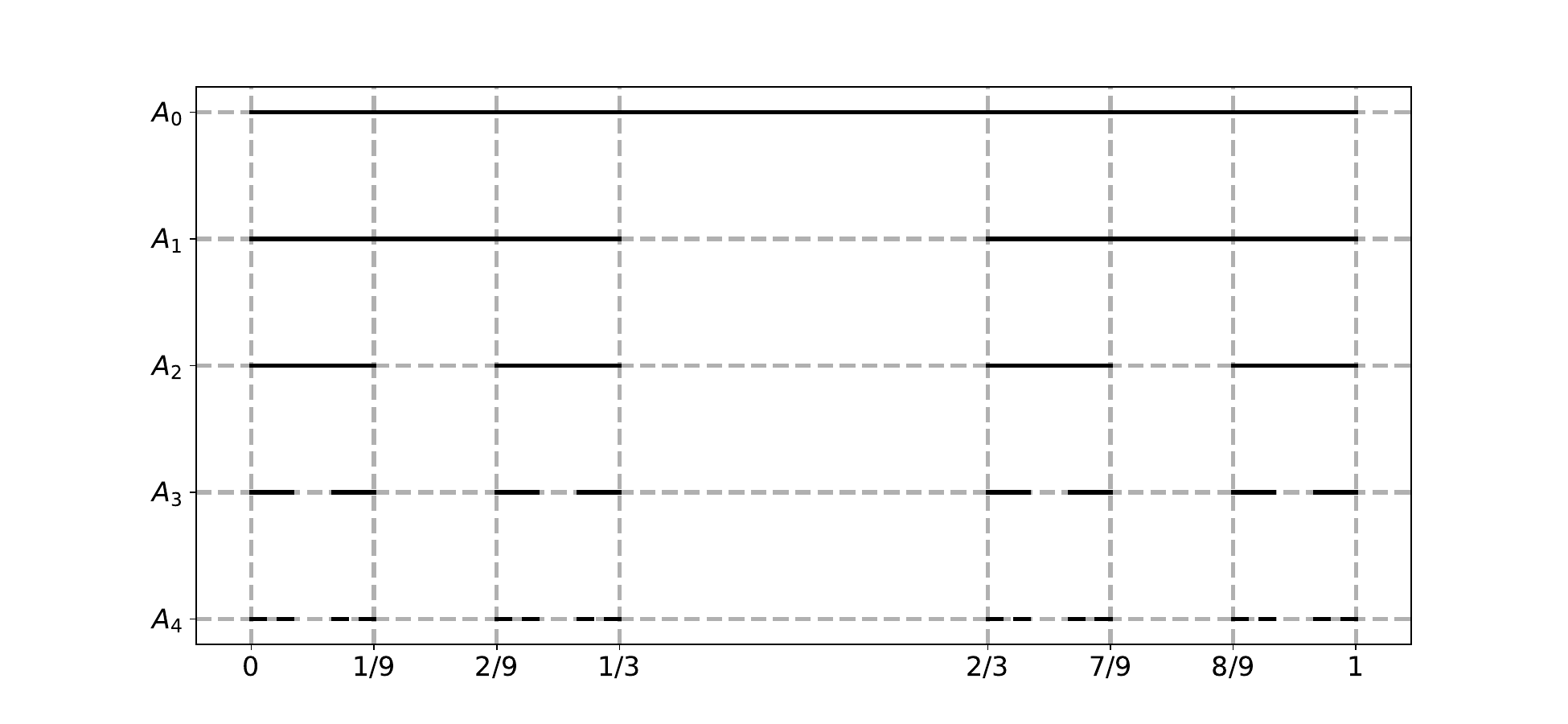}
  \caption{Step-wise pieces to be cut for a Cantor-like piece of
    cake.}
  \label{fig:motivation:cantor-construction-sets-1}
\end{figure}

An important feature of this example is the fact that we extend the
family of intervals to a family of subsets which contains (i)~finite
unions, (ii)~countable intersections, and (iii)~complements of its
members, leading to fairly complicated subsets as, e.g.,
$C_p$. Moreover, when calculating the length of all removed intervals,
we tacitly assumed
\begin{itemize}
\item the (finite) additivity of length: The length of two disjoint
  sets is the sum of their lengths;
\item the countable or $\sigma$-additivity which plays the role of a
  continuity property: The length of a countable union is the limit of
  the length of the union of the first $N$ sets as $N \to \infty$.
\end{itemize}
As it will turn out, these are two far-reaching assumptions on the
interplay of the valuation function (here: length) with its domain; we
will see how this relates to the desirable property that we can cut
off pieces of arbitrary length~$\ell$, $0\leq \ell\leq 1$, from the
cake $\concake$ (allowing for any valuation values of the cut-off
pieces).

Commonly, in cake-cutting theory (see,
e.g.,~\citeR{bra-tay:b:fair-division-from-cake-cutting-to-dispute-resolution,
pro:b:handbook-comsoc-cake-cutting-algorithms,
lin-rot:b:economics-and-computation-cake-cutting}) a (piece-wise
constant) valuation function $v \colon \calP \to [0,1]$, where $\calP$
is some family of subsets of the cake, is represented as shown in
\cref{fig:motivation:common-valuation-function}: The cake $\concake$
is split horizontally into multiple pieces and the number of
vertically stacked boxes per piece describes the piece's valuation
from some agent's perspective. For example, the valuation function $v$
in Figure \ref{fig:motivation:common-valuation-function} evaluates the
piece $X' = [0,\nicefrac{2}{6}]$ with $v(X') = \nicefrac{3}{17}$.
\begin{figure}[ht]
    \centering
    \begin{tikzpicture}[scale=0.7]
      \foreach \x in {1,2,...,6}
      \draw (-0.5+\x,-0.5) node{\x};
      \draw[very thick] (0,0) grid  (1,2);
      \draw[very thick] (1,0) grid +(1,1);
      \draw[very thick] (2,0) grid +(1,5);
      \draw[very thick] (3,0) grid +(1,2);
      \draw[very thick] (4,0) grid +(1,4);
      \draw[very thick] (5,0) grid +(1,3);
    \end{tikzpicture}
    \caption{Common representation for a valuation function in
      cake-cutting.}
    \label{fig:motivation:common-valuation-function}
\end{figure}
Having this example in mind, one is not aware of any limitations and
might assume that all possible sets are indeed admissible pieces,
i.e., $\calP = \mathfrak{P}(\concake) = \{A \mid
A\subseteq\concake\}$.

The following classical example from measure theory shows that there
cannot exist a valuation function that assigns to intervals $\langle
a,b\rangle\subseteq \concake$ their natural length $b-a$, and which is
additive, $\sigma$-additive (in the sense explained above), and able
to assign a value to every set $A\subseteq \concake$. Things are
different if we do not require $\sigma$-additivity (see the discussion
in~\citeR[\S~7.31]{schi:b:counterexamples}).

\begin{example}[\citeR{vit:c:vitali-sets};
  see also, e.g.,
  \citeR{schi:b:counterexamples}]\label{ex-vitali}
  Let $\concake$ be the standard cake, and assume that the
  valuation function $v$ is $\sigma$-additive (see
  Definition~\ref{def:valuation-function} on
  page~\pageref{def:valuation-function}), assigning to any interval
  its natural length. This means, in particular, that $v$ is invariant
  under translations and evaluates the complete cake with $v(\concake) = 1$.
  Let us define the relation $\ast$ as follows: We say that two real
  numbers $x,y \in \RR$ satisfy the relation $\ast$ if, and only if, $x
  - y \in \QQ$, i.e., their difference is rational.  The relation $\ast$
  is an equivalence relation and the corresponding equivalence classes
  $[x] = \{ y \in \RR \mid x \ast y \} \subseteq \RR$ lead to a disjoint
  partitioning of $\RR$.  By the axiom of choice, there is a set $V
  \subset \concake$ which contains exactly one representative of every
  equivalence class~$[x]$.  A set like $V$ is called a \textbf{Vitali
    set}.  Clearly, $V \in \mathfrak{P}(\concake)$ and the sets $q+V =
  \{q+x\mod 1 \mid x\in V\}$, $q\in\QQ$, are a disjoint partition of
  $\concake$; thus $\bigcup_{q\in\QQ} (q+V) = \concake$. By assumption, $v$ is
  $\sigma$-additive and assigns to each $q+V$ the same value
  (translation invariance).  Hence, we end up with the contradiction
  \begin{gather*}
    1
    = v(\concake)
    = v\left(\bigcup_{q\in\QQ}(q+V)\right)
    = \sum_{q\in\QQ} v(q+V)
    = \begin{cases}0 & \text{ if }v(V)=0,\\ \infty & \text{ if }v(V)>0.\end{cases}
  \end{gather*}
  Thus $v$ cannot have the power set of the cake $\concake$ as its
  domain \emph{if we assume that $v$ is $\sigma$-additive}. We will see
  below that certain commonly used divisibility assumptions are
  equivalent to the $\sigma$-additivity of the valuation.
\end{example}

\begin{example}[Cantor function]\label{ex:cantor-function}
  Let us return to Example~\ref{ex:cantor-set} and interpret the
  points in the set $C_{\nicefrac{1}{3}}$ as valuable assets which
  need to be priced. We may assume that the total value of the cake
  $C_{\nicefrac{1}{3}}$ is $1$. We want to construct a
  \enquote{cumulative valuation function~$V$} which has the property
  that for $0\leq a \leq b\leq 1$ the difference $V(b)-V(a)$ is the
  value of the points contained in $C_{\nicefrac{1}{3}}\cap (a,b]$.
  Clearly, $x\mapsto V(x)$ is a (not necessarily strictly) increasing
  function with $V(0)=0$ and $V(1)=1$.

  If we agree that the assets should be \enquote{homogeneously}
  priced, then we are automatically led to the following scheme: As
  the total value of $C_{\nicefrac{1}{3}}$ is one, the value of
  $C_{\nicefrac{1}{3}}\cap [0,\nicefrac 12]$ and
  $C_{\nicefrac{1}{3}}\cap [\nicefrac 12, 1]$ should be the same,
  i.e., $\nicefrac 12$.  Since $C_{\nicefrac{1}{3}} \cap (\nicefrac
  13, \nicefrac 23) = \emptyset$,
  we see that both the left third and the right third of
  $C_{\nicefrac{1}{3}}$ has the value $\nicefrac 12$.  This means that
  $V(x) = \nicefrac 12$ on the whole middle third $(\nicefrac
  13,\nicefrac 23)$.

  Now we can repeat this argument in the two remaining sets
  $C_{\nicefrac{1}{3}}\cap [0,\nicefrac 13]$ and
  $C_{\nicefrac{1}{3}}\cap [\nicefrac 23,1]$. Since these pieces are
  scaled-down versions of the original set $C_{\nicefrac{1}{3}}$, we can
  repeat our argument to the three thirds of the scaled sets and so we
  see that the cumulative valuation function $V(x)$ takes the values
  $\nicefrac 14$ and $\nicefrac 34$ on the intervals $(\nicefrac
  19,\nicefrac 29)$ and $(\nicefrac 79,\nicefrac 89)$, respectively.

  Iterating this procedure \emph{ad infinitum}, the remaining values
  of $V$ at interfaces of the intervals are uniquely determined by
  monotonicity and we end up with the so-called \emph{Cantor function}
  or \emph{devil's staircase}, which is monotone, increasing,
  continuous, and it is flat (i.e., constant) on all middle-thirds
  removed in the construction process of $C_{\nicefrac{1}{3}}$ in
  Example~\ref{ex:cantor-set}; a precise mathematical description can be
  achieved, e.g., using the alternative representation of the Cantor set
  in Example~\ref{ex:cantor-set-appendix} of
  Appendix~\ref{sec:appendix}, but at this point the pictures in
  \cref{fig:motivation:cantor-construction-sets} tell it all:
  \begin{figure}[t]
    \centering
    \begin{tikzpicture}
      \def \lightGray {gray!60};
      \draw [black, step=1.0cm] (0,0) grid +(3,3);
      \draw[black, pattern=north east lines, pattern color=\lightGray] (1,0) rectangle (2,1);
      \draw[black, pattern=north east lines, pattern color=\lightGray] (2,0) rectangle (3,1);
      \draw[black, pattern=north east lines, pattern color=\lightGray] (2,1) rectangle (3,2);
      \draw[black, pattern=north east lines, pattern color=\lightGray] (0,1) rectangle (1,2);
      \draw[black, pattern=north east lines, pattern color=\lightGray] (0,2) rectangle (1,3);
      \draw[black, pattern=north east lines, pattern color=\lightGray] (1,2) rectangle (2,3);
      \draw [black, thin] (0,0) -- (3,3);
      \draw [black, semithick] (1,1.5) -- (2,1.5);
      \draw [decorate,decoration = {calligraphic brace}] (3,-0.1) --
      (0,-0.1) node[pos=0.5,below=2pt,black]{\tiny Cake};
      \draw [decorate,decoration = {calligraphic brace}] (-0.1,0) --
      (-0.1,3) node[pos=0.5,left=1pt,black]{\tiny $V(x)$};

      \draw [black] (4,0) rectangle (7,3);
      \draw [black, densely dashed] (5,1) rectangle (6,2);
      \draw [black] (6,2) rectangle (7,3);
      \draw [black, thin] (4,0) -- (7,3);
      \draw [black, semithick] (5,1.5) -- (6,1.5);

      \draw [black] (4,0) rectangle (5,1);
      \draw [black] (4,0) rectangle (4.33,0.33);
      \draw [black, densely dashed] (4.33,0.33) rectangle (4.66,0.66);
      \draw [black] (4.66,0.66) rectangle (5,1);
      \draw [black, semithick] (4.33,0.495) -- (4.66,0.495);

      \draw [black, step=0.333333333333cm] (6,2) grid +(1,1);
      \draw[black, pattern=north east lines, pattern color=\lightGray] (6.33333,2) rectangle (6.66666,2.33333);
      \draw[black, pattern=north east lines, pattern color=\lightGray] (6.66666,2) rectangle (7,2.33333);
      \draw[black, pattern=north east lines, pattern color=\lightGray] (6.66666,2.33333) rectangle (7,2.666666);
      \draw[black, pattern=north east lines, pattern color=\lightGray] (6,2.33333) rectangle (6.33333,2.666666);
      \draw[black, pattern=north east lines, pattern color=\lightGray] (6,2.66666) rectangle (6.33333,3);
      \draw[black, pattern=north east lines, pattern color=\lightGray] (6.33333,2.66666) rectangle (6.666666,3);
      \draw [black, semithick] (6.33333,2.495) -- (6.66666,2.495);

      \node (img1) at (9.5,1.5) {\includegraphics[width=3cm,height=3cm]{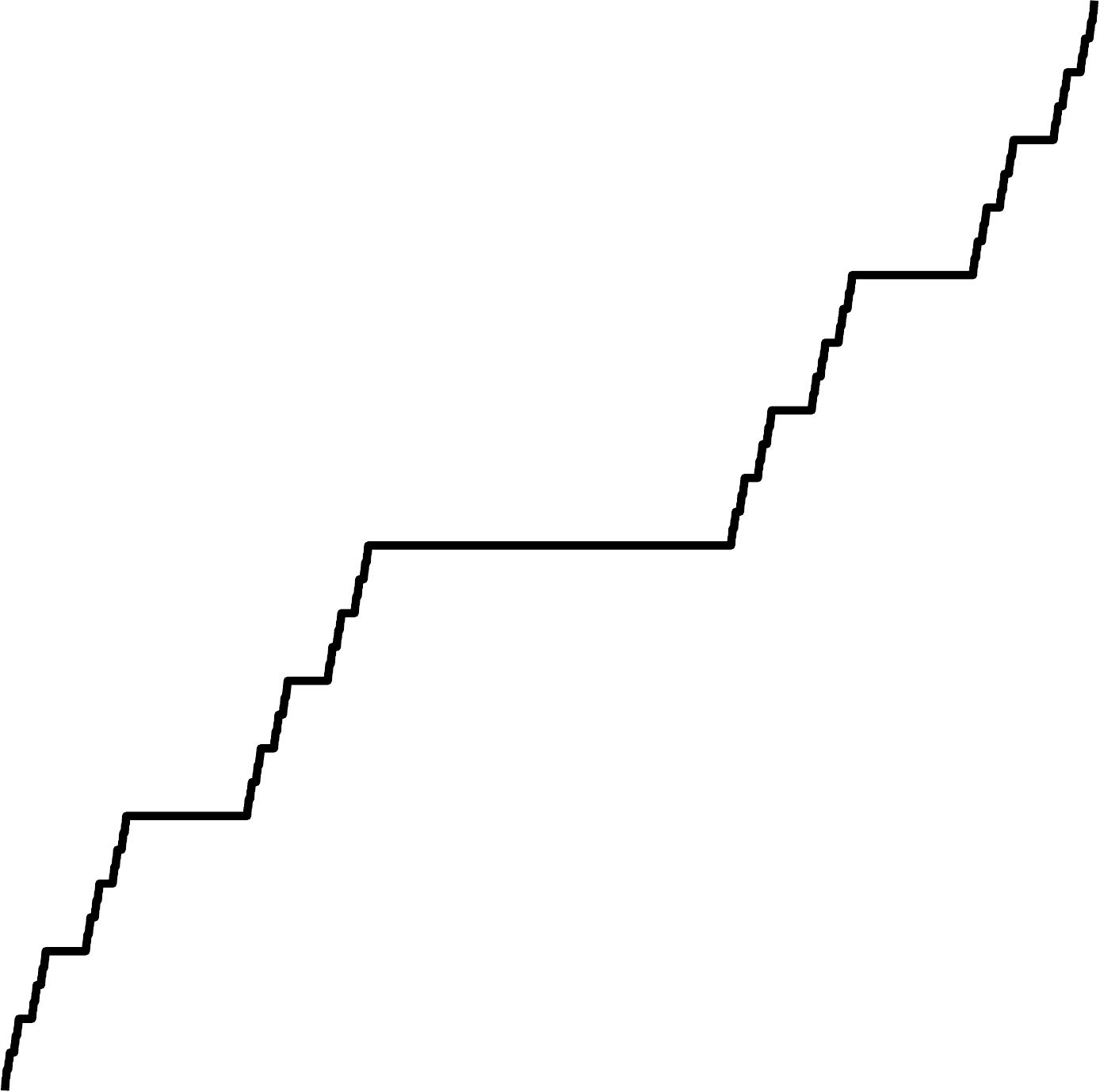}};
      \draw [black] (8,0) rectangle (11,3);
    \end{tikzpicture}
    \caption{Step-by-step construction of the Cantor function: In
      each step, we subdivide the top right and bottom left squares into
      nine smaller squares. We keep only the three squares along the
      diagonal, and discard (gray out) the off-diagonal squares. The middle
      square is halved by a horizontal line (here $V$ is constant). Repeat.}
    \label{fig:motivation:cantor-construction-sets}
  \end{figure}
  The Cantor function is the typical function where the fundamental
  theorem of integral and differential calculus fails: $V(x)$ is
  constant on all intervals contained in $\concake\setminus C_{\nicefrac
    13}$. Since $V$ is increasing, we can set $V'(x) := \limsup_{h\to 0}
  \frac 1h (V(x+h)-V(x))$, and this is the usual derivative whenever it
  exists. In particular, $V'(x)=0$ on $\concake\setminus C_{\nicefrac
    13}$. On the other hand, we have
  \begin{gather*}
    \int_0^x V'(t)\,dt = \int_{C_{\nicefrac 13}\cap [0,x)} V'(t)\,dt = 0 \neq V(x)-V(0)\quad\text{for any $x\in(0,1]$}.
  \end{gather*}
  This happens because the \enquote{length} of $C_{\nicefrac 13}$ is
  zero, i.e., we integrate over \enquote{too small a set} (no matter
  how big the integrand may be, it could even take the value
  $+\infty$!) so as to pick up any strictly positive value.
\end{example}

The above examples highlight some of the problems when evaluating
sets. A Cantor-like piece can only be evaluated if the valuation
function is not too simplistic. On the other hand, a Vitali set cannot
be evaluated at all if we request too many properties of a valuation
function, i.e., the domain $\mathfrak{P}(\concake)$ consisting of all
possible pieces of cake is, in general, too large.

Across the research field of cake-cutting (see, e.g., the textbooks
by~\citeR{bra-tay:b:fair-division-from-cake-cutting-to-dispute-resolution,
rob-web:b:cake-cutting-algorithms-be-fair-if-you-can}, and the book
chapters by~\citeR{pro:b:handbook-comsoc-cake-cutting-algorithms,
lin-rot:b:economics-and-computation-cake-cutting}), there exist
several different assumptions on the underlying model. Our goal is to
review thoroughly and comprehensively all the different models that
are currently applied in the literature. Furthermore, we study the
relationships between these models and formulate some related results.
It turns out that some of these models are problematic and should not
be used as they are formulated.  We highlight these models' problems
and provide specific examples showing why they are problematic.  Our
overall goal is to determine a model, which is as simple as possible,
yet powerful enough to cope with these problems and still compatible
with many of the currently used models.

Frequently, authors proposing cake-cutting protocols abstain from
making formal assumptions or from formalizing their model in detail.
For example, \citeA[p.~553]{bra-tay-zwi:c:moving-knife} write:
\begin{quote}
  \emph{\enquote{Many feel that the informality adds to the subject’s
      simplicity and charm, and we would concur.  But charm and simplicity
      are not the only factors determining the direction in which
      mathematics moves or should move. Our analysis in this paper raises
      several issues that may only admit a resolution via some negative
      results.  While such results may not require complete formalization of
      what is permissible, they do appear to require partial versions.  We
      will refer to such partial limitations as theses.}}
\end{quote}

It would thus be desirable to have some common consensus on which
models are useful for any given purpose, and which are not. If we
allow only a fixed number of cuts, splitting the cake $\concake$ into a
finite number of pieces of the type $\langle a,b\rangle \subseteq
\concake$, a naive approach is always possible: The valuation should be
additive and its domain contains unions of finitely many
intervals. If, on the other hand, there are potentially infinitely
many cuts~-- e.g., if the players play a game resulting in an
\emph{a priori} not fixed number of rounds
(such as the finite unbounded envy-free cake-cutting protocol of
\citeA{bra-tay:j:protocol})~--
the limiting case cannot any
longer be treated by a finitely additive valuation and a domain
containing only finite unions, see Example~\ref{ex:cantor-set}.

We propose to use ideas from measure theory, which provides the right
toolbox to tackle the issues described above. We will see that, at
least for the cake $\concake$, even the naive approach plus the
requirement that we can split every piece $\langle a,b\rangle$ by a
single cut into any proportion (in fact, a slightly weaker requirement
will do, cf.\ Definition~\ref{def:valuation-function}~$\mathrm{(D)}$),
automatically leads to the measure-theoretic point of view. That is to
say that in many natural situations the naive standpoint is
\enquote{practically safe} since its obvious shortcomings are
automatically \enquote{fixed by (measure) theory,} if one uses the
correct formulation.

\section{The Rules of the Game}
\label{chap:preliminaries}

Throughout this paper, $\concake$ denotes a standard cake, and the
power set $\mathfrak{P}(\concake) = \{S \mid S\subseteq \concake\}$
are all \textbf{possible} pieces of cake from a set-theoretic point of
view.  We define $\calP \subseteq \mathfrak{P}(\concake)$ as the set
of all \textbf{admissible} pieces of~$\concake$, i.e., those pieces
which (a)~can be allocated to some players via a cake-cutting
protocol, and (b)~can be evaluated by the players using their
valuation functions.  Sometimes it is necessary to consider an
\enquote{abstract} cake $\abscake$, with its possible and admissible
pieces $\mathfrak{P}(\abscake)$ and $\calP\subseteq
\mathfrak{P}(\abscake)$.  Some results for the standard cake
$\concake$ remain true for abstract cakes. For example, an abstract
cake $\abscake$ might be contained in the $n$-dimensional unit cube:
$X \subseteq [0,1]^n$.

\subsection{Dividing a Cake with Finitely Many Cuts}
\label{chap:preliminaries:finite}

We start by formulating requirements for $\calP$ regarding the
admissible pieces of cake.  The discussion in this section
applies both to
the standard cake $\concake$ and the abstract
cake $\abscake$.  Obviously, we want to be able to allocate
the complete cake $\abscake$ as well as an empty piece $\emptyset$ to a
player and therefore, $\abscake \in \calP$ and $\emptyset \in \calP$ must
hold. If $A \subseteq \abscake$ is already allocated to some player,
i.e., $A \in \calP$, then we want to be able to give the remainder of
the cake to another player; so for all $A \in \calP$, we demand that
the complement of~$A$, denoted by $\overline{A} = \abscake \setminus A$,
is in~$\calP$.
Furthermore, we want to be able to cut and combine pieces of cake; so
for all $A, B \in \calP$, we require $A \cup B \in \calP$.  Note that
$A \cap B = \overline{\overline{A} \cup \overline{B}}$ and $A
\setminus B = A \cap \overline{B}$, so our previously formulated
requirements also allow us to allocate the intersection of a finite
number of pieces of cake and to evaluate the difference of two pieces
of cake.

\begin{definition}\label{def:algebra}
Let $\abscake$ be a(n abstract) cake.  A family $\calA
\subseteq \mathfrak{P}(\abscake)$ is called an \textbf{algebra} over
$\abscake$ if $\emptyset \in \calA$ and for all $A,B\in\calA$ it holds
that $\overline{A}$ and $A \cup B \in \calA$.
\end{definition}

It is worth noting that only by the formulation of intuitive
requirements with respect to the set of all admissible pieces of cake,
we ended up with a well-studied, structured concept from measure
theory: an algebra.

\begin{example}
  If $\abscake = \concake$, then $\mathfrak{P}(\abscake)$ and
  $\{\emptyset,\abscake\}$ are algebras~-- in fact these are the largest
  possible and the smallest possible algebras over $\concake$. Another
  useful algebra is the family $\calI(\concake)$\label{pg:calI-cake} of
  all unions of finitely many intervals in $\concake$~-- and it is easy
  to check that $\calI(\concake)$ is the smallest algebra containing all
  closed (or all open or all half-open) intervals from $\concake$. While
  it is obvious that $\{\emptyset,\concake\}$ is useless for our
  purpose, as then only two possible pieces can be allocated, the
  complete cake and an empty piece, we might~-- at the other extreme~--
  also take $\mathfrak{P}(\concake)$ as the set for the admissible
  pieces of $\concake$. However, when choosing $\calP$, we must also
  ensure that meaningful valuation functions can exist for this set, and
  Example~\ref{ex-vitali} shows that for a rather natural valuation
  function~-- geometric length~-- $\mathfrak{P}(\concake)$ is too big.
\end{example}

Let us list the common requirements for the players' valuation
functions.  A \textbf{valuation function} $v$ shall assign to any
admissible piece of cake $A \in \calP$ some positive real number.  In
order to normalize the players' valuations and keep them comparable,
we map the positive real numbers onto $[0,1]$ continuously,
bijectively, and preserving the natural order.  Hence, we can further
limit the valuation function's range to $[0,1]$, i.e., we have $v
\colon \calP \to [0,1]$.  The next definition lists desirable
properties for a valuation function.

\begin{definition}\label{def:valuation-function}
  Let $\abscake$ be a(n abstract) cake and $\calA$ the algebra of
  admissible pieces. A \textbf{valuation function} is a function $v
  \colon \calA \to [0,1]$, which is normalized, i.e., $v(\emptyset)=0$
  and $v(\abscake) = 1$.
  Moreover, $v$ is called
  \begin{enumerate}
  \item[$\mathrm{(M)}$] \textbf{monotone} if for $A,B\in\calA$ with
    $A\subseteq B$, one has $v(A)\leq v(B)$;
  \item[$\mathrm{(A)}$] \textbf{additive} or \textbf{finitely
      additive} if for all $A,B \in \calA$ such that $A\cap B=\emptyset$,
    one has $v(A \cup B) = v(A) + v(B)$;
  \item[$(\Sigma)$] $\boldsymbol{\sigma}$-\textbf{additive} or
    \textbf{countably additive} if for any sequence $(A_n)_{n\in\N}$ of
    pieces in $\calA$ such that $A_i\cap A_j=\emptyset$ ($i\neq j$) and
    $\bigcup_{i \in \N } A_i \in \calA$, one has $v(\bigcup_{i \in \N} A_i
    ) = \sum_{i \in \N} v(A_i)$;
  \item[$\mathrm{(D)}$] \textbf{divisible} if for every $A \in \calA$
    and for every real number~$\alpha$, $0 \leq \alpha \leq 1$, there
    exists some $A_\alpha \in \calA$ with $A_\alpha \subseteq A$ such that
    $v(A_\alpha) = \alpha v(A)$.
  \end{enumerate}
\end{definition}

Clearly, $(\Sigma)$ implies $\mathrm{(A)}$~-- take $A_1=A$, $A_2 = B$,
and $A_i=\emptyset$ for $i\geq 3$~-- and $\mathrm{(A)}$ is equivalent
to the so-called \textbf{strong additivity}, defined as $v(A\cup B) =
v(A)+v(B) - v(A\cap B)$: Just observe that $A\cup B =
[A\setminus(A\cap B)] \cup [B\setminus(A\cap B)] \cup [A\cap B]$,
i.e., $A\cap B\neq\emptyset$ counts towards both $v(A)$ and $v(B)$ but
only once in $v(A\cup B)$, hence the correction $-v(A\cap B)$.
Finally, (strong) additivity implies monotonicity.

The assumption that $\calA$ is an algebra makes sure that we can
indeed perform all of the above manipulations with sets without ever
leaving $\calA$. Note, however, that $(\Sigma)$ and $\mathrm{(D)}$
require a certain richness assumption on $\calA$, which need not be
satisfied for an algebra; for example, a union of countably many
member sets need not be in the algebra. In other words: The properties
$(\Sigma)$ and $\mathrm{(D)}$ affect both $v$ and $\calA$.

\begin{remark}\label{rem:content}
  Let $\abscake$ be a(n abstract) cake and $\calA \subseteq
  \mathfrak{P}(\abscake)$ an algebra over~$\abscake$. Any additive
  valuation is a \textbf{finitely additive measure} with total mass
  $v(\abscake)=1$ (see, e.g.,~\citeR[Chapter~4]{schi:b:measures}).
\end{remark}

Requirement $\mathrm{(D)}$ not only demands more from $\calA$ but also
from~$v$. Specifically, $\mathrm{(D)}$ entails that any $N\in\calA$
which does not contain a nonempty and strictly smaller piece of
cake~-- this is an \textbf{atom}, i.e., an indivisible
piece of cake~-- must have zero valuation.

\begin{definition}\label{def-atom}
  Let $\calA$ be an algebra over a(n abstract) cake $\abscake$ and $v$ be a
  finitely additive valuation. A set $A\in\calA$ is an \textbf{atom} if
  $v(A)>0$ and every $B\subseteq A$, $B\in\calA$, satisfies $v(B)=\alpha
  v(A)$ with $\alpha=0$ or $\alpha=1$.
\end{definition}

Clearly, a valuation $v$ which enjoys property $\mathrm{(D)}$ cannot
have atoms.

\subsection{Dividing the Standard Cake}
\label{chap:preliminaries:standard-cake}

Let us briefly discuss the consequences of the notions introduced in
the previous section if $\abscake$ is the standard cake $\concake$.
If, in addition, $\calA$ contains all intervals of type $\langle
a,b\rangle$, then all singletons $\{a\} = [a,b]\setminus (a,b]$ are
in~$\calA$, and they are the only possible atoms. In this case,
$\mathrm{(D)}$ entails that $v$ does not charge single points:
$v(\{a\})=0$ for all $a\in \concake$. This is the proof of the following
lemma.

\begin{lemma}\label{lem:atom-free}
  Let $\concake$ be the standard cake and $\calA$ an algebra of admissible
  sets. Every additive valuation function $v \colon \calA \to [0,1]$
  that satisfies $\mathrm{(D)}$ is atom-free. In particular, if
  $\calA\supset\calI(\concake)$ contains all intervals, then $v(\{a\})=0$
  for all $a\in\concake$.
\end{lemma}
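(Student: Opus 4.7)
The plan is to give a short proof by contradiction for the atom-freeness, and then a direct argument for the statement about singletons, leaning on the hint already given in the paragraph preceding the lemma.

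For the first assertion, I would assume toward a contradiction that there exists an atom $A \in \calA$, so by Definition~\ref{def-atom} we have $v(A) > 0$ and every $B \in \calA$ with $B \subseteq A$ satisfies $v(B) \in \{0, v(A)\}$. Now I would apply property $\mathrm{(D)}$ with $\alpha = \tfrac{1}{2}$ to the set $A$: this yields some $A_{1/2} \in \calA$ with $A_{1/2} \subseteq A$ and $v(A_{1/2}) = \tfrac{1}{2} v(A)$. Because $v(A) > 0$, the value $\tfrac{1}{2} v(A)$ is neither $0$ nor $v(A)$, contradicting the atom property of $A$. Hence no atom can exist, which is what it means for $v$ to be atom-free.

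For the ``in particular'' claim, I would note that since $\calA$ is an algebra containing $\calI(\cake)$, every singleton $\{a\}$ belongs to $\calA$ — for instance, $\{a\} = [a,b] \setminus (a,b]$ for any $b > a$ (or $b=a$ giving $\{a\}=[a,a]$ directly) lies in $\calA$ as a Boolean combination of intervals. I would then observe that the only elements of $\calA$ contained in $\{a\}$ are $\emptyset$ and $\{a\}$ itself, so if $v(\{a\}) > 0$ then $\{a\}$ would automatically satisfy the atom condition from Definition~\ref{def-atom}. By the first part, this is impossible, and therefore $v(\{a\}) = 0$ for every $a \in \cake$.

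There is really no serious obstacle here: the argument is a one-line application of $\mathrm{(D)}$ with $\alpha=\tfrac12$, followed by the observation that singletons are trivially ``atom candidates'' once they lie in $\calA$. The only minor point worth being careful about is justifying membership of $\{a\}$ in $\calA$ from the hypothesis $\calA \supset \calI(\cake)$, which is immediate since closed intervals (in particular $[a,a]=\{a\}$) are in $\calI(\cake)$ by definition.
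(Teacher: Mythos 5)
Your proposal is correct and follows essentially the same route as the paper, whose proof is just the short paragraph preceding the lemma: property $\mathrm{(D)}$ immediately rules out atoms (your $\alpha=\tfrac12$ argument makes this explicit), and since singletons such as $\{a\}=[a,b]\setminus(a,b]$ lie in $\calA$ and could only fail to be atoms by having value zero, divisibility forces $v(\{a\})=0$. Your write-up merely spells out the details the paper leaves implicit, so there is nothing to add.
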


Quite often, we require valuation functions to satisfy
\textbf{continuity}, a property that is crucial for so-called
\emph{moving-knife} cake-cutting protocols to work.
\begin{definition}
  Let $v \colon \calA \to [0,1]$ be a finitely additive valuation
  function on the algebra $\calA = \calI(\concake)$ of finite unions of
  intervals from~$\concake$.
  \begin{enumerate}
  \item The function $x\mapsto F_v(x) := v([0,x])$, $x\in \concake$, is the
    \textbf{distribution function} of the valuation~$v$.
  \item The valuation $v$ is said to be \textbf{continuous} if
    $x\mapsto F_v(x)$ is continuous.
  \end{enumerate}
\end{definition}
Since $v$ is additive, $F_v \colon \concake \to [0,1]$ is positive,
monotonically increasing, and bounded by $F_v(1)=1$.  Note that a
continuous valuation function on $\calI(\concake)$ cannot have atoms, as
\begin{gather*}
  v(\{x\}) = v([0,x]\setminus[0,x))=F_{v}(x)-F_{v}(x-)=0,
  \ \text{where}\
  F_{v}(x-) = \lim_{y\uparrow x}F_{v}(y).
\end{gather*}
The continuity of $v$ can also be cast in the following way: For all
$a$ and $b$ with $0\leq a < b \leq 1$ satisfying $v([0,a]) = \alpha$
and $v([0,b]) = \beta$, and for every $\gamma \in [\alpha, \beta]$,
there exists some $c \in [a,b]$ such that $v([0,c]) = \gamma$.  This
explains the close connection between continuity and divisibility
of~$v$.  In fact, assuming divisibility $\mathrm{(D)}$ of $v$, it can
be shown that the distribution function is necessarily continuous. The
following proof of this statement is inspired by
\citeA[Example~3.4]{sch-sto:t:continuity-assumptions-in-cake-cutting}.

\begin{lemma}\label{contdf}
  Let $v$ be an additive valuation for the standard cake $\concake$, where
  $\calI(\concake)$ denotes the family of admissible pieces.  If $v$ is
  divisible, then the distribution function $F=F_v$ is a continuous
  function with $F(0)=0$.
\end{lemma}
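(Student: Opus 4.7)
The plan is to prove continuity of $F$ by contradiction, using divisibility to engineer a subset whose prescribed value cannot be attained in a neighbourhood of a hypothetical jump of $F$. First, $F(0) = v([0,0]) = v(\{0\}) = 0$ is immediate from Lemma~\ref{lem:atom-free}, which says that divisibility implies atom-freeness; the same atom-free property will also be used silently to evaluate any interval of type $\langle a, b\rangle$ as $F(b) - F(a)$.

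For the continuity itself, I would suppose toward a contradiction that $F$ has a jump at some $x_0 \in [0,1]$. Since $F$ is monotone, the jump is either to the right, $F(x_0+) - F(x_0) = \delta > 0$, or to the left; the two cases are symmetric, so I treat only the right jump. By the definition of the right limit, I can pick $y > x_0$ close enough to $x_0$ that $F(y) < F(x_0) + \tfrac{3}{2}\delta$. Writing $u := v((x_0, y]) = F(y) - F(x_0)$, this choice forces $\delta \leq u < \tfrac{3}{2}\delta$, and in particular the arithmetic inequalities $u - \delta < \tfrac{u}{2} < \delta$ hold.

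Divisibility applied to $A = (x_0, y]$ with $\alpha = \tfrac{1}{2}$ now must produce some $B \subseteq A$ in $\calI$ with $v(B) = \tfrac{u}{2}$, and I rule this out by a short case analysis on $c^* := \inf B$. If $c^* = x_0$, the leftmost interval of $B$ has the form $(x_0, b]$ or $(x_0, b)$ with $b > x_0$ (it must be open on the left because $x_0 \notin B$), so this interval alone already has $v$-value $F(b) - F(x_0) \geq F(x_0+) - F(x_0) = \delta$, forcing $v(B) \geq \delta > \tfrac{u}{2}$. If instead $c^* > x_0$, monotonicity of $F$ gives $F(c^*) \geq F(x_0+) = F(x_0)+\delta$, and $B \subseteq [c^*, y]$ together with atom-freeness yields $v(B) \leq F(y) - F(c^*) \leq u - \delta < \tfrac{u}{2}$. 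Either way $v(B) \neq \tfrac{u}{2}$, the desired contradiction.

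The main subtlety lies in the second case, which relies on the observation that the jump at $x_0$ propagates into a quantitative lower bound $F(c^*) \geq F(x_0) + \delta$ for every $c^* > x_0$. Combined with the first case, this squeezes every admissible $v(B)$ either above $\delta$ or below $u - \delta$, so the target value $\tfrac{u}{2}$~-- engineered to lie strictly between $u - \delta$ and $\delta$ by choosing $y$ so close to $x_0$ that $u < 2\delta$~-- becomes unreachable, and divisibility fails. The symmetric left-jump case is handled with $[y, x_0)$ in place of $(x_0, y]$ and $\sup B$ in place of $\inf B$.
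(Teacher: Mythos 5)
Your proof is correct and follows essentially the same route as the paper's: assume a jump of size $\delta$ at $x_0$, choose an adjacent interval of value $u<\tfrac32\delta$, and show via a case split that every admissible subpiece has value either at least $\delta$ or at most $u-\delta$, so the intermediate value demanded by divisibility is unreachable. The only cosmetic differences are that you treat the right-hand jump (the paper details the left-hand one) and organize the dichotomy via $\inf B$ rather than via whether the closure of the subpiece accumulates at the jump point; both hinge on the same use of atom-freeness from Lemma~\ref{lem:atom-free}.
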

\begin{proof}
  We have seen in Lemma~\ref{lem:atom-free} that a divisible additive
  valuation $v$ has no atoms, so $F(0)=v(\{0\})=0$. Since $F$ is
  monotone and bounded, the one-sided limits $F(t-) := \lim_{s\uparrow
    t}F(s)$ and $F(u+):=\lim_{s\downarrow u}F(s)$ exist for all $t\in
  (0,1]$ and $u\in [0,1)$.

  Assume that $F$ is not continuous. Then there exists some
  $t_{0}\in\concake$ such that $F(t_{0}-)<F(t_{0})$ or
  $F(t_{0}+)>F(t_{0})$.  If $F(t_{0})-F(t_{0}-)=\varepsilon>0$, then
  there exists some $t_{1}<t_{0}$ such that
  $F(t_{0})-F(t_{1})\leq\frac32\varepsilon$. Set $I:=(t_{1},t_{0}]$ and
  observe that $v(I) = F(t_0)-F(t_1) \in
  \left[\varepsilon,\frac32\varepsilon\right]$. Pick an arbitrary
  $J\in\calI(\concake)$ which is contained in $I$. Since $J$ is a finite
  union of intervals, $J$ differs from its closure $\bar{J}$
  by at most finitely many points; as $v(\{x\})=0$ for any
  $x\in\concake$, we have $v(J)=v(\bar{J})$.

  We distinguish
  two cases: If $t_{0}\in\bar{J}$ is not an
  isolated point, then $v(J)=v(\bar{J})\geq\varepsilon$.  If
  $t_{0}\not\in\bar{J}$ or if $t_{0}\in\bar{J}$ is an isolated point,
  then we have due to $v(\{t_{0}\})=0$ that
  \begin{align*}
    v(J)
    &= v(\bar{J}) \leq F(t_{0}-)-F(t_{1})
     = \big(F(t_{0})-F(t_{1})\big) - \big(F(t_{0})-F(t_{0}-)\big)\\
    &= v(I)-\varepsilon \leq \tfrac12\varepsilon.
  \end{align*}
  Hence, it is not possible to select a piece of cake
  $J\in\calI(\concake)$ with $J\subseteq I$ and $v(J)=\frac34\varepsilon\in
  \left[\frac12\cdot v(I),\frac34\cdot v(I)\right]$, which contradicts
  divisibility.

  If $F(t_{0}+)-F(t_{0})=\varepsilon>0$, a similar argument applies.
\end{proof}

Conversely, if the distribution function $F_v$ of a finitely additive
valuation $v$ defined on $\calI(X)$ is continuous with $F_v(0)=0$,
then it is easy to see that $v$ is divisible. Hence we get:

\begin{corollary}\label{cor-D-vs-sigma}
  A finitely additive valuation $v$ on $\calI(\concake)$ is divisible if,
  and only if, its distribution function $F_v$ is continuous with
  $F_v(0)=0$.  This is also equivalent to $v$ being atom-free.
\end{corollary}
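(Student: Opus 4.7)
The plan is to establish the three-way equivalence by closing a cycle: divisibility $\mathrm{(D)}$ implies $F_v$ continuous with $F_v(0)=0$; the latter implies $\mathrm{(D)}$; and both are equivalent to atom-freeness. The forward implication $\mathrm{(D)}\Rightarrow F_v$ continuous with $F_v(0)=0$ is exactly Lemma~\ref{contdf}, and $\mathrm{(D)}\Rightarrow$ atom-free is exactly Lemma~\ref{lem:atom-free}, so only the converses require new work.

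For the implication ``$F_v$ continuous with $F_v(0)=0\Rightarrow\mathrm{(D)}$'', I would argue by the intermediate value theorem. Given $A\in\calI(\cake)$ and $\alpha\in[0,1]$, I introduce the auxiliary function $g(x):=v(A\cap[0,x])$ on $[0,1]$. Since $\calI(\cake)$ is an algebra, each $A\cap[0,x]$ lies in $\calI(\cake)$, so $g$ is well-defined and monotone, with $g(0)=v(A\cap\{0\})\leq v(\{0\})=F_v(0)=0$ and $g(1)=v(A)$. Writing $A$ as a disjoint union of finitely many intervals $J_1,\ldots,J_n$ with endpoints $a_k<b_k$, the sum $g(x)=\sum_{k=1}^n v(J_k\cap[0,x])$ is, for each $k$, zero when $x\leq a_k$, equal to $v(J_k)$ when $x\geq b_k$, and agrees with $F_v(x)-F_v(a_k)$ on $[a_k,b_k]$ up to singleton adjustments. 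Because continuity of $F_v$ implies $v(\{a\})=0$ for every $a\in\cake$, those adjustments vanish, every summand is continuous, and so is $g$. The intermediate value theorem then yields some $c\in[0,1]$ with $g(c)=\alpha v(A)$, and $A_\alpha:=A\cap[0,c]\in\calI(\cake)$ witnesses divisibility.

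For the atom-free equivalence, the remaining direction ``atom-free $\Rightarrow F_v$ continuous with $F_v(0)=0$'' rests on the observation that $\{a\}\in\calI(\cake)$ for every $a\in\cake$: atom-freeness therefore rules out $v(\{a\})>0$ (otherwise $\{a\}$ itself would be an atom, its only $\calI(\cake)$-subsets being $\emptyset$ and $\{a\}$), yielding $v(\{a\})=0$ for all $a$, and hence both $F_v(0)=v(\{0\})=0$ and the continuity of $F_v$. The main obstacle is the continuity of $g$ at the endpoints $a_k,b_k$ of the $J_k$: the open vs.\ closed character of each endpoint could \emph{a priori} alter the value of $v(J_k\cap[0,x])$, but the fact that $v(\{a\})=0$ for every $a$ washes out this distinction and enables the clean identification with increments of $F_v$ on the relevant sub-intervals.
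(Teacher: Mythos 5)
The equivalence between divisibility and the continuity of $F_v$ with $F_v(0)=0$ is handled correctly: the forward direction is indeed Lemma~\ref{contdf}, and your intermediate-value argument is a sound way of filling in the converse, which the paper dismisses with ``it is easy to see.'' (A small simplification: $g(x):=v(A\cap[0,x])$ satisfies $0\leq g(x')-g(x)=v(A\cap(x,x'])\leq F_v(x')-F_v(x)$ for $x<x'$, so $g$ inherits continuity from $F_v$ directly, without decomposing $A$ into its constituent intervals.) Likewise, $\mathrm{(D)}\Rightarrow$ atom-free is Lemma~\ref{lem:atom-free}, as you say.

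The genuine gap is in the direction ``atom-free $\Rightarrow F_v$ continuous.'' From atom-freeness you correctly get $v(\{a\})=0$ for every $a$, and hence $F_v(0)=0$; but the further inference ``$v(\{a\})=0$ for all $a$, hence $F_v$ is continuous'' is a non sequitur for a \emph{finitely} additive valuation. The left jump is $F_v(x_0)-F_v(x_0-)=\lim_{y\uparrow x_0}v\bigl((y,x_0]\bigr)$, and without $\sigma$-additivity there is no continuity from below, so this limit can be strictly larger than $v(\{x_0\})$: mass may sit ``just to the left of $x_0$'' without charging any singleton. Concretely, define $w$ on $\calI(\cake)$ by $w(B)=1$ if $B\supseteq(\tfrac12-\delta,\tfrac12)$ for some $\delta>0$ and $w(B)=0$ otherwise (this is well defined and finitely additive on finite unions of intervals), and set $v=\tfrac12\lambda+\tfrac12 w$ with $\lambda$ Lebesgue measure. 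Then $v$ charges no singleton and in fact has no atoms at all in the sense of Definition~\ref{def-atom} (any $A$ with $v(A)>0$ has $\lambda(A)>0$ and therefore contains a short subinterval $B$ bounded away from $\tfrac12$ with $0<v(B)=\tfrac12\lambda(B)<v(A)$), yet $F_v$ jumps by $\tfrac12$ at $x=\tfrac12$ and $v$ is not divisible (no admissible piece has value exactly $\tfrac12$). So atom-freeness does not imply continuity of $F_v$ by your route --- nor, as this example shows, by any route under the stated hypotheses; the third clause of the corollary genuinely needs $\sigma$-additivity (under which $v([0,x_0))=F_v(x_0-)$ and the paper's computation $v(\{x_0\})=F_v(x_0)-F_v(x_0-)$ becomes valid), and the paper itself supplies no proof of this direction.
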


Corollary~\ref{cor-D-vs-sigma} establishes a one-to-one correspondence
between divisible valuations and monotonically increasing, continuous
functions on $\concake$ which are $0$ at the origin and $1$ at
$x=1$. This shows that the identity $x\mapsto x$ gives rise to a
valuation (it assigns every interval $\langle a,b\rangle$ its natural
length $b-a$) but also the Cantor function $V(x)$ from
Example~\ref{ex:cantor-function} can be viewed as a valuation
function.

We will see in the next section that every finitely additive,
divisible valuation can be extended to become and identified with a
unique $\sigma$-additive measure that is defined on the Borel
$\sigma$-algebra $\calB(\abscake)$; this is the smallest family of sets
that contains all intervals and that is stable under complements and
countable unions of its members. This enables us to evaluate sets in
$\calB(\abscake)$ that are not finite unions of intervals, such as the
Cantor set in Example~\ref{ex:cantor-set}.

\subsection{Measure Theory: The Art of Dividing a Cake by Countably
  Many Cuts}\label{chap:preliminaries:sigma}

In Sections~\ref{chap:preliminaries:finite}
and~\ref{chap:preliminaries:standard-cake}, we have focused on
finitely many cuts when dividing the cake. But we may easily come into
the situation where the number of cuts is not limited; not all
protocols in the cake-cutting literature are finite.\footnote{ For
  example, prior to the celebrated finite bounded envy-free cake-cutting
  protocol due to
  \citeA{azi-mac:c:discrete-bounded-envy-free-cake-cutting-protocol,azi-mac:j:bounded-envy-free-cake-cutting-algorithm},
  the cake-cutting protocol of \citeA{bra-tay:j:protocol} was the best
  protocol known to guarantee envy-freeness for any number of players.
  They argue that the allocation must become envy-free at some (unknown)
  finite stage, which is why their protocol is considered to be a
  \emph{finite unbounded} envy-free procedure only.  And yet, being
  open-ended, it is in some sense even an \emph{infinite} procedure that
  describes an infinite process.  Similarly, it is reasonable to
  conjecture that some moving-knife procedures can be converted to
  discrete procedures that require infinitely many cuts.
  \label{foo:infinite-protocols}} Thus we are led to consider unions
of countably many pieces and the valuation of such countable unions,
see also property $(\Sigma)$ in
Definition~\ref{def:valuation-function}.  To deal with such
situations, measure theory provides the right tools.

We will now introduce some basics from measure theory, which we need
in the subsequent discussion of the cake-cutting literature.  Our
standard references for measure theory are the monographs by
\citeA{schi:b:measures} and \citeA{schi:b:counterexamples}, where also
further background information can be found.

\begin{definition}\label{def:sigma-algebra}
  Let $\concake$ be a cake.  A subset $\calA\subseteq
  \mathfrak{P}(\concake)$ is called a
  \textbf{$\boldsymbol{\sigma}$-algebra} over $\concake$ if $\calA$ is an
  algebra over $\concake$ and, for all sequences $(A_n)_{n \in \N}$ with
  $A_n \in \calA$, the countable union $\bigcup_{n \in \N} A_n$ is in
  $\calA$, too.
\end{definition}

Every algebra in
$\concake$ containing finitely many sets is automatically a
$\sigma$-algebra. On the other hand, $\mathfrak{P}(\concake)$ is both an
algebra and a $\sigma$-algebra, whereas the family $\calI(\concake)$ is an
algebra, but not a $\sigma$-algebra: For instance, the Cantor dust
$C_p$ (cf.\ Example~\ref{ex:cantor-set}) is not in
$\calI(\concake)$. Recall that we defined $\calI(\concake)$ to be the
smallest algebra containing all (finite unions of) intervals in
$\concake$; thus it is natural to consider the smallest $\sigma$-algebra
containing all (finite unions of) intervals in $\concake$.

To see that this is well-defined, we need a bit more notation. Recall
that $\langle a,b\rangle$ stands for any (open, closed, or half-open)
interval of~$\concake$.  We denote by
\begin{gather*}
  \calQ(\concake) = \left\{\langle a,b\rangle \mid a,b\in \concake\right\}.
\end{gather*}
the \emph{family of all intervals within~$\concake$}.

Moreover, if $\calP\subseteq\mathfrak{P}(\concake)$ is any family, then
$\sigma(\calP)$ denotes the smallest $\sigma$-algebra containing
$\calP$. This can be a fairly complicated object and its existence is
not really obvious. To get an idea as to why $\sigma(\calP)$ makes
sense, we note that $\calP\subseteq\mathfrak{P}(\concake)$, that
$\mathfrak{P}(\concake)$ is a $\sigma$-algebra, and that the intersection
of any number of $\sigma$-algebras is still a $\sigma$-algebra.

The next lemma is a standard result from measure theory.
\begin{lemma}\label{lem:Borel-sigma-algebra}
  Let $\calP$ denote any of the four families of open intervals,
  closed intervals, left-open intervals, or right-open intervals within
  $\concake$. It holds that
  \begin{align*}
    \sigma(\calP) =  \sigma(\calQ(\concake)).
  \end{align*}
\end{lemma}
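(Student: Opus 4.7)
The plan is to prove the two inclusions $\sigma(\calP)\subseteq \sigma(\calQ(\cake))$ and $\sigma(\calQ(\cake))\subseteq\sigma(\calP)$ separately. The first is immediate from the minimality of $\sigma(\calQ(\cake))$: in each of the four cases, the family $\calP$ consists of intervals within $\cake$, so $\calP\subseteq\calQ(\cake)$, and therefore the smallest $\sigma$-algebra containing $\calP$ is contained in every $\sigma$-algebra containing $\calQ(\cake)$, in particular in $\sigma(\calQ(\cake))$.

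For the reverse inclusion, the same minimality principle reduces the task to showing $\calQ(\cake)\subseteq\sigma(\calP)$. I would treat the four choices of $\calP$ in turn and, in each case, exhibit explicit formulas expressing a generic interval of each of the remaining three types as a countable union, intersection, or complement of intervals from $\calP$. Typical identities are
\[
 [a,b] \;=\; \bigcap_{n\in\N} \bigl((a-\tfrac{1}{n},\, b+\tfrac{1}{n})\cap \cake\bigr)
 \qquad\text{and}\qquad
 (a,b) \;=\; \bigcup_{n\in\N} [a+\tfrac{1}{n},\, b-\tfrac{1}{n}],
\]
together with analogous formulas for the half-open types, such as $[a,b)=\bigcup_{n}[a,\, b-\tfrac{1}{n}]$ and $(a,b]=\bigcup_{n}[a+\tfrac{1}{n},\, b]$, as well as their \enquote{dualisations} obtained by taking complements within $\cake$. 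Running through each of the four generating families in this way shows $\calQ(\cake)\subseteq\sigma(\calP)$, hence $\sigma(\calQ(\cake))\subseteq\sigma(\calP)$.

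The step I expect to be the main obstacle is more bookkeeping than conceptual: one must verify that, in each displayed identity, the approximating intervals actually lie in the chosen family $\calP$ within $\cake$ (so, e.g., the endpoints $a-\tfrac{1}{n}$ and $b+\tfrac{1}{n}$ must be truncated to $0$ or $1$ once they leave $[0,1]$), and one must separately handle the boundary points $0$ and $1$ of $\cake$ by producing $\{0\}$ and $\{1\}$ (or intervals containing them) as countable limits of sets drawn from $\calP$. Once these truncations and boundary cases are settled, the countable operations recover every interval in $\calQ(\cake)$ and the proof is complete.
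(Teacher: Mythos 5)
The paper does not actually prove this lemma---it is invoked as \enquote{a standard result from measure theory}---so there is no in-paper argument to compare against. Your two-inclusion strategy (one direction by minimality of $\sigma(\calP)$, the other by expressing every interval in $\calQ(\cake)$ through countable operations on generators) is the standard textbook route, and your identities work without incident for the closed, left-open, and right-open families: e.g.\ $\{0\}=\cake\setminus(0,1]$ and $\{1\}=\cake\setminus[0,1)$ give you the boundary singletons there.

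However, the step you dismiss as bookkeeping hides a genuine obstruction in the open-interval case if \enquote{open interval within $[0,1]$} is read literally as a set of the form $(a,b)$ with $0\le a\le b\le 1$. No such interval contains $0$ or $1$, and the family $\mathcal{D}=\{A\subseteq\cake \mid 0\in A \Leftrightarrow 1\in A\}$ is a $\sigma$-algebra (it is stable under complements and countable unions) that contains every such $(a,b)$. Hence $\sigma(\calP)\subseteq\mathcal{D}$, while $[0,\nicefrac{1}{2}]\notin\mathcal{D}$; so the closed interval $[0,\nicefrac{1}{2}]$ is \emph{not} in $\sigma(\calP)$, and the boundary points cannot be \enquote{produced as countable limits of sets drawn from $\calP$} as you assert. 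Your own formula $[a,b]=\bigcap_{n}\bigl((a-\tfrac1n,b+\tfrac1n)\cap\cake\bigr)$ silently repairs this, because $(a-\tfrac1n,b+\tfrac1n)\cap\cake$ is a \emph{relatively} open interval such as $[0,b+\tfrac1n)$, which is not of the form $(c,d)$; whereas the truncation you propose, replacing the left endpoint by $0$ to get $(0,b+\tfrac1n)$, yields $\bigcap_n(0,b+\tfrac1n)=(0,b]$ and loses the point $0$. So your proof (and indeed the lemma itself) is correct only under the reading that the open intervals within $\cake$ are the traces on $[0,1]$ of open intervals of $\R$, i.e.\ the relatively open intervals; this should be stated explicitly rather than claiming the boundary cases can always be settled. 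The remaining three families, and the lemma's role in the paper (where $\calB(\cake)$ is defined via closed intervals), are unaffected.
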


The fact that $\sigma(\calQ(\concake))$ coincides with the
$\sigma$-algebra generated by all closed intervals in $\concake$ can
be used to generalize Lemma~\ref{lem:Borel-sigma-algebra} to abstract
cakes, which carry a topology, hence a family of open and of closed
sets. The thus generated \enquote{topological} $\sigma$-algebra plays
a special role and has a special name.

\begin{definition}\label{def:Borel-sigma-algebra}
  We denote by $\calB(\concake)$ the smallest $\sigma$-algebra on $\concake$
  containing all closed intervals from $\concake$ and call it the
  \textbf{Borel} or \textbf{topological $\boldsymbol{\sigma}$-algebra}
  over~$\concake$.
\end{definition}

The following definition is also well-known.  We state it only for the
standard cake, but it is clear how to extend it to
abstract cakes.

\begin{definition}\label{def:measure}
  Let $\concake$ be a cake and $\calA$ a $\sigma$-algebra on $\concake$. A
  (positive) \textbf{measure} $\mu$ on $\concake$ is a map $\mu \colon
  \calA \to [0,\infty]$ satisfying that $\mu(\emptyset) = 0$ and $\mu$
  is $\sigma$-additive.
\end{definition}

It is useful to see a measure $\mu$ as a function defined on the
sets. If the set-function is additive, then $\sigma$-additivity is, in
fact, a continuity requirement on $\mu$, as it allows to interchange
the limiting process in the infinite union $\bigcup_{i\in\N} A_i$ of
pairwise disjoint sets with a limiting process in the sum. To wit:
\begin{gather*}
    \mu \left( \bigcup_{n \in \N} A_n \right)
    = \lim_{N\to\infty} \mu \left( \bigcup_{n=1}^N A_n \right)
    = \lim_{N\to\infty}  \sum_{n=1}^N\mu \left( A_n \right)
    = \sum_{n\in\N} \mu \left( A_n \right);
\end{gather*}
since all terms are positive,  the value of the sum is well-defined, i.e., it is
either convergent in $[0,\infty)$ or improperly convergent yielding $+\infty$.
Equivalently, we can state $\sigma$-additivity as $B_1\subset
B_2\subset B_3\subset \cdots \uparrow B = \bigcup_{n\in\N} B_n$, then
$\mu(B_n)\uparrow \mu(B)$ (for any measure $\mu$) or as $C_1\supset
C_2\supset C_3\supset\cdots \downarrow C = \bigcap_{n\in\N} C_n$, then
$\mu(C_n)\downarrow \mu(C)$ (for finite measures $\mu$).

Sometimes (and a bit provocatively) it is claimed that there are
essentially only two measures on $\concake$ (or on $\RR$ or $\RR^n$):
\textbf{Lebesgue measure} $A\mapsto \lambda(A)$ and \textbf{Dirac
measure} $A\mapsto \delta_x(A)$, where $x\in\concake$ is a fixed
point. Let us briefly discuss these two extremes and explain as to why
the claim is incorrect but still sensible.

\paragraph*{Dirac Measure.}
Let $a\in\concake$ be a fixed point and set $A\mapsto \delta_a(A) = 1$ or
$=0$ according to $a\in A$ or $a\notin A$, respectively. This
definition works for any $A\subseteq\concake$, and it is easy to see that
this set-function is indeed a measure (in the sense of
Definition~\ref{def:measure} on the $\sigma$-algebra
$\calA=\mathfrak{P}(\concake)$~-- or any smaller $\sigma$-algebra over
$\concake$.

We call $\{a\}$ the \textbf{support} of $\delta_a$ since, by
definition, $\delta_a$ charges only sets such that $\{a\}\subseteq
A$. If we compare Dirac measure with Lebesgue's measure, the problem
is that the support of $\delta_a$ is a degenerate interval $\{a\} =
[a,a]$ of length zero, see below.

\paragraph*{Lebesgue Measure.}
The idea behind Lebesgue measure is to have a set-function
$A\mapsto\lambda(A)$ in $\concake$ (or in $\RR$ or $\RR^n$) with all
properties of the familiar volume from geometry; in particular, we
want a volume that is additive and invariant under shifts and
rotations. Thus it is natural to define for a simple set $Q$ like an
interval $Q=(a,b]\subset\concake$ (or an $n$-dimensional \enquote{cube}
$Q =\mathop{\times}\limits_{i=1}^n (a_i,b_i]$)
\begin{gather*}
  \lambda(Q) = b-a \\
  \quad\left(\text{respectively,}\quad
    \lambda(Q) = \prod_{i=1}^n (b_i-a_i) =
    \text{length}\times\text{width}\times\text{height}\times\cdots
  \right).
\end{gather*}
Invariance under shifts together with the $\sigma$-additivity
$(\Sigma)$ allow us to exhaust (\enquote{triangulate}) more
complicated shapes like a circle with countably many disjoint sets
$(Q_n)_{n\in\N}$ such that with $A=\bigcup_{n\in\N}Q_n$, we have
$\lambda(A)=\sum_{n\in\N}\lambda(Q_n)$.  The restriction to countable
unions is natural, as we exhaust a given shape by nontrivial sets
$Q_n$, having nonempty interior: Each of them contains a rational
point $q\in\QQ^n$; hence, there are at most countably many
nonoverlapping~$Q_n$.

There are immediate questions with this approach: Which types of sets
can be \enquote{measured}? Is the procedure unique? Is the process of
measuring more complicated sets constructive?  At this point we
encounter a problem: General sets $A\subseteq\RR^n$ are way too
complicated to get a well-defined and unique extension of $\lambda$
from the rectangles to $\mathfrak{P}(\RR^n)$.  In dimension $n=1$ and
for the standard cake $\concake$, the Cantor sets $C_p$ from
Example~\ref{ex:cantor-set} were already challenging, but the Vitali
set from Example~\ref{ex-vitali} shows that the cocktail of shift
invariance and $\sigma$-additivity becomes toxic.

The way out is the notion of measurable sets and Carath\'eodory's
extension theorem (stated as Theorem~\ref{thm:cara} further down).
This works as follows: In view of the $\sigma$-additivity property
of~$\lambda$, it makes sense to consider the $\sigma$-algebra
$\calA\subseteq\mathfrak{P}(\RR^n)$ which contains the intervals
(respectively, cubes).  Thus we naturally arrive at the notion of the
Borel $\sigma$-algebra as the canonical domain of Lebesgue
measure. Unfortunately, there are so many Borel sets that we cannot
build them constructively from rectangles~-- we would need transfinite
induction for this~-- and this is one of the reasons why cutting a
cake is not always a piece of cake.

The question of whether \emph{every} set $A\subseteq\RR^n$ has a
unique geometric volume (in the above sense) is
dimension-dependent. If $n=1$ or $n=2$, we can extend the notion of
length and area to all sets, but not in a unique way. In dimension $3$
and higher, we'll end up with contradictory statements (such as the
\emph{Banach--Tarski paradox}; see,
e.g.,~\citeR{wag:b:banach-tarski-paradox}) if we try to have a
finitely additive geometric volume for all sets. This conundrum can be
resolved by looking at the Borel sets or the Lebesgue sets~-- these
are the Borel sets enriched by all subsets of Borel sets with Lebesgue
measure zero.

\paragraph*{General Measures.}
Let us return to the assertion that $\lambda$ and $\delta_a$ are
\enquote{essentially the only measures} on~$\RR^n$. To keep things
simple, we discuss here only the standard cake $\concake$.

Lebesgue's decomposition theorem shows that all $\sigma$-additive
measures $\mu$ on $\concake$ with the Borel $\sigma$-algebra
$\calB(\concake)$ are of the form $\mu = \mu^{\mathrm{ac}} +
\mu^{\mathrm{sc}} + \mu^{\mathrm{d}}$ where \enquote{ac,}
\enquote{sc,} and \enquote{d} stand for absolutely continuous,
singular continuous, and discontinuous. This is best explained by
looking at the distribution function $F(x)=F_\mu(x)$. Since $x\mapsto
F(x)$ is increasing, it is either continuous or discontinuous (with at
most countably many discontinuities), accounting for the parts
($\mu^{\mathrm{ac}}, \mu^{\mathrm{sc}}$) and $\mu^{\mathrm{d}}$,
respectively. At the points where $F$ is continuous, we have again two
possibilities: $F$ is either differentiable ($F'(x) = f(x)$) or it
isn't, yielding \enquote{ac} vs.\ \enquote{sc.} From Lebesgue's
differentiation theorem it is known that the points with \enquote{sc}
or \enquote{d} must have Lebesgue measure zero. Thus, we finally
arrive at the decomposition
\begin{gather}\label{ac-sc-d}
    \mu(dx) = f(x)\,dx + \mu^{sc}(dx) + \sum_{i}
    (F(x_i)-F(x_i-))\,\delta_{x_i}(dx),
\end{gather}
where $x_1,x_2,\dots$ are the at most countably many discontinuities
(jump points) of $F$ and $f(x) = \frac{d}{dx}F(x)$.

Here are four typical \textbf{examples} for valuations corresponding
to these cases.
\begin{itemize}
\item Purely \textbf{ac}: $F^{\mathrm{ac}}(x) := x$ is absolutely
  continuous since $\frac{d}{dx}F^{\mathrm{ac}}(x) = 1$ exists and
  $F^{\mathrm{ac}}(x) = \int_0^x 1\,dt$. This $F^{\mathrm{ac}}$
  corresponds to Lebesgue measure. In general, an absolutely
  continuous $F^{\mathrm{ac}}(x)$ is always of the form
  $F^{\mathrm{ac}}(x) - F^{\mathrm{ac}}(0) = \int_0^x f(t)\,dt$ and
  $f(t) = \frac{d}{dt}F^{\mathrm{ac}}(t)$.
\item Purely \textbf{sc}: The Cantor function
  $F^{\mathrm{sc}}(x):=V(x)$ from Example~\ref{ex:cantor-function} is
  continuous, but it is not absolutely continuous. $V'(x)$ exists (in a
  classical sense) only in the points $\concake\setminus C_{\nicefrac
    13}$ and $V(x) \neq \int_0^x V'(t)\,dt$. The corresponding valuation
  is nevertheless of the form $v((a,b]) = V(b)-V(a)$, but it cannot be
  represented in the form $\int_a^b f(t)\,dt$ for any function~$f$.
\item Purely \textbf{d}: Any increasing step-function with jumps of
  size $\Delta_i > 0$, $i=1,2,\dots$, at the points $x_i\in\concake$
  corresponds to the discontinuous case: We have atoms exactly at the
  points $x_i$ where $F^{\mathrm{d}}(x)$ is discontinuous (i.e.,
  jumps). The general form of such functions is $F^{\mathrm{d}}(x) =
  \sum_{i=1}^\infty \Delta_i \mathbf{1}_{[x_i,1]}(x)$ where
  $\mathbf{1}_{[x_i,1]}(x)$ is the indicator function (taking the values
  $1$ and $0$ according to $x\in[x_i,1]$ or $x\notin [x_i,1]$,
  respectively) and $\sum_{i=1}^\infty \Delta_i = 1$.
\item Mixed \textbf{ac+cs+d}: Let $p_{\mathrm{ac}}, p_{\mathrm{sc}},
  p_{\mathrm{d}}\in [0,1]$ be such that $p_{\mathrm{ac}}+
  p_{\mathrm{sc}}+ p_{\mathrm{d}} = 1$ and let $F^{\mathrm{ac}},
  F^{\mathrm{sc}}, F^{\mathrm{d}}$ be as in the previous examples.
  Then the convex combination $F(x) = p_{\mathrm{ac}}
  F^{\mathrm{ac}}(x) + p_{\mathrm{sc}}F^{\mathrm{sc}}(x)+
  p_{\mathrm{d}}F^{\mathrm{d}}(x)$ corresponds to a valuation which
  combines all three types of (dis-)continuity properties.
\end{itemize}

Let us close this section with the central result on the extension of
valuations defined on an algebra $\calA$ to measures on the
$\sigma$-algebra $\sigma(\calA)$ generated by~$\calA$.  We state it
only for the standard cake; the formulation for more abstract cakes is
obvious.

\begin{theorem}[Carath\'eodory's extension theorem]\label{thm:cara}
 Let $\calA$ be the algebra of admissible pieces of the cake $\concake$ and
$v:\calA\to[0,1]$ be a valuation  such that $v(\emptyset)=0$.
If $v$ is additive and $\sigma$-additive
  relative to $\calA$, i.e., $v$ satisfies ${(\Sigma)}$, then there is a
  unique extension of $v$, defined on $\sigma(\calA)$, which is a
  $\sigma$-additive measure on $\sigma(\calA)$.
\end{theorem}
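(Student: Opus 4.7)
The plan is to follow the classical Carathéodory construction: first extend $v$ to an outer measure $v^*$ on the full power set $\mathfrak{P}(\cake)$, then carve out the class of Carathéodory-measurable sets, verify that this class is a $\sigma$-algebra containing $\calA$, and observe that $v^*$ restricts there to a $\sigma$-additive measure extending $v$. Uniqueness is handled separately by a Dynkin-class argument, which works cleanly because $v(\cake)=1$ is finite.

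Concretely, I would first define
\[
v^*(E) = \inf\Bigl\{\sum_{n\in\N} v(A_n) : A_n\in\calA,\ E\subseteq \bigcup_{n\in\N}A_n\Bigr\}\quad\text{for }E\subseteq\cake,
\]
and check by the standard $\varepsilon\cdot 2^{-n}$ arguments that $v^*(\emptyset)=0$, $v^*$ is monotone, and $v^*$ is countably subadditive. The first nontrivial step is $v^*|_\calA=v$: one direction uses the cover $A_1=A$, $A_n=\emptyset$ for $n\ge 2$; the reverse exactly uses $(\Sigma)$, since for any cover $A\subseteq\bigcup_n A_n$ in $\calA$, the pairwise disjoint sets $B_n := A\cap(A_n\setminus\bigcup_{k<n}A_k)\in\calA$ have union $A$, whence $v(A)=\sum_n v(B_n)\le \sum_n v(A_n)$ by monotonicity and $\sigma$-additivity. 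Next I introduce
\[
\calM = \Bigl\{E\subseteq\cake : v^*(T) = v^*(T\cap E)+v^*(T\cap\overline{E})\text{ for all }T\subseteq\cake\Bigr\},
\]
and invoke the standard fact that $\calM$ is a $\sigma$-algebra on which $v^*$ is a measure (closure under complements is immediate from the symmetric defining identity, closure under finite, then countable disjoint unions follows by iterated application of the splitting relation to the test set $T$, and $\sigma$-additivity of $v^*$ on $\calM$ drops out along the way). The key inclusion $\calA\subseteq\calM$ goes as follows: given $A\in\calA$ and a near-optimal cover $T\subseteq\bigcup_n B_n$ with $B_n\in\calA$, the decompositions $B_n=(B_n\cap A)\sqcup(B_n\cap\overline{A})$ stay inside the algebra, so finite additivity of $v$ yields $v^*(T\cap A)+v^*(T\cap\overline{A})\le\sum_n v(B_n\cap A)+\sum_n v(B_n\cap\overline{A})=\sum_n v(B_n)\le v^*(T)+\varepsilon$; the converse inequality is subadditivity. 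Hence $\sigma(\calA)\subseteq\calM$ and $\mu:=v^*|_{\sigma(\calA)}$ is a $\sigma$-additive extension of $v$.

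For uniqueness, let $\mu_1,\mu_2$ be $\sigma$-additive measures on $\sigma(\calA)$ agreeing with $v$ on $\calA$. Because $v(\cake)=1<\infty$, the family $\calD=\{E\in\sigma(\calA) : \mu_1(E)=\mu_2(E)\}$ is a Dynkin system: it contains $\cake$, finiteness permits proper differences, and $\sigma$-additivity handles monotone countable unions. Since $\calA$ is stable under finite intersections, Dynkin's $\pi$-$\lambda$ theorem upgrades $\calA\subseteq\calD$ to $\sigma(\calA)\subseteq\calD$, giving $\mu_1=\mu_2$. I expect the main obstacle to be the measurability inclusion $\calA\subseteq\calM$: the defining identity of $\calM$ quantifies over \emph{all} test sets $T\subseteq\cake$, so one must approximate $T$ by countable $\calA$-covers while simultaneously keeping the algebraic decomposition across $A$ and $\overline{A}$ controlled inside $\calA$. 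Everything else is routine bookkeeping once the outer measure and the normalization $v(\cake)=1$ are in place.
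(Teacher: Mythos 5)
Your proposal is correct and complete: it is the standard Carath\'eodory construction (outer measure from countable $\calA$-covers, the splitting condition defining $\calM$, the inclusion $\calA\subseteq\calM$, and uniqueness via Dynkin's $\pi$-$\lambda$ theorem using $v(\cake)=1<\infty$), and each step you sketch, including the use of $(\Sigma)$ to show $v^*|_{\calA}=v$ via the disjointification $B_n=A\cap(A_n\setminus\bigcup_{k<n}A_k)$, is sound. The paper itself gives no proof of this theorem --- it states it as a classical result and defers to the standard measure-theory references --- so your argument is exactly the canonical proof that those references supply.
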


\subsection{Abstract Cakes}
\label{subsec:abstract}

Let us briefly discuss more general cakes $\abscake$ than
$\concake$. In this section, $\abscake\neq\emptyset$ will be a general
set, $\calA$ an algebra of admissible pieces.  The notion of
$\sigma$-algebra is, \emph{mutatis mutandis}, the same as in the case
of the standard cake (Definition~\ref{def:sigma-algebra}) and we
denote by $\sigma(\calA)$ the smallest $\sigma$-algebra that contains
the algebra~$\calA$.  The definition and the properties of a valuation
$v\colon \calA \to [0,1]$ (cf.\
Definition~\ref{def:valuation-function}) still work in this general
setting, but since $\abscake$ is abstract, there may not be (an
equivalent of) a distribution function; this means that the connection
between divisibility and $\sigma$-additivity, cf.\ Lemma~\ref{contdf}
and Corollary~\ref{cor-D-vs-sigma}, might fail in an abstract setting.

We begin with a new definition of $\mathrm{(D)}$ for finitely additive
valuations on abstract cakes.

\begin{definition}\label{def-DD}
  A finitely additive valuation $v$ on an abstract cake $\abscake$ and an
  algebra of admissible pieces $\calA$ has the property $\mathrm{(DD)}$
  if for every $A\in\calA$ and $\alpha\in(0,1)$, there is an increasing
  sequence of sets $B^1_\alpha \subset B^2_\alpha \subset B^3_\alpha
  \subset \cdots$, $B^n_\alpha \in\calA$, such that $B^n_\alpha\subset
  A$ and $\sup_{n\in\N} v(B^n_\alpha) = \alpha v(A)$.
\end{definition}

Property $\mathrm{(DD)}$ essentially says that for every value $\alpha
v(A)\in [0,1]$ we can find an admissible piece of cake
$B^n_\alpha\subset A$ whose valuation $v(B^n_\alpha)$ is close to
$\alpha v(A)$.  The limiting piece $\bigcup_n B^n_\alpha$, which
should produce the value $\alpha v(A)$ exactly, may not be admissible
if we are restricted to finitely many cuts.

If $v$ is a $\sigma$-additive valuation and $\calA$ a
$\sigma$-algebra, then $B_\alpha := \bigcup_{n\in\N} B_\alpha^n$ is
again in $\calA$, and, because of $\sigma$-additivity, we see that
$v(B_\alpha) = \sup_{n\in\N} v(B_\alpha^n)$. Thus the properties
$\mathrm{(D)}$ and $\mathrm{(DD)}$ are indeed equivalent for
$\sigma$-additive valuations (or, in view of
Corollary~\ref{cor-D-vs-sigma}, for finitely additive valuations on
the standard cake $\concake$ and $\calA \supset \calI(\concake)$).

We will also need the opposite of the property~$\mathrm{(DD)}$; to
this end, recall Definition~\ref{def-atom} of an atom.  If $A$ and $B$
are atoms, then we have either $v(A\cap B)=0$ or $v(A\cap
B)=v(A)=v(B)>0$; in the latter case, if $v(A\cap B)>0$, we call the
atoms \emph{equivalent}. If $A$ and $B$ are nonequivalent, then $A$
and $B\setminus A$ are still nonequivalent and disjoint. Iterating
this procedure, we can always assume that countably many nonequivalent
atoms $(A_n)_{n\in\N}$ are disjoint: Just replace the atoms by $A_1,
A_2\setminus A_1,\dots, A_{n+1}\setminus\bigcup_{i=1}^n A_i, \dots$.

Since $v(\abscake)=1$, a finitely additive valuation $v$ can have at most $n$
nonequivalent atoms such that $v(A)\geq \frac 1n$, and so there are at
most countably many atoms. Comparing Definition~\ref{def-DD} which
defines property~$\mathrm{(DD)}$ with Definition~\ref{def-atom} of an
atom, it is clear that $\mathrm{(DD)}$ implies that $v$ has no atoms.
We will see in Theorem~\ref{theo-conti} that the converse implication
holds as well.

\begin{definition}\label{def-slice}
  Let $v$ be a finitely additive valuation on the algebra $\calA$ over
  a(n abstract) cake~$\abscake$. The valuation $v$ is \textbf{sliceable}
  if for any $\varepsilon>0$, there are finitely many disjoint sets
  $B_i\in\calA$, $i=1,\dots, n$, $n=n(\varepsilon)$, such that $0<
  v(B_i)\leq\varepsilon$ and $X=B_1\cup\dots\cup B_n$.

  A set $B\in\calA$ is \textbf{$\boldsymbol{v}$-sliceable} if the
  set-function $A\mapsto v(A\cap B)$ is sliceable.
\end{definition}

We will now see that a sliceable finitely additive valuation
enjoys property~$\mathrm{(DD)}$, and \emph{vice versa}, i.e.,
sliceability, atom-freeness, and property~$\mathrm{(DD)}$ are pairwise
equivalent for finitely additive valuations.

\begin{theorem}\label{theo-conti}
  Let $v$ be a finitely additive valuation on an algebra $\calA$ over
  a(n abstract) cake~$\abscake$. The conditions $\mathrm{(DD)}$, \enquote{$v$
    is sliceable,} and \enquote{$v$ has no atoms} are pairwise equivalent.
\end{theorem}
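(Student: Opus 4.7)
The plan is to close the cycle of implications \enquote{sliceable} $\Rightarrow$ \enquote{atom-free} $\Rightarrow \mathrm{(DD)} \Rightarrow$ \enquote{sliceable}; combined with the observation made in the paragraph preceding the theorem that $\mathrm{(DD)}$ already implies atom-freeness, this will yield the claimed three-way equivalence.

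For \enquote{sliceable} $\Rightarrow$ \enquote{atom-free}, I will argue by contradiction. If some $A \in \calA$ were an atom with $v(A) > 0$, sliceability applied with any $\varepsilon < v(A)$ produces $X = B_1 \cup \cdots \cup B_n$ with $v(B_i) \leq \varepsilon$. The intersections $A \cap B_i \in \calA$ then partition $A$, the atom property forces $v(A \cap B_i) \in \{0, v(A)\}$, and the estimate $v(A \cap B_i) \leq v(B_i) < v(A)$ rules out the value $v(A)$. Consequently every $v(A \cap B_i) = 0$, and finite additivity yields the contradiction $v(A) = 0$.

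For $\mathrm{(DD)} \Rightarrow$ \enquote{sliceable}, I will iterate a halving procedure. Applied to a piece $P$ with $\alpha = 1/2$, property $\mathrm{(DD)}$ supplies a sub-piece whose valuation lies arbitrarily close to $v(P)/2$ from below, and therefore splits $P$ into two pieces each of valuation at most $v(P)/2 + \delta$ for any preselected $\delta > 0$. Starting from $X$ and recursing $k = \lceil \log_2(2/\varepsilon) \rceil$ levels with a fixed per-split error $\delta = \varepsilon/4$, the geometric-series bound $a_k \leq 2^{-k} + 2\delta$ for the recurrence $a_{k+1} = a_k/2 + \delta$ shows that every one of the $2^k$ terminal pieces has valuation at most $\varepsilon$.

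The hard direction is \enquote{atom-free} $\Rightarrow \mathrm{(DD)}$. Fix $A \in \calA$ and $\alpha \in (0,1)$; my plan is a greedy construction. Set $B^0_\alpha = \emptyset$ and, given $B^{n-1}_\alpha \subset A$ with $v(B^{n-1}_\alpha) < \alpha v(A)$, adjoin a subset $C \subseteq A \setminus B^{n-1}_\alpha$ of valuation at most the current gap $\alpha v(A) - v(B^{n-1}_\alpha)$. Atom-freeness supplies arbitrarily small positive pieces inside any set of positive valuation via a bisection argument (parallel to the one used in the proof of \cref{lem:atom-free}, but requiring only the abstract fact that a non-atom admits a proper sub-piece of strictly smaller positive valuation), so the recursion is always well-defined. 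The main difficulty will be to guarantee $\sup_n v(B^n_\alpha) = \alpha v(A)$ rather than some strictly smaller limit; this reduces to establishing that the range $\{v(C) : C \in \calA,\, C \subseteq A \setminus B^{n-1}_\alpha\}$ is dense in $[0, v(A \setminus B^{n-1}_\alpha)]$, which is the Sobczyk--Hammer intermediate-value property for atom-free finitely additive measures. My preferred route to this density is via Stone duality: the algebra $\calA$ embeds as the clopen algebra of a compact Hausdorff Stone space, $v$ extends uniquely to a $\sigma$-additive atom-free Radon measure $\mu$ there (for instance via the Riesz representation theorem), Sierpi\'nski's theorem then furnishes Borel sets of every value in $[0, v(X)]$, and outer regularity in a totally disconnected space approximates each such Borel set, up to arbitrarily small symmetric-difference $\mu$-measure, by a clopen set, i.e.\ by an element of $\calA$. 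Pulling this back, the greedy step closes at least a fixed fraction of the remaining gap at every stage, forcing $v(B^n_\alpha) \uparrow \alpha v(A)$ as required.
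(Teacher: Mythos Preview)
Your cycle runs opposite to the paper's: you prove sliceable $\Rightarrow$ atom-free and $\mathrm{(DD)}\Rightarrow$ sliceable as the easy steps, leaving atom-free $\Rightarrow\mathrm{(DD)}$ as the hard one, whereas the paper does atom-free $\Rightarrow$ sliceable $\Rightarrow\mathrm{(DD)}$. Your two short implications are correct, and your halving argument for $\mathrm{(DD)}\Rightarrow$ sliceable is a clean alternative to the paper's construction. The gap is in your Stone-duality route: you assert that the Radon extension $\mu$ on the Stone space is atom-free, but this does \emph{not} follow from $v$ having no atoms in the sense of Definition~\ref{def-atom}. Let $\calA$ be the clopen algebra of the Cantor space $\{0,1\}^{\N}$, let $\nu$ be the coin-flipping measure, fix any $x_0$ and put $v(A)=\tfrac34\,\mathbf{1}_{[x_0\in A]}+\tfrac14\,\nu(A)$. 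Every nonempty clopen $A$ contains a nonempty clopen sub-cylinder $B$ avoiding $x_0$, and then $0<v(B)=\tfrac14\nu(B)<v(A)$; hence $v$ has no atoms. Yet $v$ never takes values in $(\tfrac14,\tfrac34)$, so $\mathrm{(DD)}$ fails for $\alpha=\tfrac12$, and the Radon extension $\mu=\tfrac34\delta_{x_0}+\tfrac14\nu$ visibly \emph{does} have an atom at $x_0$. So your Sierpi\'nski step is unavailable, and the Sobczyk--Hammer intermediate-value property you invoke is a theorem about \emph{strongly continuous} (i.e.\ sliceable) finitely additive measures, not about measures that are merely atom-free in the sense of Definition~\ref{def-atom}.

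The same example shows that the implication the paper attacks directly, atom-free $\Rightarrow$ sliceable, is also false for finitely additive valuations: in any finite clopen partition of $\{0,1\}^{\N}$ the block containing $x_0$ has $v$-value at least $\tfrac34$. Correspondingly, the line in the paper's Step~2 reading ``In particular, $\lim_{n\to\infty} v\bigl(\cake\setminus\bigcup_{i=1}^n B_i\bigr)=0$'' is unjustified: from $c(R_n)\to 0$ one cannot conclude $v(R_n)\to 0$, and in the example $v(R_n)\to\tfrac34$. What genuinely survives is $\mathrm{(DD)}\Leftrightarrow$ sliceable (both your halving and the paper's greedy accumulation are valid), together with the easy implication to atom-freeness; the converse direction requires $\sigma$-additivity, which is exactly the hypothesis added in Corollary~\ref{cor-conti}.
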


\begin{proof}
  We start by showing that atom-freeness implies sliceability. Fix
  $\varepsilon>0$.

  \smallskip\textbf{Step~1:}
  Let $Y\subseteq \abscake$ be \emph{any} subset, and assume that there
  is some $B\subseteq Y$, $B\in\calA$, such that $v(B)>0$. Define
  \begin{gather*}
    \calF^Y :=\calF^Y_\varepsilon := \{F\in\calA \mid F\subseteq Y,
    \: 0< v(F)\leq\varepsilon\}.
  \end{gather*}
  We claim that for the special choice $Y=B\in\calA$ the family
  $\calF^B$ is not empty.

  Since $B$ is not an atom, there is some $F\subseteq B$, $F\in\calA$,
  with $0< v(F)< v(B)$.

  If $v(F)\leq\varepsilon$, then $F\in\calF^B$, and we are done.

  If $v(F)>\varepsilon$, we assume, to the contrary that there is no
  subset $F'\subseteq F$, $F'\in\calA$, with $0<
  v(F')\leq\varepsilon$. Since $F$ cannot be an atom, there is a subset
  $F'\subseteq F$ with $\varepsilon< v(F')< v(F)$ and $v(F\setminus
  F')>\varepsilon$. Iterating this with $F\rightsquigarrow F\setminus
  F'$ furnishes a sequence of disjoint sets $F_1=F',F_2,F_3,\dots$ with
  $v(F_i)>\varepsilon$ for all $i\in\N$. This is impossible since
  $v(F)<\infty$. So we can find some $F'\subseteq F\subseteq B$ with $0<
  v(F')\leq\varepsilon$, i.e., $\calF^B$ is not empty.

  \medskip
  \textbf{Step~2:}
  Define a(n obviously monotone) set-function $c(Y) :=
  \sup_{C\in\calF^Y} v(C)$ for any $Y\subseteq \abscake$; as usual,
  $\sup\emptyset = 0$. Since $\calF^\abscake$ is not empty, we can pick
  some $B_1\in\calF^\abscake$ such that $\frac 12 c(\abscake) < v(B_1)
  \leq\varepsilon$.

  If $v(\abscake\setminus B_1)\leq\varepsilon$, we set $B_2 :=
  \abscake\setminus B_1$; otherwise, we can pick some
  $B_2\in\calF^{\abscake\setminus B_1}$ such that $\frac 12
  c(\abscake\setminus B_1) < v(B_2) \leq\varepsilon$.

  In general, if $v(\abscake\setminus (B_1\cup\dots\cup
  B_n))\leq\varepsilon$, we set $B_{n+1}=\abscake\setminus
  (B_1\cup\dots\cup B_n)$; otherwise, we pick
  \begin{equation}\label{e-cond}
    B_{n+1}\in\calF^{\abscake\setminus (B_1\cup\dots\cup B_n)}
    \quad\text{such that}\quad
    \frac 12 c(\abscake\setminus (B_1\cup \dots \cup B_n)) \leq  v(B_{n+1})\leq \varepsilon.
  \end{equation}

  We are done if this procedure stops after finitely many steps;
  otherwise, we get a sequence of disjoint sets $B_1, B_2, \dots$
  satisfying~\eqref{e-cond}.
  Define $B_\infty := \abscake\setminus\bigcup_n B_n$. This set need
  not be in~$\calA$, but we still have, because of \eqref{e-cond},
  \begin{gather*}
    c(B_\infty)
    \leq c(\abscake\setminus (B_1\cup\dots\cup B_m))
    \leq 2 v(B_{n+1})
    \xrightarrow[n\to\infty]{}0
  \end{gather*}
  since the series
  \begin{gather*}
    \sum_{n\in\N}  v(B_n)
    = \sup_N\sum_{n=1}^N  v(B_n)
    = \sup_N v\left(\bigcup_{n=1}^N B_n\right)
    \leq  v(\abscake)
  \end{gather*}
  converges.
  In particular, $\lim_{n\to\infty} v(\abscake\setminus \bigcup_{i=1}^nB_i)=0$.

  Using again the convergence of the series $\sum_n v(B_n)$, we find
  some $N=N(\varepsilon)$ such that $\sum_{n > N}
  v(B_n)\leq\varepsilon$, hence $B_1,B_2,\dots, B_N$ and
  $\abscake\setminus\bigcup_{n=1}^N B_n$ are the desired small pieces
  of~$\abscake$. This completes the proof that $v$ is sliceable.

  \medskip
  We now show that sliceability implies condition~$\mathrm{(DD)}$.
  Let $B\in\calA$ with $v(B)>0$.
  Since the \enquote{relative} finitely additive valuation
  $v_B(A):= v(A\cap B)/ v(B)$ inherits the nonatomic property from $v$,
  it is clearly enough to show that for every $\alpha\in(0,1)$, there is
  an increasing sequence
  \begin{gather*}
    B^1_\alpha\subset B^{2}_\alpha\subset B^{3}_\alpha\subset\cdots,
    \quad B^n_\alpha\in\calA \::\: \sup_{n\in\N} v(B^n_\alpha) = \alpha,
  \end{gather*}
  which is the property $\mathrm{(DD)}$ relative to the full
  cake $\abscake$ only.

  Since $v$ is sliceable,
  there are mutually disjoint sets
  $C_1^{n}, \dots, C_N^{n}\in\calA$, where $N=N(n)$,
  $\abscake=\bigcup_{i=1}^N C_i^{n}$, and $v(C_i^{n})<\frac 1n$.

  Let $k = \lfloor {1}/{\alpha} \rfloor +1$.  Set $B_k :=
  C_1^{k}\cup\dots\cup C_{M(k)}^{k}$, where $M(k)\in \{1,\dots, N(k)\}$
  is the unique number such that
  \begin{gather*}
    \sum_{i=1}^{M(k)} v(C_i^k)\leq\alpha <\sum_{i=1}^{M(k)+1}
    v(C_i^k)\leq \sum_{i=1}^{M(k)} v(C_i^k)+\frac 1k.
  \end{gather*}

  By construction, $\alpha\geq v(B_k)=\sum_{i=1}^{M(k)} v(C_i^k) >
  \alpha-\frac 1k$.  Thus, we can iterate this procedure, considering
  $\abscake\setminus B_k$ and constructing a set $D_{k+1}\subseteq
  \abscake\setminus B_k$ that satisfies
  \begin{gather*}
    (\alpha- v(B_k)) \geq v(D_{k+1}) > (\alpha- v(B_k))-\frac 1{k+1}.
  \end{gather*}
  For $B_{k+1}:= B_k\cup D_{k+1}$, we get $\alpha\geq
  v(B_{k+1})>\alpha-\frac 1{k+1}$.

  The sequence $B_{k+i}$, $i\in\N$, satisfies
  $v(B_{k+i})\uparrow\alpha$, i.e., $B_\alpha^n = B_{k+n}$ is the
  sequence of sets we need to have property~$\mathrm{(DD)}$.

  As mentioned earlier, $\mathrm{(DD)}$ implies atom-freeness, which
  completes this proof.
\end{proof}

Since for a $\sigma$-additive valuation on a $\sigma$-algebra~$\calA$,
properties $\mathrm{(D)}$ and $\mathrm{(DD)}$ are equivalent, we
immediately get:
\begin{corollary}\label{cor-conti}
  Let $v$ be a $\sigma$-additive valuation on a $\sigma$-algebra
  $\calA$ over an abstract cake~$\abscake$. The conditions $\mathrm{(D)}$,
  $\mathrm{(DD)}$, \enquote{$v$ is sliceable,} and \enquote{$v$ has no
    atoms} are pairwise equivalent.
\end{corollary}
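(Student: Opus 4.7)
The plan is to reduce this corollary directly to Theorem~\ref{theo-conti}, using the observation (already made in the paragraph right before Definition~\ref{def-slice} and again immediately preceding the corollary) that $\mathrm{(D)}$ and $\mathrm{(DD)}$ coincide once $v$ is $\sigma$-additive on a $\sigma$-algebra. A $\sigma$-additive valuation is in particular finitely additive, so Theorem~\ref{theo-conti} will already supply the pairwise equivalence of $\mathrm{(DD)}$, sliceability, and atom-freeness; all that remains is to splice in $\mathrm{(D)}$ via the $\mathrm{(D)}\Leftrightarrow\mathrm{(DD)}$ equivalence.

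First I would write out $\mathrm{(D)}\Leftrightarrow\mathrm{(DD)}$ carefully. The direction $\mathrm{(D)}\Rightarrow\mathrm{(DD)}$ is immediate: given $A\in\calA$ and $\alpha\in(0,1)$, pick $A_\alpha\subseteq A$ with $v(A_\alpha)=\alpha v(A)$ from $\mathrm{(D)}$ and take the constant sequence $B^n_\alpha:=A_\alpha$; the trivial cases $\alpha\in\{0,1\}$ are handled by $\emptyset$ and $A$. For the converse, let $B^1_\alpha\subset B^2_\alpha\subset\cdots$ be the increasing sequence furnished by $\mathrm{(DD)}$. Because $\calA$ is a $\sigma$-algebra, the union $B_\alpha:=\bigcup_{n\in\N} B^n_\alpha$ lies in $\calA$, and decomposing it as the disjoint union $B^1_\alpha\cup\bigcup_{n\geq 1}(B^{n+1}_\alpha\setminus B^n_\alpha)$ and applying $(\Sigma)$ gives continuity from below, so $v(B_\alpha)=\sup_n v(B^n_\alpha)=\alpha v(A)$. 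Hence $B_\alpha$ witnesses $\mathrm{(D)}$ for $A$ and~$\alpha$.

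With this equivalence in hand, I would then invoke Theorem~\ref{theo-conti}, applied to $v$ regarded simply as a finitely additive valuation on the algebra~$\calA$. That theorem yields $\mathrm{(DD)}\Leftrightarrow\text{``sliceable''}\Leftrightarrow\text{``no atoms''}$, and the preceding paragraph adds $\mathrm{(D)}$ to this chain, which is exactly the statement of the corollary. There is no genuine obstacle here, since the substantial work was done in Theorem~\ref{theo-conti}; the only point requiring care is to use the $\sigma$-algebra and $\sigma$-additivity hypotheses in precisely the right place, namely in passing from $\mathrm{(DD)}$ back to $\mathrm{(D)}$ via continuity from below of the measure.
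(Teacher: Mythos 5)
Your proposal is correct and matches the paper's argument: the paper likewise derives the corollary immediately from Theorem~\ref{theo-conti} together with the equivalence of $\mathrm{(D)}$ and $\mathrm{(DD)}$ for $\sigma$-additive valuations on a $\sigma$-algebra, established via continuity from below exactly as you describe (in the discussion following Definition~\ref{def-DD}). You have simply spelled out the $\mathrm{(D)}\Leftrightarrow\mathrm{(DD)}$ step in slightly more detail than the paper does.
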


If $A_1, A_2, \dots$ is an enumeration of the nonequivalent atoms of
the $\sigma$-additive valuation~$v$, then $A_\infty := \abscake\setminus
\bigcup_{n\in\N}A_n\in\calA$, and we can restate
Corollary~\ref{cor-conti} in the form of a decomposition theorem.
\begin{corollary}\label{cor-deco}
  Let $v$ be a $\sigma$-additive valuation on a $\sigma$-algebra
  $\calA$ over a(n abstract) cake~$\abscake$.  Then $\abscake$ can be written as
  a disjoint union of a $v$-sliceable set $A_{\infty}$ and at most
  countably many atoms $A_1, A_2, \dots$.
\end{corollary}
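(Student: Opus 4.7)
The plan is to isolate the atomic part of $v$ on a countable disjoint collection of atoms and show that the leftover set is $v$-sliceable by appealing to the equivalence just established in Theorem~\ref{theo-conti}.

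First, I would recall from the discussion preceding Theorem~\ref{theo-conti} that any two atoms of $v$ are either equivalent or can be made disjoint by replacing $B$ with $B\setminus A$, and that since $v(\cake)=1$, for each $k\in\N$ there are at most $k$ pairwise non-equivalent atoms with $v$-value exceeding $1/k$. Consequently there are at most countably many atoms up to equivalence. Picking representatives and applying the inductive replacement $A_{n+1}\rightsquigarrow A_{n+1}\setminus\bigcup_{i\leq n}A_i$ noted in that discussion, I may assume these representatives $A_1,A_2,\ldots$ are pairwise disjoint elements of $\calA$.

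Next, set $A_\infty := \cake\setminus\bigcup_{n\in\N}A_n$. Because $\calA$ is a $\sigma$-algebra, $A_\infty\in\calA$, and by construction $\cake$ is the disjoint union of $A_\infty$ and the atoms $A_1,A_2,\ldots$. It remains to show that $A_\infty$ is $v$-sliceable. Following Definition~\ref{def-slice}, this amounts to showing that the $\sigma$-additive set-function $B\mapsto v(B\cap A_\infty)$ (suitably normalized, assuming $v(A_\infty)>0$; the case $v(A_\infty)=0$ is trivial) is sliceable on $\calA$. By Corollary~\ref{cor-conti}, this is in turn equivalent to atom-freeness of the restriction.

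To verify atom-freeness, suppose for contradiction that some $B\in\calA$ with $B\subseteq A_\infty$ were an atom of the restriction; then $B$ would also be an atom of $v$ itself. By the exhaustiveness of our enumeration of atoms up to equivalence, $B$ must be equivalent to some $A_n$, i.e., $v(B\cap A_n) = v(B) = v(A_n) > 0$. But $B\subseteq A_\infty$ forces $B\cap A_n=\emptyset$, contradicting $v(B\cap A_n)>0$. Hence the restriction has no atoms, and Corollary~\ref{cor-conti} supplies the desired sliceability of $A_\infty$.

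The only substantive obstacle is the bookkeeping in the first paragraph: one must verify that the enumeration is genuinely exhaustive modulo equivalence, because the atom-free argument on $A_\infty$ would break if any atom escaped the list. Both the countability of the list and the membership $A_\infty\in\calA$ depend essentially on the $\sigma$-additivity of $v$ and on $\calA$ being a $\sigma$-algebra, which is precisely why the corollary is stated in that setting rather than for merely finitely additive valuations.
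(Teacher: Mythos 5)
Your proposal is correct and follows essentially the same route as the paper: enumerate the (at most countably many) pairwise non-equivalent atoms, disjointify them by the iterative replacement $A_{n+1}\rightsquigarrow A_{n+1}\setminus\bigcup_{i\leq n}A_i$, put $A_\infty:=\cake\setminus\bigcup_n A_n\in\calA$, and deduce sliceability of the atom-free remainder from Corollary~\ref{cor-conti}. You in fact spell out more than the paper does (notably the verification that the restriction to $A_\infty$ is atom-free because any atom inside $A_\infty$ would have to be equivalent to, hence have positive overlap with, some $A_n$), which is exactly the intended argument.
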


\section{Which Pieces Should Be Admissible?}
\label{chap:literature}

In the cake-cutting literature, a great variety of different
definitions have been used for the set $\calP$ of admissible pieces of
cake.  We first collect the most commonly used definitions
for~$\calP$, along with the corresponding references and discuss them
in detail.  Then we show several relations among these definitions and
discuss what this implies for a most reasonable choice of~$\calP$.

Typical choices for the set $\calP$ containing all admissible pieces of a
standard cake $\concake$ are
\begin{enumerate}
\item\label{deftype:finite_union_intervals} all finite unions of
  intervals from $\concake$, i.e., the family $\calI(\concake)$ defined
  earlier on page~\pageref{pg:calI-cake};
\item\label{deftype:countable_union_intervals} all countable unions of
  intervals from $\concake$, i.e., $\calI(\concake)^\N = \left\{ \bigcup_{i \in
      \N} I_i \mid I_i \in \calI(\concake) \right\}$;
\item\label{deftype:borel} the Borel $\sigma$-algebra over $\concake$,
  i.e., $\calB(\concake)$;
\item\label{deftype:lebesgue} the set of all Lebesgue sets
  over $\concake$, i.e., $\calL(\concake)$;\footnote{Recall that a set
    $\tilde B$ is a Lebesgue set if, and only if, there is a
    Borel set $B$ such that the symmetric difference
    $\tilde B\vartriangle B := ( B\setminus B ) \cup ( B\setminus
    \tilde B )\subseteq N$ is contained in a Borel set $N$
    with Lebesgue measure $\lambda(N) = 0$. We will see in
    Theorem~\ref{thm:strict-inclusions} that there are indeed
    Lebesgue sets that are not Borel sets.}
  or
\item\label{deftype:powerset} the power set $\mathfrak{P}(\concake)$
  of~$\concake$.
\end{enumerate}

Assuming $\calP = \calI(\concake)$ is common among papers that consider
only finite cake-cutting protocols. Such protocols can make only a
finite number of cuts, thus producing a finite set of contiguous
pieces, i.e., intervals, to be evaluated by the players.  Authors that
make this assumption and use $\calP = \calI(\concake)$ include
\citeA{woe-sga:j:complexity-cake-cutting},
\citeA{str:j:finite-protocols-cannot-ef},
\citeA{lin-rot:c:dgef},
\citeA{pro:c:though-shalt-covet},
\citeA{wal:c:online-cake-cutting},
\citeA{coh-lai-par-pro:c:optimal-envy-free-cake-cutting},
\citeA{bei-che-hua-tao-yan:c:optimal-connected-cake-cutting},
\citeA{cec-pil:j:computability-equitable-divisions},
\citeA{bra-fel-lai-mor-pro:c:maxsum-fair-cake-divisions},
\citeA{cec-dob-pil:j:existence-equitable-divisions},
\citeA{che-lai-par-pro:j:truth-justice-cake},
\citeA{bra-mil:c:equilibrium-analysis-cake-cutting},
\citeA{azi-mac:c:discrete-bounded-ef-protocol-four-players,azi-mac:c:discrete-bounded-envy-free-cake-cutting-protocol,azi-mac:j:bounded-envy-free-cake-cutting-algorithm},
\citeA{edm-pru:c:no-piece-of-cake}, and
\citeA{azi-mac:c:discrete-bounded-envy-free-cake-cutting-protocol}.

As a special case, valuation functions may even be restricted to
single intervals, which is done by
\citeA{cec-pil:j:near-equitable-2-person-cake-cutting-algorithm}
and
\citeA{aum-dom:c:efficiency-connected-pieces}.
Even though the restriction to finite unions of intervals is sensible
from a practical perspective, it may artificially constrain results
that could hold also in a more general setting.

\citeA{bra-pro-zha:c:externalities-cake-cutting}
extend $\calP$ to contain countably infinite unions of intervals, i.e.,
$\calI(\concake)^\N$.

Authors assuming $\calP = \calB(\concake)$ include
\citeA{str-woo:j:measures-agree},
\citeA{den-qi-sab:t:complexity-envy-free}, and
\citeA{seg-nit-has-aum:j:two-dimensional-cake-cutting}.

Works using $\calP = \calL(\concake)$ include those by
\citeA{rei-pot:j:finding-ef-pareto-optimal-division},
\citeA{arz-aum-dom:c:throw-cake}, and
\citeA{rob-web:j:near-exact-and-envy-free-cake-division}.
Additionally, several authors do not explicitly make the assumption
$\calP = \calL(\concake)$, but they define valuation functions based on
(Lebesgue-)measurable sets only, most prominently, a valuation
function is often defined as the integral of a given probability
density function on~$\concake$. This or a similar assumption is made by
\citeA{bra-jon-kla:perfect-cake-cutting-with-money,bra-jon-kla:j:better-cut-cake,bra-jon-kla:j:pie-cutting,bra-jon-kla:j:there-may-be-no-perfect-division},
\citeA{rob-web:b:cake-cutting-algorithms-be-fair-if-you-can},
\citeA{web:j:minimal-cuts},
\citeA{aum-dom-has:c:socially-efficient-cake-divisions},
\citeA{bra-car-kur-pro:t:strategic-fair-division}, and
\citeA{car-lai-pro:c:more-expressive-cake-cutting}.

Papers that assume $\calP = \mathfrak{P}(\concake)$ include those by
\citeA{mac-mar:j:cut-pizza-fairly},
\citeA{sga-woe:j:linear-approximation-cake-cutting},
\citeA{sab-wan:c:envy-free-cake-cutting-for-five},
\citeA{man-oka:c:meta-envy-free-cake-cutting-protocols}, and
\citeA{aum-dom-has:c:truthful-cake-auctions}.

Finally, several works, including those by
\citeA{dub-spa:j:cut-cake-fairly},
\citeA{bar:j:game-theoretic-algorithms-for-cake-divisions,bar:j:super-envy-free-divisions},
\citeA{zen:c:approximate-envy-free-procedures}, and
\citeA{bra-tay:j:envy-free-protocol},
define the set of admissible pieces of cake to be
some ($\sigma$-)algebra (not necessarily Borel) over~$\concake$.

Note that each of the sets $\calI(\concake)$, $\calB(\concake)$,
$\calL(\concake)$, and $\mathfrak{P}(\concake)$ is an algebra over $\concake$, and
all, except $\calI(\concake)$, are also $\sigma$-algebras over~$\concake$.
However, $\calI(\concake)^\N$ is
not an algebra, as the proof of the following theorem shows.

Having introduced all the different approaches currently used in the
literature, we will now prove the strict inclusions among these sets
stated in the following theorem.

\begin{theorem}\label{thm:strict-inclusions}
  $\calI(\concake) \stackrel{\textrm{(a)}}{\subsetneq}
  \calI(\concake)^\N \stackrel{\textrm{(b)}}{\subsetneq}
  \calB(\concake) \stackrel{\textrm{(c)}}{\subsetneq}
  \calL(\concake) \stackrel{\textrm{(d)}}{\subsetneq}
  \mathfrak{P}(\concake)$.
\end{theorem}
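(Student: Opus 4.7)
The plan is to handle the four inclusions (a)--(d) in turn; in each case the containment is essentially definitional and the work lies in exhibiting a concrete witness set in the larger family that fails to belong to the smaller one.

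First I would dispatch (a). The inclusion $\calI(\cake) \subseteq \calI(\cake)^\N$ is immediate by padding a finite union with copies of $\emptyset$. For strictness I would take a countable scatter of isolated points, e.g., $E = \{1/n : n\in\N\} = \bigcup_{n\in\N}[1/n, 1/n]$; this lies in $\calI(\cake)^\N$ but has infinitely many isolated points, whereas any member of $\calI(\cake)$, being a finite union of intervals, has only finitely many.

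Next, for (b), containment follows from Lemma~\ref{lem:Borel-sigma-algebra}: intervals are Borel and $\calB(\cake)$ is stable under countable unions, so $\calI(\cake)^\N\subseteq\calB(\cake)$. For strictness the natural witness is the Cantor set $C_{1/3}$ from Example~\ref{ex:cantor-set}, which is closed and hence Borel, but totally disconnected and uncountable. Were $C_{1/3} = \bigcup_{i\in\N} I_i$ with intervals $I_i \subseteq C_{1/3}$, total disconnectedness would force each nonempty $I_i$ to be a singleton, making $C_{1/3}$ at most countable -- a contradiction. As a bonus this verifies the remark, made just before the theorem, that $\calI(\cake)^\N$ is not an algebra: the complement of $C_{1/3}$ is a countable union of open intervals and thus lies in $\calI(\cake)^\N$, whereas $C_{1/3}$ itself does not.

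The step I expect to be the main obstacle is (c). Containment is built into the definition of $\calL(\cake)$ given in the footnote to the list, but strictness requires a cardinality argument. I would invoke the standard fact $|\calB(\cake)| = \mathfrak{c}$ (provable by transfinite induction along countable ordinals, starting from the countable base of intervals with rational endpoints). On the other hand, because $C_{1/3}$ is Borel with Lebesgue measure $0$, the footnote's criterion (take $B = \emptyset$ and $N = C_{1/3}$) makes \emph{every} subset of $C_{1/3}$ Lebesgue-measurable, so $|\calL(\cake)| \geq 2^{|C_{1/3}|} = 2^{\mathfrak{c}} > \mathfrak{c}$. Consequently some subset of $C_{1/3}$ lies in $\calL(\cake)\setminus\calB(\cake)$.

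Finally, (d) is trivial for containment, and strictness is supplied directly by Example~\ref{ex-vitali}: Lebesgue measure is a $\sigma$-additive, translation-invariant, normalized valuation on $\calL(\cake)$, so the argument given there shows that the Vitali set $V \in \mathfrak{P}(\cake)$ cannot lie in $\calL(\cake)$.
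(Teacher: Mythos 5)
Your proof is correct, and it follows the same overall strategy as the paper's: the four containments are definitional, and each strictness claim is settled by a witness set. The witnesses differ in two places. For (a) the paper uses a countable union of pairwise disjoint \emph{nondegenerate} intervals, $\bigcup_{i}[3\cdot 2^{-i-2},2^{-i}]$, while you use the countable scatter $\{1/n : n\in\N\}$; both arguments are sound (a finite union of intervals has only finitely many isolated points). For (b) the paper takes the irrationals $\overline{\Q\cap\cake}$, noting that $\Q\cap\cake\in\calI(\cake)^\N$ but any interval with more than one point contains a rational; you take the Cantor set $C_{1/3}$ and argue via total disconnectedness plus uncountability. The paper's witness is marginally more elementary (no appeal to uncountability of $C_{1/3}$), whereas yours has the pleasant side effect of reusing Example~\ref{ex:cantor-set} and of simultaneously certifying that $\calI(\cake)^\N$ is not closed under complements. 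For (c) you give essentially the paper's cardinality argument but make explicit the step the paper leaves to a citation, namely that $|\calL(\cake)|=2^{\mathfrak c}$ because every subset of the Borel null set $C_{1/3}$ is Lebesgue-measurable; this is a genuine improvement in self-containedness. Part (d) is identical to the paper's use of the Vitali set.
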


\begin{proof}
  We start with proving~(a): $\calI(\concake) \subsetneq
  \calI(\concake)^\N$.  Obviously, $\calI(\concake) \subseteq \calI(\concake)^\N$
  is true, as every finite union of intervals is a countable union of
  intervals. To see that the two sets are not equal, look at $I =
  \bigcup_{i\in\N\cup\{0\}} [3\cdot 2^{-i-2},2^{-i}]$.  It is clear that
  $I \in \calI(\concake)^{\N}$ is true, as $I$ is a countable union of
  intervals. However, it holds that $I = [ \nicefrac{3}{4}, 1 ] \cup [
  \nicefrac{3}{8}, \nicefrac{1}{2} ] \cup \cdots$, i.e., $I$ cannot be
  written as a finite union of intervals, as all these subintervals are
  pairwise disjoint.  Hence, $I \notin \calI(\concake)$, so $\calI(\concake)
  \subsetneq \calI(\concake)^{\N}$, and we have shown~(a).

  In Lemma~\ref{lem:Borel-sigma-algebra} and
  Definition~\ref{def:Borel-sigma-algebra}, we have seen that
  $\calB(\concake) = \sigma(\calQ(\concake))$ where $\calQ(\concake)$ is the
  family of all intervals within~$\concake$. Since a $\sigma$-algebra is
  stable under (finite and countable) unions, we get $\calI(\concake)
  \subseteq \sigma(\calQ(\concake)) = \calB(\concake)$. Using again the
  stability of a $\sigma$-algebra under countable unions, we arrive at
  $\calI(\concake)^{\N} \subseteq \calB(\concake)$.

  Since, however, $\QQ \cap \concake \in \calB(\concake)$ is true, as $\QQ
  \cap \concake$ can be written as a countable union of intervals that each
  contain one element, it must hold that $\overline{\QQ \cap \concake} \in
  \calB(\concake)$ by the definition of a $\sigma$-algebra.  However, the
  irrational numbers $\overline{\QQ \cap \concake}$ in $\concake$ cannot be
  written as a countable union of intervals, since every interval
  containing more than one element immediately contains a rational
  number. Therefore, $\calI(\concake)^{\N}$ is not an algebra and
  $\calI(\concake)^{\N} \neq \calB(\concake)$ holds, proving~(b).

  The inclusion $\calB(\concake) \subseteq \calL(\concake)$ holds by
  definition, as all Borel sets are also Lebesgue sets.  However, there
  are Lebesgue sets that are not Borel sets: Observe
  that the cardinality of $\calL(\concake)$ is the cardinality of
  $\mathfrak{P}(\concake)$ (which is $2^{\mathfrak c} > \mathfrak c$),
  whereas there are only continuum-many (i.e., $\mathfrak c$, the
  cardinality of $\concake$) Borel sets (see~\citeR[Appendix~G,
  Corollary~G.7]{schi:b:measures}).  This proves~(c).  An alternative
  direct construction can be based on the Cantor function, also known as
  the \emph{devil's staircase} (see~\citeR[p.~153,
  Example~7.20]{schi:b:counterexamples}).

  Finally, the power set $\mathfrak{P}(\concake)$ trivially contains all
  other families of sets considered earlier. Nevertheless, there are
  sets in $\mathfrak{P}(\concake)$ that are not Lebesgue sets, for
  example the Vitali set that we introduced in Example~\ref{ex-vitali},
  so $\calL(\concake) \neq \mathfrak{P}(\concake)$, and we have~(d).
\end{proof}

\section{Discussion}
\label{chap:discussion}

Taking $\calP=\calI(\concake)$ as domain for a valuation $v$ and a
protocol involving a finite number of cuts is always possible; this
remains true for open-ended protocols that stop after a finite but
\emph{a priori} unknown number of steps.  If the protocol is infinite,
the naive choice $\calP = \calI(\concake)^\N$ is problematic, as
$\calI(\concake)^\N$ is not an algebra and thus does not even satisfy
the minimum requirements for~$\calP$ as described in the first
paragraph of Section~\ref{chap:preliminaries:finite}.

From a theoretical point of view, however, the choice $\calP =
\calI(\concake)$ may be unnecessarily restrictive, especially in light of
the fact that we also want to use infinite cake-cutting protocols.
Therefore, a larger set $\calP$ may be desirable, perhaps even larger
than $\calI(\concake)^{\N}$, which (as we have seen) has disqualified
itself.

We start our discussion by explicating why $\calP =
\mathfrak{P}(\concake)$ is a bad choice and we then provide arguments for
a better option, namely the Borel $\sigma$-algebra $\calP =
\calB(\concake)$.

\subsection{Taming $\boldsymbol{\mathfrak{P}(\concake)}$ with Exotic
  Valuations via Banach Limits}
\label{chap:solution:ultra-filter}

If one boldly desires to define valuation functions on the set
$\mathfrak{P}(\concake)$ of all subsets of the cake, it remains to be
shown that this indeed is possible.  We have seen that the commonly
used valuation functions represented via boxes, as depicted in
\cref{fig:motivation:common-valuation-function}, are not capable of
evaluating every piece of cake in $\mathfrak{P}(\concake)$. Hence, in
this section we aim to define a valuation function capable of
evaluating \emph{every} possible piece of cake in
$\mathfrak{P}(\concake)$.

Let us begin with a negative result.

\subsubsection{A Negative Result}

Using axiomatic set theory one can show that there cannot be a
valuation $v$ of the standard cake $\concake$ which
\begin{enumerate}
\item[a)] is defined on all of $\mathfrak{P}(\concake)$,
\item[b)] is $\sigma$-additive, and
\item[c)] is divisible, hence satisfies $v(\{x\})=0$ for any $x\in\concake$.
\end{enumerate}
 The requirements a)--c) are a consequence of a result by Ulam, and it
requires that the continuum hypothesis holds true, see the books by
\citeA[p.~26, Proposition~5.7]{oxt:b:measure-and-category} or
\citeA[pp.~132--3, Example~6.15]{schi:b:counterexamples}.

 We may relax on b), i.e., the $\sigma$-additivity, and look for
\textbf{finitely} additive valuations if we want to admit all pieces
of cake. Let us formally define a valuation function $\mu$ on
$\mathfrak{P}(\concake)$ satisfying the requirements $\mathrm{(M)},
\mathrm{(A)}$, and $\mathrm{(D)}$ from~\cref{def:valuation-function}.
To do so, in a first step, we must choose an arbitrary sequence
$(x_i)_{i \in \N}$ of pairwise distinct elements from $\concake$. For
every $A \subseteq \concake$, we define a mapping $f_A \colon \N \to
[0,1]$ with
\begin{align}\label{equ:rev1}
  n \mapsto f_{A}(n)= \frac{|A \cap  \{ x_1, \dots, x_n \}|}{n},
\end{align}
where $|B|$ denotes the cardinality of any set~$B$.  That is, $f_A(n)$
describes the relative frequency of the first $n$ elements of
$(x_i)_{i \in \N}$ being in~$A$.  For some sets $A$ the limit
$\lim_{n\to\infty} f_A(n)$ does exist, but it may not exist for other
sets~$A$. We can, however, use the Banach limits, that we will now
introduce.

\subsubsection{Banach Limits}
\label{chap:preliminaries:banach}

We will need a nonconstructive way to extend linear maps.  The key
result is the standard Hahn--Banach theorem, which is well-known from
functional analysis (see,
e.g.,~\citeR[Theorem~3.2]{rudin:b:functional}), so we need to go on a
quick excursion into functional analysis.

\begin{theorem}
  Assume that $(Y,\|\cdot\|)$ is a normed vector space and $L:M\to\RR$
  a linear functional, which is defined on a linear subspace $M\subseteq
  Y$ satisfying $|Lx|\leq \kappa\|x\|$ for all $x\in M$ with a universal
  constant $\kappa=\kappa_L\in (0,\infty)$. Then there is an extension
  $\hat L : Y\to\RR$ such that $\hat L$ is again linear and satisfies
  $|\hat Lx|\leq \kappa\|x\|$ for all $x\in Y$ with the same constant
  $\kappa=\kappa_L$ as before.
\end{theorem}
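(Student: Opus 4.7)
The plan is to reduce the assertion to the classical Hahn--Banach dominated extension theorem, applied to the sublinear gauge $p \colon Y \to [0,\infty)$ defined by $p(x) := \kappa\|x\|$. By the norm axioms, $p$ is positively homogeneous ($p(\alpha x) = \alpha p(x)$ for $\alpha \geq 0$) and subadditive ($p(x+y) \leq p(x) + p(y)$), and by hypothesis $L(x) \leq |Lx| \leq p(x)$ for every $x \in M$. Once a linear extension $\hat L \colon Y \to \R$ of $L$ satisfying $\hat L(x) \leq p(x)$ everywhere is obtained, linearity automatically yields $-\hat L(x) = \hat L(-x) \leq p(-x) = \kappa\|x\|$, so $|\hat L x| \leq \kappa\|x\|$ for all $x \in Y$, which is the desired two-sided bound.

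To produce such an $\hat L$, I would argue in two stages. First, a one-dimensional extension: fix any $y_0 \in Y \setminus M$ and try to define $L' \colon M + \R y_0 \to \R$ by $L'(x + \alpha y_0) := L(x) + \alpha c$ for a yet-to-be-chosen $c \in \R$. Distinguishing $\alpha > 0$ from $\alpha < 0$ and using positive homogeneity of $p$, the condition $L'(x + \alpha y_0) \leq p(x + \alpha y_0)$ for all $x \in M$, $\alpha \in \R$ translates to
\[
  \sup_{x_1 \in M}\bigl(L(x_1) - p(x_1 - y_0)\bigr) \;\leq\; c \;\leq\; \inf_{x_2 \in M}\bigl(p(x_2 + y_0) - L(x_2)\bigr).
\]
This interval is nonempty, because for any $x_1, x_2 \in M$ one has $L(x_1) + L(x_2) = L(x_1 + x_2) \leq p(x_1 + x_2) \leq p(x_1 - y_0) + p(x_2 + y_0)$ by subadditivity of $p$, which rearranges exactly to the required inequality between the two endpoints. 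Any admissible $c$ then yields a legitimate extension $L'$.

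In the second stage, I partially order the collection $\mathcal{E}$ of pairs $(N, \tilde L)$, where $N$ is a subspace with $M \subseteq N \subseteq Y$ and $\tilde L \colon N \to \R$ is a linear extension of $L$ dominated by $p$, via inclusion of graphs. Every chain in $\mathcal{E}$ has an upper bound, obtained by taking the union of the domains and defining $\tilde L$ on this union in the obvious compatible way, so Zorn's lemma delivers a maximal element $(\hat N, \hat L)$. The one-dimensional extension step above forces $\hat N = Y$, for otherwise one could properly enlarge $\hat N$ by appending a vector $y_0 \notin \hat N$, contradicting maximality. The main obstacle is precisely this second stage: Zorn's lemma (equivalently, the axiom of choice) is essential, and it is exactly this nonconstructive ingredient that subsequently enables the paper to define \enquote{exotic} finitely additive valuations via Banach limits on all of $\mathfrak{P}(\cake)$.
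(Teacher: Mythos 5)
Your proof is correct and complete: the reduction to the dominated-extension form via the sublinear gauge $p(x)=\kappa\|x\|$, the one-dimensional extension lemma with the nonempty interval for $c$, and the Zorn's-lemma maximality argument together constitute the classical Hahn--Banach proof. The paper itself does not prove this theorem but merely cites it (Rudin, Theorem~3.2), and your argument is exactly the standard one that the reference supplies -- indeed, the sublinear-functional version you pass through is the very generalization the paper mentions immediately after the statement.
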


With a little more effort, but essentially the same proof, we can
replace the norm $\|x\|$ (respectively, $\kappa\|x\|$) by a general
sublinear map $p:Y\to\RR$. Sublinear means that
$p(\alpha x)=\alpha p(x)$ and $p(x+y)\leq p(x)+p(y)$ for all $x,y\in
Y$ and $\alpha\geq 0$. In this case, the extension of $Lx\leq p(x)$
satisfies $-p(-x)\leq \hat Lx \leq p(x)$.  Note that $p$ is only
positively homogeneous, i.e., it may happen that $-p(-x)\neq p(x)$.

The proof is nonconstructive and, at least for nonseparable spaces
$Y$, relies on the axiom of choice.

We will use the Hahn--Banach theorem for the space of bounded
sequences $\ell^\infty([0,\infty)) = \{x=(x_n)_{n\in\N}\subset
[0,\infty) \mid \|x\|_\infty < \infty\}$, where $\|x\|_\infty =
\sup_{n\in\N}x_n$ is the uniform norm.  Note that
$(\ell^\infty([0,\infty)),\|\cdot\|_\infty)$ is a nonseparable space.

A prime example of a bounded linear functional is the limit: Consider
those $x=(x_n)_{n\in\N}\in \ell^\infty([0,\infty))$ where
$L(x):=\lim_{n\to\infty}x_n = x$ exists in the usual sense. It is
common to write $c([0,\infty)) = \{x\in\ell^\infty([0,\infty)) \mid
\lim_{n\to\infty}x_n \text{\ exists}\}$. Clearly, $\lim_{n\to\infty}
x_n = \limsup_{n\to\infty} x_n\leq \sup_{n\in\N} x_n$, so that $L$ is
a bounded linear functional on $M=c([0,\infty))\subset
Y=\ell^\infty([0,\infty))$, and we can extend it to all of $Y$ as the
\textbf{Banach limit}, i.e.,
\begin{align*}
  \LIM_{n\to\infty} x_n
  :=
  \begin{cases}
    \lim_{n\to\infty} x_n &\text{if $x\in c([0,\infty))$},\\
    \hat{L}(x) &\text{if $x\in \ell^\infty([0,\infty))\setminus
      c([0,\infty))$}.
  \end{cases}
\end{align*}
Using the addition to the Hahn--Banach theorem with
$p(x):=\limsup_{n\to\infty} x_n$ and the observation that
$\lim_{n\to\infty}x_n$ exists if, and only if,
$\liminf_{n\to\infty}x_n=\limsup_{n\to\infty}x_n\in[0,\infty)$, we can
choose the extension $\hat L$ in such a way that
\begin{gather*}
    \liminf_{n\to\infty} x_n \leq \LIM_{n\to\infty} x_n \leq \limsup_{n\to\infty} x_n.
\end{gather*}

The construction of Banach limits is a typical application of the
Hahn--Banach extension theorem, hence the axiom of choice. The
appearance of these two concepts in this context is not an
accident. The seminal paper of \citeA{ban:j:probleme-de-la-mesure}
(see
also~\citeR[Chapter~II.\S1]{ban:b:theorie-des-operations-lineaires})
proves what we now call the \enquote{Hahn--Banach extension theorem
for linear functionals} in order to solve the \emph{probl\`eme de la
m\'esure} by \citeA[Chapter~VII.ii]{leb:b:leccons-sur-l-integration}
which asks for the existence of an additive, or $\sigma$-additive,
translation invariant measure on $\mathfrak{P}(\RR^n)$. The answer
depends on the dimension: In dimension $n\geq 3$, it is always
negative (because of the Banach--Tarski paradox), whereas in
dimensions $1$ and~$2$ it is negative if the measure is to be
$\sigma$-additive (because of Vitali-type constructions, cf.\
Example~\ref{ex-vitali}). More on this can be found in the books by
\citeA[Chapter~10]{wag:b:banach-tarski-paradox} and
\citeA[Example~7.31]{schi:b:counterexamples}.

There is a deep connection between the underlying group structure of
the space $\RR^n$ and Lebesgue's measure problem (this was discovered
by \citeR{neu:j:analytische-eigenschaften}).  Following M. M. Day, a
group $\mathbb G$ which allows for finitely additive,
(left-)translation invariant measures on all of $\mathfrak{P}(\mathbb
G)$ is nowadays called \textbf{amenable}~-- a pun combining the actual
meaning of the word (\enquote{nice, comfortable}) with its
pronunciation which reminds of \enquote{mean value} or measure. The
axiom of choice, which is needed for Hahn--Banach, can also be used to
construct extensions of measures defined on a sub-algebra $\calA_0$ of
an algebra $\calA$. It is known that this extendability, essentially,
is equivalent to the Hahn--Banach theorem (cf.~\citeR[Theorem~10.11 and
Corollary~13.6]{wag:b:banach-tarski-paradox}) describing its axiomatic
strength.

\subsubsection{From Banach Limits to Valuation Functions}

Having defined and discussed Banach limits, we will now use them to
construct,  based on the function $f_A$ defined in \eqref{equ:rev1},
a valuation function
\begin{align*}
  \mu \colon \mathfrak{P}(\concake) \to [0,1], \quad A \mapsto
  \LIM_{n \to \infty} f_A(n).
\end{align*}

It is clear that $\mu(A)$ is additive since $A\mapsto f_A(n)$ is
additive (for every fixed $n$) and both the limit and the Banach limit
are additive, so property $\mathrm{(A)}$ from
Definition~\ref{def:valuation-function} is satisfied. In the following
lemma we show that $\mu$ satisfies property~$\mathrm{(D)}$. At first
glance, this seems to contradict Corollary~\ref{cor-D-vs-sigma}. But
divisibility $\mathrm{(D)}$ involves the domain of the valuation, and
the proof of the lemma shows that we have almost no control on the set
$A_\alpha \subseteq A$ which achieves divisibility. That means, the
following phenomenon is symptomatic for having a \enquote{too big
domain.}

\begin{lemma}\label{lem:ultra-filter-valuation-function-satifies-D}
  For every $A \in \mathfrak{P}(\concake)$ with $\mu(A) > 0$ and every
  real number $\alpha \in [0,1]$, there exists a subset $A_\alpha
  \subseteq A$ in $\mathfrak{P}(\concake)$ such that $\mu(A_\alpha) =
  \alpha \mu(A)$.
\end{lemma}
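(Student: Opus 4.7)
The plan is to construct $A_\alpha$ by thinning out the elements of $A$ that appear in the reference sequence $(x_i)_{i\in\N}$ at asymptotic density $\alpha$, and then to deduce $\mu(A_\alpha)=\alpha\mu(A)$ from linearity of $\LIM$ together with its compatibility with the ordinary limit (so that sequences with classical limit $0$ have Banach limit $0$).

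First I would observe that $\mu(A)>0$ forces $A$ to contain infinitely many of the $x_i$: otherwise $f_A(n)\to 0$ in the classical sense, and since $\LIM$ agrees with $\lim$ when the latter exists, $\mu(A)$ would vanish. Hence I can enumerate the indices $i$ with $x_i\in A$ as $a_1<a_2<\cdots$. The boundary cases $\alpha=0$ (take $A_\alpha=\emptyset$) and $\alpha=1$ (take $A_\alpha=A$) are trivial, so assume $\alpha\in(0,1)$. The candidate set is
\[
    A_\alpha := \{\, x_{a_j} \mid j\in\N,\; \lfloor\alpha j\rfloor > \lfloor\alpha(j-1)\rfloor \,\}.
\]
By construction, among the first $j$ members of $A$ that occur in $(x_i)$, exactly $\lfloor\alpha j\rfloor$ belong to $A_\alpha$. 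Writing $k_n := |A\cap\{x_1,\dots,x_n\}|$, this gives $|A_\alpha\cap\{x_1,\dots,x_n\}| = \lfloor\alpha k_n\rfloor$, so
\[
    \bigl| f_{A_\alpha}(n) - \alpha f_A(n) \bigr|
    = \frac{\alpha k_n - \lfloor \alpha k_n \rfloor}{n}
    \leq \frac{1}{n}.
\]

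Since the error sequence tends to $0$ classically, its Banach limit is $0$. Using linearity of $\LIM$, I then conclude
\[
    \mu(A_\alpha)
    = \LIM_{n\to\infty} f_{A_\alpha}(n)
    = \alpha \LIM_{n\to\infty} f_A(n)
    = \alpha\,\mu(A),
\]
which is the required equality.

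There is no real obstacle to the argument; the delicate issue is simply conceptual. The witness $A_\alpha$ is constructed in a perfectly explicit, deterministic way from a floor-function rule along the subsequence $(a_j)$, but its behaviour under $\mu$ rests only on the linearity of $\LIM$ and its dominance by $\limsup$, which are the generic properties coming from the Hahn--Banach extension. In particular, $A_\alpha$ need not respect any topological or measure-theoretic structure of $\cake$, which illustrates precisely the phenomenon flagged before the lemma: divisibility is achieved here because the domain $\mathfrak{P}(\cake)$ is so large that an arbitrary thinning of a countable set inside $A$ is already available, not because $v$ interacts meaningfully with the geometry of the cake.
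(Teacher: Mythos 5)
Your proof is correct, and it takes a genuinely different route from the paper's. The paper builds the witness recursively: it starts from a finite block of elements of $A$ whose relative frequency is at least $\alpha\mu(A)$ and then alternately pauses and resumes adding elements of $A$ so that the frequency $f_B(n)$ oscillates around the target value $\alpha\mu(A)$ with amplitude tending to zero; the conclusion then uses only that $\LIM$ coincides with the ordinary limit on convergent sequences. Your construction instead thins $A$ along the subsequence $(x_{a_j})$ by the deterministic floor rule, which (by telescoping, using $0<\alpha<1$ so each increment $\lfloor\alpha j\rfloor-\lfloor\alpha(j-1)\rfloor$ is $0$ or $1$) gives the exact count $|A_\alpha\cap\{x_1,\dots,x_n\}|=\lfloor\alpha k_n\rfloor$ and hence $f_{A_\alpha}(n)=\alpha f_A(n)+O(1/n)$; the conclusion then additionally uses the linearity of $\LIM$. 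That is a legitimate ingredient --- $\LIM$ is a linear functional by Hahn--Banach, and the paper itself invokes its additivity to establish property $\mathrm{(A)}$ of $\mu$. One cosmetic point: your error term $f_{A_\alpha}(n)-\alpha f_A(n)$ is nonpositive while the paper sets up $\LIM$ on sequences with values in $[0,\infty)$, so it is cleanest to write $\alpha f_A(n)=f_{A_\alpha}(n)+g_n$ with $g_n:=(\alpha k_n-\lfloor\alpha k_n\rfloor)/n\in[0,1/n]$ and apply additivity to these two nonnegative sequences. What each approach buys: your witness is given in closed form and does not depend on which Banach limit was chosen (the relation $f_{A_\alpha}=\alpha f_A+o(1)$ holds pointwise, so the identity $\mu(A_\alpha)=\alpha\mu(A)$ holds for every Banach limit simultaneously), whereas the paper's witness depends on the numerical value $\alpha\mu(A)$, hence on the chosen $\LIM$, but has the feature that $f_B(n)$ converges in the ordinary sense. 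Both arguments support the paper's surrounding moral equally well: the subset achieving divisibility is an arbitrary thinning of a countable set and bears no relation to the geometry of $\cake$.
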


\begin{proof}
  If $\mu(A)>0$ then $A$ must contain an infinite number of points of
  the underlying sequence, say $A \cap \{x_1, x_2, \dots \} = \{
  x_{i(1)}, x_{i(2)}, \dots \}$ for some increasing sequence
  $(i(k))_{k\in\N}$ of integers. By assumption, $A\cap \{x_1, x_2,
  \dots, x_n\} = \{x_{i(1)}, \dots, x_{i(m)} \mid i(m)\leq n\}$, and so
  $\mu(A) = \LIM_{n\to\infty}\frac 1n |\{x_{i(1)},\dots, x_{i(m)} \mid
  i(m)\leq n\}|$.
  We have to construct a set $B\in\mathfrak{P}(A)$ such that
  $\LIM_{n\to\infty} f_B(n)=\alpha\mu(A)$ for fixed $\alpha\in[0,1]$.

  The key observation in this proof is the fact that for any
   positive  rational number $\nicefrac{k}{n}$ with $k<n$, we have
  \begin{gather*}
  \text{ on the one hand}\quad
    \frac{k}{n+1} < \frac kn
  \quad\text{ and on the other hand}\quad
    \frac kn < \frac{k+1}{n+1},
  \end{gather*}
  i.e., the quantity $f_{B}(n) = \frac 1n
  |B\cap\{x_{i(1)},\dots,x_{i(m)} \mid i(m)\leq n\}|$ decreases if we
  jack up $n\to n+1$ and the numerator does not increase, i.e., if
  $i(m+1)>n+1$ or if $x_{i(m+1)} = x_{n+1}\not\in B$, and it increases
  if we jack up $n\to n+1$ and $x_{i(m+1)} = x_{n+1}\in B$.

  Fix $\alpha \in [0,1]$ and observe that we can assume that $0<\alpha
  < 1$: If $\alpha =0$, we take $B=\emptyset$, and for $\alpha = 1$, we
  use $B=\{x_{i(m)} \mid m\in\N\}$. For $\alpha\in (0,1)$, we use a
  recursive approach.

  Since $\mu(A) = \LIM_{n\to\infty}\frac 1n |\{x_{i(1)},\dots,
  x_{i(m)} \mid i(m)\leq n\}|$ and $0<\alpha<1$ we must have $\frac
  1{n(1)} |\{x_{i(1)},\dots, x_{i(m)} \mid i(m)\leq
  n(1)\}|\geq\alpha\mu(A)$ for some $n(1)\in\mathbb N$. Define
  $B_{n(1)}=\{x_{i(1)},\dots, x_{i(m)} \mid i(m)\leq n(1)\}$, and assume
  that we have already found a set $B_n$ such that
  $f_{B_n}(n)\geq\alpha\mu(A)$. Because of the observation at the
  beginning of the proof, the numbers
  \begin{align*}
    \ell_{n+1} &:= \min\left\{k>n \mid
                 f_{B_n}(k)\leq\alpha\mu(A)\right\}~\text{and}\\
    u_{n+1} &:= \min\left\{k>\ell_{n+1} \mid
              f_{A_k}(k)\geq\alpha\mu(A) \text{ for }A_k= B_n\cup\{
              x_{i(\ell_{n+1}+1)},\dots, x_{i(k)}\}\right\}
  \end{align*}
  are well-defined and satisfy $\ell_n < u_n < \ell_{n+1}<u_{n+1}$ and
  $\ell_{n+1}\to\infty$. Setting
  \begin{gather*}
    B_{n+1} = B_{n} \cup \{x_{i(\ell_{n+1}+1)},\dots, x_{i(u_{n+1})}\}
  \end{gather*}
  finishes the recursion, and we can define $B=\bigcup_{n\geq n(1)} B_n$.

  By construction, $|f_{B}(n)-\alpha\mu(A)|\leq \ell_{n+1}^{-1}$ holds
  for $n>n(1)$, completing this proof.
\end{proof}

We now provide a counterexample that shows that $\mu$ is not
$\sigma$-additive. To do so, we define $A_0 = \concake \setminus \{ x_i
\mid i \in \N \}$ and $A_i = \{x_i\}$ for $i \in \N$. Obviously, for
all $j \in \N\cup \{0\}$, it holds that $\mu(A_j) = 0$, while at the
same time we have
\begin{align*}
  \mu \left( \bigcup_{j \in \N\cup \{0\}} A_j \right) = \mu(\concake) = 1,
\end{align*}
which means that $\mu$ is not $\sigma$-additive.

The valuation function $\mu$ defined above may seem to be attractive
for cake-cutting. We can interpret the sequence $(x_i)_{i \in \N}$ as
countably many points which are used to evaluate arbitrary pieces of
the cake. However, there are multiple drawbacks. First of all, the
existence of a Banach limit is only guaranteed if one is willing to
accept the validity of the axiom of choice, as already mentioned in
\cref{chap:preliminaries:banach}.  Furthermore, until now no explicit
nontrivial example of a Banach limit is known.  Hence, we cannot
calculate $\mu(A)$ for $A \in \mathfrak{P}(\concake)$ if the ordinary
limit of $f_A(n)$ does not exist, as we do not know what the Banach
limit looks like.

Thus, although $\mu$ is theoretically capable of evaluating all pieces
of cake in $\calP = \mathfrak{P}(\concake)$, it is actually not useful
for our purposes.  Besides the previously listed mathematical
problems, there are also practical problems related to cake-cutting
itself. If we would use $\mu$ as a valid valuation function in
cake-cutting, all players would be obliged to precisely define a
countable sequence $(x_i)_{i \in \N}$ of pairwise distinct elements in
$\concake$ and some Banach limit they are using for their valuation
functions. When we think of common approaches and results in the
cake-cutting literature, this approach seems impractical and not
feasible to use.

Hence, finding practically usable valuation functions defined on
$\mathfrak{P}(\concake)$ seems to remain an open problem.  Nonetheless,
this section showed that defining more complex valuation functions
(compared to the valuation functions represented via boxes) does not
solve our initial problem on $\mathfrak{P}(\concake)$. Therefore, in the
next section we discuss an alternative solution, namely, reducing
$\calP$ in size from $\mathfrak{P}(\concake)$ to a smaller subfamily
contained in $\mathfrak{P}(\concake)$.

\subsection{Borel $\boldsymbol{\sigma}$-Algebra}

We recommend to use $\calP = \calB(\concake)$ as the most useful family
of all admissible pieces of cake.
As shown in \cref{thm:strict-inclusions}, the Borel
$\sigma$-algebra $\calB(\concake)$ (strictly) contains $\calI(\concake)$
as well as $\calI(\concake)^{\N}$, but is strictly smaller than
$\calL(\concake)$ and $\mathfrak{P}(\concake)$.

In general, the Borel $\sigma$-algebra can become quite large and
complicated if the base set is not countable, as is the case for
$\concake \subset \RR$. In particular, one needs transfinite
induction to \enquote{construct} all Borel sets.  This means that, in
general, we cannot construct a valuation $\mu$ on $\calB(\concake)$ by
explicitly assigning a value $\mu(A)$ to every element
$A\in\calB(\concake)$ nor give a recursive algorithm to
construct~$\mu(A)$, as the $\sigma$-algebra is simply too
large. Instead, one can describe the valuation on a suitable generator
of the $\sigma$-algebra and use Carath\'{e}odory's extension theorem,
stated previously as Theorem~\ref{thm:cara}.

Let us show here that the box-based valuation functions are
$\sigma$-additive valuations on $\calI(\concake)$ and that $\calI(\concake)$
is an algebra. In this case, we can use Carath\'eodory's extension
theorem to extend the valuation functions to measures on
$\sigma(\calI(\concake)) = \calB(\concake)$. Since the valuation functions
are  finite,  it follows that this extension is unique. Hence,
by providing a box-based valuation function, we obtain a unique
measure on $\calB(\concake)$. Thus $\calP = \calB(\concake)$ is a good
solution to our problem.

Let us formalize the box-based valuation functions. A box-based
valuation function $\mu$ partitions the complete cake $\concake$ into a
finite number of pairwise disjoint subintervals, where each
subinterval is allocated a finite number of boxes of equal height.  We
denote the set of all subintervals which $\mu$ uses by
\begin{align*}
  \calI_{\mu} = \{ I_1 = [a_1, b_1), \dots, I_{n-1}=[a_{n-1},b_{n-1}),
  I_n = [a_n, b_n] \},
\end{align*}
where $\bigcup_{i=1}^{n} I_i = \concake$ and we have $I_i \cap I_j =
\emptyset$ for all $i$ and~$j$, $1 \leq i < j \leq n$.  Furthermore,
denote by $\psi_i \in \N$, for $1 \leq i \leq n$, the number of boxes
allocated to an interval $I_i$ by $\mu$ and denote by $\psi_{\mu} =
\sum_{i=1}^{n} \psi_i$ the total number of boxes.  This gives the
following weight function
\begin{gather*}
    p(x)
    := \frac{1}{\psi_\mu} \sum_{i=1}^n \frac
    {\psi_i}{\lambda(I_i)}\textbf{1}_{I_i}(x)
    = \frac{1}{\psi_\mu} \sum_{i=1}^n \frac
    {\psi_i}{b_i-a_i}\textbf{1}_{I_i}(x).
\end{gather*}
Note that $p(x)$ is an  integrable  function, which is a
probability density, i.e., $\int_0^1 p(x)\,dx = 1$.  Since $\mu$ is a
Lebesgue measure with a weight, we cannot define it on all of
$\mathfrak{P}(\concake)$, but we may extend it easily onto $\calB(\concake)$
using integration: Define $\mu \colon \calB(\concake) \to [0,1]$ as
\begin{gather*}
    B\mapsto \mu(B) := \int_B p(x)\,dx.
\end{gather*}
In particular, if $B \in \calI(\concake)$ is a finite union of intervals
in~$\concake$, we see that
\begin{align*}
  B \cap I_i = \bigcup_{j=1}^{n(B,i)} \langle c_j^i,
  d_j^i\rangle,\quad i=1,2,\dots n
\end{align*}
for suitable $n(B,i)\in\N$, and
\begin{align*}
  \mu(B) = \frac{1}{\psi_{\mu}} \sum_{i=1}^{n} \left[
  \frac{\psi_i}{b_i - a_i}
  \sum_{j=1}^{n(B,i)} (d_j^i - c_j^i) \right].
\end{align*}

\begin{example}
  Referring back to the box-based valuation function $\nu$ from
  \cref{fig:motivation:common-valuation-function} on
  page~\pageref{fig:motivation:common-valuation-function}, we obtain
  \begin{align*}
    \calI_{\nu} = \{ I_1 = [0, \nicefrac{1}{6}),
      I_2 = [\nicefrac{1}{6}, \nicefrac{2}{6}), \dots,
      I_6 = [\nicefrac{5}{6}, 1] \}.
  \end{align*}

  Also, we have $\psi_1 = 2$, $\psi_2 = 1$, $\psi_3 = 5$, $\psi_4 =
  2$, $\psi_5 = 4$, $\psi_6 = 3$, and $\psi_{\nu} = 17$. For $B =
  [\nicefrac{1}{10},\nicefrac{1}{4}]$, we obtain
  \begin{align*}
    \nu(B) & = \frac{1}{\psi_{\nu}} \sum_{i=1}^{6} \left[
                    \frac{\psi_i}{b_i - a_i} \sum_{j = 1}^{n(B,i)}
                    (d_j^i - c_j^i) \right] \\
                  & = \frac{1}{17} \left(\frac{2}{\nicefrac{1}{6}} \cdot
                    \left(\nicefrac{1}{6} -  \nicefrac{1}{10}\right)
                    + \frac{1}{\nicefrac{1}{6}} \cdot
                    \left(\nicefrac{1}{5} - \nicefrac{1}{6}\right) +
                    \frac{5}{\nicefrac{1}{6}} \cdot 0 + \frac{2}{\nicefrac{1}{6}} \cdot 0
                    + \frac{4}{\nicefrac{1}{6}} \cdot 0 + \frac{3}{\nicefrac{1}{6}} \cdot
                    0\right) \\
                  & = \frac{ 1}{17}.
  \end{align*}
\end{example}

Summing up, $\calB(\concake)$ is recommended as a very good choice for
$\calP$, since this choice enables us to use box-based valuation
functions and their extensions as measures.  It also enables us to use
any probability density -- not only piecewise continuous densities --
to define a divisible valuation function on $\calB(\concake)$ which is
an absolutely continuous probability measure with respect to Lebesgue
measure.  Since Lebesgue null sets are subsets of Borel null sets, and
since an absolutely continuous valuation attaches value zero to any
Borel null set, a further extension to $\calL(\concake)$ is also
possible, but the enrichment by subsets of Borel null sets (which are
evaluated zero) has no additional benefit.

\section{Conclusion and Some Further Technical Remarks}
\label{chap:conclusion}

Among the questions we have tried to answer are:
\begin{enumerate}
\item Which subsets of $\concake$ should be considered as pieces of cake?
  Only finite unions of intervals or more general sets?
\item If valuation functions are considered as set-functions as
  studied in measure theory, should they be \emph{$\sigma$-additive} or
  only \emph{finitely additive}?
\end{enumerate}
A related interesting question is:
\begin{enumerate}\setcounter{enumi}{2}
\item Which continuity property should be used for a valuation?

  For the standard cake $\concake$, the natural choices are either
  \emph{divisibility} $\mathrm{(D)}$ or \emph{absolute continuity} with
  respect to Lebesgue measure, see p.~\pageref{ac-sc-d}. Obviously,
  absolute continuity implies continuity. There is a partial converse to
  this assertion: The notions of continuity and divisibility coincide
  (cf.\ Corollary~\ref{cor-D-vs-sigma}) and the distribution function
  $F_v(x)$ of a continuous valuation can be represented as a sum of the
  form $F_v(x) = \int_{0}^{x} f(t)\,dt + v^{\mathrm{sc}}([0,x])$; this
  means that it has an absolutely continuous part and a
  continuous-singular part, see the discussion in the paragraph on
  \enquote{General Measures} following Definition~\ref{def:measure}. For
  an abstract cake, one should replace divisibility $\mathrm{(D)}$ by
  the notion of sliceability, which is equivalent to condition
  $\mathrm{(DD)}$ by Theorem~\ref{theo-conti}, see
  Section~\ref{subsec:abstract} and
  \citeA{sch-sto:t:continuity-assumptions-in-cake-cutting}.
\end{enumerate}
While one can define the Dirac and counting measures for all sets in
$\mathfrak{P}(\abscake)$, there is no way to define a geometrically
sensible (and $\sigma$-additive [in dimensions one and two] or
finitely additive [in all higher dimensions]) notion of
\enquote{volume} for all sets~-- if we accept the validity of the
axiom of choice.  One can even show that the axiom of choice is
equivalent to the existence of non-measurable sets
(cf.~\citeR[p.~55]{cie:j:how-good-is-lebesgue-measure}).

Our findings result in concrete recommendations for cake-cutters. For
a finitely additive valuation $v$ on the standard cake $\concake$
(or indeed any one-dimensional cake) equipped with the algebra
$\calI(\concake)$ generated by the intervals, divisibility $\mathrm{(D)}$
is equivalent to atom-freeness or the continuity of the distribution
function $F_v(x) = v([0,x])$, cf.\ Corollary~\ref{cor-D-vs-sigma}. For
an abstract cake and a finitely additive valuation $v$, divisibility
$\mathrm{(D)}$ should be replaced by sliceability $\mathrm{(DD)}$,
which is equivalent to $v$ being atom-free; if $v$ is even
$\sigma$-additive, conditions $\mathrm{(D)}$ and $\mathrm{(DD)}$
coincide, see Theorem~\ref{theo-conti} and Corollary~\ref{cor-conti}.

 All of this breaks down, however, if we consider finitely additive
valuations on too big domains, say $\calP = \mathfrak{P}(\abscake)$: Even
for the standard cake there are divisible, finitely additive but not
$\sigma$-additive valuations, see
Lemma~\ref{lem:ultra-filter-valuation-function-satifies-D}.

We have also discussed in detail the measure-theoretic notions and
results that are relevant for the foundations of cake-cutting, for
both the standard cake and abstract cakes, including the notions of
$\sigma$-additivity, the Borel $\sigma$-algebra, and Carath\'eodory's
extension theorem (Theorem~\ref{thm:cara}). We emphasized the
importance of the Hahn--Banach theorem and the underlying axiom of
choice if one needs to evaluate arbitrary pieces of cake which are not
Borel or Lebesgue sets.

Banach, who can be seen as one of the founding fathers of the field of
cake-cutting,\footnote{Indeed, \citeA{ste:j:fair-division} presents
  the so-called last-diminisher procedure that is due to his students
  Banach and Knaster and guarantees a proportional division of the cake
  among any number of players.}  might perhaps have appreciated the
close connection between his work in measure theory and in
cake-cutting. For future work, we suggest to study which implications
our findings may have on existing or on yet-to-be-designed
cake-cutting algorithms.

To conclude, we have surveyed the existing rich literature on
cake-cutting algorithms and have identified the most commonly used
choices of sets consisting of what is allowed as pieces of cake.
After showing that these five most commonly used sets are distinct
from each other, we have discussed them in comparison.  In particular,
we have argued that $\mathfrak{P}(\abscake)$ is too general to define a
(practically or theoretically) useful valuation function on it.  And
finally, we have reasoned why we recommend the Borel $\sigma$-algebra
$\calB(\abscake)$ as a very good choice and how to construct, using
Carath\'eodory's extension theorem, a measure on $\calB(\abscake)$ that
cake-cutters can use to handle their box-based and even more general
valuation functions.

For a pragmatic approach to cake-cutting on the standard cake
$\concake$, the following five points are important:
\begin{enumerate}
\item[1.]  If one is interested in a fixed number of players and a
  fixed number of cuts, any additive valuation $v$ defined on the
  algebra of intervals $\calI(\concake)$ will do.
\item[2.]  If the players take rounds and if the protocol is
  open-ended
  (i.e., finite unbounded)
  or
  even
  infinite
  (recall the examples mentioned in Section~\ref{chap:introduction}
  and in Footnote~\ref{foo:infinite-protocols} of
  Section~\ref{chap:preliminaries:sigma}),

  the finite
  additivity of the valuation $v$ needs to be strengthened to
  $\sigma$-additivity, and the domain of the valuation should contain
  the Borel $\sigma$-algebra~$\calB(\concake)$~-- this is the
  smallest~$\sigma$-algebra containing $\calI(\concake)$.
\item[3.] If the valuation $v$ on $\calI(\concake)$ is divisible, measure
  theory guarantees that one is automatically in the situation described
  in item~2, i.e., the proper domain of (the extension of) $v$ is the
  Borel $\sigma$-algebra $\calB(\concake)$.
\item[4.] If one wants to extend the domain of the valuation $v$
  beyond $\calB(\concake)$, things become difficult: On the one hand, it is
  quite tricky to \enquote{construct} sensible valuations~-- unless we
  are happy with \enquote{rather simple} valuations like countable sums
  of point masses $v = \sum_{i\in\N} p_i \delta_{x_i}$, $\sum_{i\in\N}
  p_i = 1$, $(x_i)_{i\in\N}\subseteq [0,1]$, but these are obviously not
  divisible~-- and, on the other hand, they are not well-behaved,
  touching the very basis of axiomatic set theory.
\item[5.] The tools provided by measure theory are powerful enough to
  handle even abstract cakes.
\end{enumerate}

\begin{ack}
  We thank William S. Zwicker for helpful discussions.
\end{ack}

\begin{funding}
  This work was supported in part by grants of the Deutsche
  Forschungsgemeinschaft (DFG) RO-1202/14-2, RO-1202/21-1 and the DFG
  and Narodowe Centrum Nauki (NCN) joint initiative \enquote{Beethoven~3
    classic} SCHI-419/11-1, NCN 2018/31/G/ST1/02252.  The second author
  was a member of the PhD-programme \enquote{Online Participation,}
  supported by the North Rhine-Westphalian funding scheme
  \enquote{For\-schungs\-kollegs.}
\end{funding}

\appendix
\section{An Alternative Example Illustrating the Cantor Dust}
\label{sec:appendix}

  \begin{example}[Cantor dust; Cantor's ternary set]\label{ex:cantor-set-appendix}
    Write the elements $x\in\concake$ of the cake $\concake$ as
    ternary numbers, i.e., in the form
    \begin{gather*}
      x = \sum_{n\in\N} \frac{x_n}{3^n} \simeq 0.x_1x_2x_3\dots,
      \quad\text{where}\quad x_n\in\{0,1,2\},
    \end{gather*}
    and consider the set $C_{1/3}$ comprising all $x$ whose ternary
    expansion contains the digits \enquote{$0$} or \enquote{$2$} only. To
    enforce uniqueness, identify expressions of the form
    $0.*\!*\!*1000\ldots$ with $0.*\!*\!*0222\ldots$. The set $C_{1/3}$ is
    not countable since there is a bijection between $C_{1/3}$ and
    $\concake$: Take any $x=0.x_1x_2x_3\ldots\in C_{1/3}$ and read $\hat x:=
    0.\frac{x_1}2\frac{x_2}2\frac{x_3}2\ldots$ as \textbf{dyadic}
    expansion of an arbitrary element $\hat x\in \concake$.

    The set $C_{1/3}$ is the so-called \textbf{Cantor set} from
    Example~\ref{ex:cantor-set}. Think of its elements as \enquote{cream}
    pieces within the cake $\concake$, and imagine two players, taking turns
    in picking pieces of cake; for some reason (that their
    cardiologist elaborated on in detail) they have to avoid the cream
    altogether. For this, they are allowed to make two cuts, taking out an
    interval from the cake.\footnote{This is, of course, a non-standard
      cake-cutting protocol.}

    The optimal strategy is to take, in each round, the largest
    (necessarily open) interval between two cream pieces. From the triadic
    expansion, we see that, at each stage of the game, the maximum
    distance between two cream pieces is
    $0.\underbracket[.6pt]{*\!*\!*2}_n000\ldots -
    0.\underbracket[.6pt]{*\!*\!*0}_n222\ldots =
    0.\underbracket[.6pt]{0001}_n000\ldots \simeq 3^{-n}$, and this
    situation appears exactly $2^{n-1}$ times, since we have $2^{n-1}$
    choices for the leading $n-1$ digits denoted by the wildcard
    \enquote{$*\!*\!*$}~-- to wit, the pieces taken out are always the
    middle thirds of the largest remaining interval of cake:
    \begin{align*}
      A_0 = \concake
      &\xrightarrow[\qquad]{(*)}
        A_1 = A_0\setminus (\nicefrac{1}{3}, \nicefrac{2}{3}) =
        [0,\nicefrac{1}{3}] \cup [\nicefrac{2}{3},1]\\
      &\xrightarrow[\qquad]{(**)} A_2 = [0,\nicefrac{1}{9}] \cup
        [\nicefrac{2}{9},\nicefrac{1}{3}] \cup [\nicefrac{2}{3},
        \nicefrac{7}{9}] \cup [\nicefrac{8}{9}, 1].
    \end{align*}
    At the step marked $(*)$ Player 1 takes the first middle third, at
    the (double) step marked $(**)$ Player 2 and then Player 1 take the
    middle thirds of the remaining intervals, etc.

    If this procedure is repeated on and on, we remove countably many
    intervals from $\concake$ and end up with the \textbf{Cantor (ternary)
      set} $C_{1/3} = \bigcap_{n\in\N} A_n$ from
    Example~\ref{ex:cantor-set}, see
    \cref{fig:motivation:cantor-construction-sets-1}.

    We can use the triadic expansion also to assign a unique code to
    the removed piece: $I_{t_1t_2\dots t_{n}2}$ denotes the newly removed
    piece of cake at stage $n+1$ and the $t_1,\dots,t_{n}\in \{0,2\}$ mark
    the right-end point of the interval using the triadic expansion: $\sup
    I_{t_1t_2\dots t_{n}2} = \sum_1^{n} t_i 3^{-i} + 2\cdot
    3^{-n-1}$. This allows us to come up with a formula for the Cantor
    function from Example~\ref{ex:cantor-function}:
    \begin{gather}
        \text{on all of $I_{t_1t_2\dots t_{n}2}$ the function $V$ has
          the value\ \ } \sum_{i=1}^{n} \frac{t_i}{2} 2^{-i} + 3^{-n-1}.
    \end{gather}
    We refer to \cite[Sec.~2.5, 2.6]{schi:b:counterexamples} for a
    full discussion of this.
  \end{example}

\bibliographystyle{elsarticle-num-names}
\bibliography{sources}

\begin{thebibliography}{67}
\expandafter\ifx\csname natexlab\endcsname\relax\def\natexlab#1{#1}\fi
\providecommand{\url}[1]{\texttt{#1}}
\providecommand{\href}[2]{#2}
\providecommand{\path}[1]{#1}
\providecommand{\DOIprefix}{doi:}
\providecommand{\ArXivprefix}{arXiv:}
\providecommand{\URLprefix}{URL: }
\providecommand{\Pubmedprefix}{pmid:}
\providecommand{\doi}[1]{\href{http://dx.doi.org/#1}{\path{#1}}}
\providecommand{\Pubmed}[1]{\href{pmid:#1}{\path{#1}}}
\providecommand{\bibinfo}[2]{#2}
\ifx\xfnm\relax \def\xfnm[#1]{\unskip,\space#1}\fi
\bibitem[{Steinhaus(1948)}]{ste:j:fair-division}
\bibinfo{author}{H.~Steinhaus},
\newblock \bibinfo{title}{The problem of fair division},
\newblock \bibinfo{journal}{Econometrica} \bibinfo{volume}{16}
  (\bibinfo{year}{1948}) \bibinfo{pages}{101--104}.
\bibitem[{Balkanski et~al.(2014)Balkanski, Kurokawa, Br{\^a}nzei, and
  Procaccia}]{bal:c:simultaneous-cake-cutting}
\bibinfo{author}{E.~Balkanski}, \bibinfo{author}{D.~Kurokawa},
  \bibinfo{author}{S.~Br{\^a}nzei}, \bibinfo{author}{A.~Procaccia},
\newblock \bibinfo{title}{Simultaneous cake cutting},
\newblock in: \bibinfo{booktitle}{Proceedings of the 28th AAAI Conference on
  Artificial Intelligence}, \bibinfo{publisher}{AAAI Press},
  \bibinfo{year}{2014}, pp. \bibinfo{pages}{566--572}.
\bibitem[{Cohler et~al.(2011)Cohler, Lai, Parkes, and
  Procaccia}]{coh-lai-par-pro:c:optimal-envy-free-cake-cutting}
\bibinfo{author}{Y.~Cohler}, \bibinfo{author}{J.~Lai},
  \bibinfo{author}{D.~Parkes}, \bibinfo{author}{A.~Procaccia},
\newblock \bibinfo{title}{Optimal envy-free cake cutting},
\newblock in: \bibinfo{booktitle}{{Proceedings of the 25th AAAI Conference on
  Artificial Intelligence}}, \bibinfo{organization}{AAAI Press},
  \bibinfo{year}{2011}, pp. \bibinfo{pages}{626--631}.
\bibitem[{Procaccia(2009)}]{pro:c:though-shalt-covet}
\bibinfo{author}{A.~Procaccia},
\newblock \bibinfo{title}{Thou shalt covet thy neighbor's cake},
\newblock in: \bibinfo{booktitle}{{Proceedings of the 21st International Joint
  Conference on Artificial Intelligence}}, \bibinfo{year}{2009}, pp.
  \bibinfo{pages}{239--244}.
\bibitem[{Aumann et~al.(2013)Aumann, Dombb, and
  Hassidim}]{aum-dom-has:c:socially-efficient-cake-divisions}
\bibinfo{author}{Y.~Aumann}, \bibinfo{author}{Y.~Dombb},
  \bibinfo{author}{A.~Hassidim},
\newblock \bibinfo{title}{Computing socially-efficient cake divisions},
\newblock in: \bibinfo{booktitle}{{Proceedings of the 12th International
  Conference on Autonomous Agents and Multiagent Systems}},
  \bibinfo{organization}{International Foundation for Autonomous Agents and
  Multiagent Systems}, \bibinfo{year}{2013}, pp. \bibinfo{pages}{343--350}.
\bibitem[{Chevaleyre et~al.(2006)Chevaleyre, Dunne, Endriss, Lang,
  Lema\^{i}tre, Maudet, Padget, Phelps, Rodr\'{i}guez-Aguilar, and
  Sousa}]{che-etal:j:multiagent-resource-allocation}
\bibinfo{author}{Y.~Chevaleyre}, \bibinfo{author}{P.~Dunne},
  \bibinfo{author}{U.~Endriss}, \bibinfo{author}{J.~Lang},
  \bibinfo{author}{M.~Lema\^{i}tre}, \bibinfo{author}{N.~Maudet},
  \bibinfo{author}{J.~Padget}, \bibinfo{author}{S.~Phelps},
  \bibinfo{author}{J.~Rodr\'{i}guez-Aguilar}, \bibinfo{author}{P.~Sousa},
\newblock \bibinfo{title}{Issues in multiagent resource allocation},
\newblock \bibinfo{journal}{Informatica} \bibinfo{volume}{30}
  (\bibinfo{year}{2006}) \bibinfo{pages}{3--31}.
\bibitem[{Gutman and Nisan(2012)}]{gut-nis:c:fair-allocation-without-trade}
\bibinfo{author}{A.~Gutman}, \bibinfo{author}{N.~Nisan},
\newblock \bibinfo{title}{Fair allocation without trade},
\newblock in: \bibinfo{booktitle}{Proceedings of the 11th International
  Conference on Autonomous Agents and Multiagent Systems},
  \bibinfo{publisher}{IFAAMAS}, \bibinfo{year}{2012}, pp.
  \bibinfo{pages}{719--728}.
\bibitem[{Kash et~al.(2013)Kash, Procaccia, and
  Shah}]{kas-pro-sha:c:no-agent-left-behind}
\bibinfo{author}{I.~Kash}, \bibinfo{author}{A.~Procaccia},
  \bibinfo{author}{N.~Shah},
\newblock \bibinfo{title}{No agent left behind: {Dynamic} fair division of
  multiple resources},
\newblock in: \bibinfo{booktitle}{{Proceedings of the 12th International
  Conference on Autonomous Agents and Multiagent Systems}},
  \bibinfo{publisher}{IFAAMAS}, \bibinfo{year}{2013}, pp.
  \bibinfo{pages}{351--358}.
\bibitem[{Procaccia(2016)}]{pro:b:handbook-comsoc-cake-cutting-algorithms}
\bibinfo{author}{A.~Procaccia},
\newblock \bibinfo{title}{Cake cutting algorithms},
\newblock in: \bibinfo{editor}{F.~Brandt}, \bibinfo{editor}{V.~Conitzer},
  \bibinfo{editor}{U.~Endriss}, \bibinfo{editor}{J.~Lang},
  \bibinfo{editor}{A.~Procaccia} (Eds.), \bibinfo{booktitle}{Handbook of
  Computational Social Choice}, \bibinfo{publisher}{Cambridge University
  Press}, \bibinfo{year}{2016}, pp. \bibinfo{pages}{311--329}.
\bibitem[{Lindner and
  Rothe(2015)}]{lin-rot:b:economics-and-computation-cake-cutting}
\bibinfo{author}{C.~Lindner}, \bibinfo{author}{J.~Rothe},
\newblock \bibinfo{title}{Cake-cutting: {Fair} division of divisible goods},
\newblock in: \bibinfo{editor}{J.~Rothe} (Ed.), \bibinfo{booktitle}{Economics
  and Computation. An Introduction to Algorithmic Game Theory, Computational
  Social Choice, and Fair Division}, Springer Texts in Business and Economics,
  \bibinfo{publisher}{Springer-Verlag}, \bibinfo{year}{2015}, pp.
  \bibinfo{pages}{395--491}.
\bibitem[{Stromquist(2008)}]{str:j:finite-protocols-cannot-ef}
\bibinfo{author}{W.~Stromquist},
\newblock \bibinfo{title}{Envy-free cake divisions cannot be found by finite
  protocols},
\newblock \bibinfo{journal}{The Electronic Journal of Combinatorics}
  \bibinfo{volume}{15} (\bibinfo{year}{2008}) \bibinfo{pages}{R11}.
\bibitem[{Austin(1982)}]{aus:j:moving-knife-cut-and-choose}
\bibinfo{author}{A.~Austin},
\newblock \bibinfo{title}{Sharing a cake},
\newblock \bibinfo{journal}{Mathematical Gazette} \bibinfo{volume}{66}
  (\bibinfo{year}{1982}) \bibinfo{pages}{212--215}.
\bibitem[{Robertson and
  Webb(1998)}]{rob-web:b:cake-cutting-algorithms-be-fair-if-you-can}
\bibinfo{author}{J.~Robertson}, \bibinfo{author}{W.~Webb},
  \bibinfo{title}{Cake-Cutting Algorithms: Be Fair If You Can},
  \bibinfo{publisher}{CRC Press}, \bibinfo{year}{1998}.
\bibitem[{Robertson and
  Webb(1997)}]{rob-web:j:near-exact-and-envy-free-cake-division}
\bibinfo{author}{J.~Robertson}, \bibinfo{author}{W.~Webb},
\newblock \bibinfo{title}{Near exact and envy free cake division},
\newblock \bibinfo{journal}{Ars Combinatoria} \bibinfo{volume}{45}
  (\bibinfo{year}{1997}) \bibinfo{pages}{97--108}.
\bibitem[{Schilling and K\"uhn(2021)}]{schi:b:counterexamples}
\bibinfo{author}{R.~Schilling}, \bibinfo{author}{F.~K\"uhn},
  \bibinfo{title}{Counterexamples in Measure and Integration},
  \bibinfo{publisher}{Cambridge University Press}, \bibinfo{address}{Cambridge,
  UK}, \bibinfo{year}{2021}.
\bibitem[{Brams and
  Taylor(1996)}]{bra-tay:b:fair-division-from-cake-cutting-to-dispute-resolution}
\bibinfo{author}{S.~Brams}, \bibinfo{author}{A.~Taylor}, \bibinfo{title}{Fair
  Division: From Cake-Cutting to Dispute Resolution},
  \bibinfo{publisher}{Cambridge University Press}, \bibinfo{year}{1996}.
\bibitem[{Vitali(1905)}]{vit:c:vitali-sets}
\bibinfo{author}{G.~Vitali}, \bibinfo{title}{Sul problema della misura dei
  Gruppi di punti di una retta: {Nota}}, \bibinfo{publisher}{Tip.\ Gamberini e
  Parmeggiani}, \bibinfo{year}{1905}. \bibinfo{note}{Reprinted in: L.~Pepe,
  \textit{Giuseppe Vitali e l'analisi reale}, Rendiconti del Seminario
  Matematico e Fisico di Milano 54: 187--201, 1985}.
\bibitem[{Brams et~al.(1997)Brams, Taylor, and
  Zwicker}]{bra-tay-zwi:c:moving-knife}
\bibinfo{author}{S.~Brams}, \bibinfo{author}{A.~Taylor},
  \bibinfo{author}{W.~Zwicker},
\newblock \bibinfo{title}{A moving-knife solution to the four-person envy-free
  cake-division problem},
\newblock \bibinfo{journal}{Proceedings of the American Mathematical Society}
  \bibinfo{volume}{125} (\bibinfo{year}{1997}) \bibinfo{pages}{547--554}.
\bibitem[{Brams and Taylor(1995)}]{bra-tay:j:protocol}
\bibinfo{author}{S.~Brams}, \bibinfo{author}{A.~Taylor},
\newblock \bibinfo{title}{An envy-free cake division protocol},
\newblock \bibinfo{journal}{The American Mathematical Monthly}
  \bibinfo{volume}{102} (\bibinfo{year}{1995}) \bibinfo{pages}{9--18}.
\bibitem[{Schilling(2017)}]{schi:b:measures}
\bibinfo{author}{R.~Schilling}, \bibinfo{title}{Measures, Integrals and
  Martingales}, \bibinfo{edition}{2nd} ed., \bibinfo{publisher}{Cambridge
  University Press}, \bibinfo{address}{Cambridge, UK}, \bibinfo{year}{2017}.
\bibitem[{Schilling and
  Stoyan(2016)}]{sch-sto:t:continuity-assumptions-in-cake-cutting}
\bibinfo{author}{R.~Schilling}, \bibinfo{author}{D.~Stoyan},
  \bibinfo{title}{Continuity Assumptions in Cake-Cutting},
  \bibinfo{type}{Technical Report}
  \bibinfo{number}{arXiv:{\allowbreak}1611.04988v1 [cs.DS]}, ACM Computing
  Research Repository (CoRR), \bibinfo{year}{2016}.
\bibitem[{Aziz and
  Mackenzie(2016)}]{azi-mac:c:discrete-bounded-envy-free-cake-cutting-protocol}
\bibinfo{author}{H.~Aziz}, \bibinfo{author}{S.~Mackenzie},
\newblock \bibinfo{title}{A discrete and bounded envy-free cake cutting
  protocol for any number of agents},
\newblock in: \bibinfo{booktitle}{Proceedings of the 57th IEEE Symposium on
  Foundations of Computer Science}, \bibinfo{publisher}{IEEE Computer Society
  Press}, \bibinfo{year}{2016}, pp. \bibinfo{pages}{416--427}.
\bibitem[{Aziz and
  Mackenzie(2020)}]{azi-mac:j:bounded-envy-free-cake-cutting-algorithm}
\bibinfo{author}{H.~Aziz}, \bibinfo{author}{S.~Mackenzie},
\newblock \bibinfo{title}{A bounded and envy-free cake cutting algorithm},
\newblock \bibinfo{journal}{Communications of the ACM} \bibinfo{volume}{63}
  (\bibinfo{year}{2020}) \bibinfo{pages}{119--126}.
\bibitem[{Wagon(1985)}]{wag:b:banach-tarski-paradox}
\bibinfo{author}{S.~Wagon}, \bibinfo{title}{The Banach--Tarski Paradox},
  \bibinfo{publisher}{Cambridge University Press}, \bibinfo{address}{Cambridge,
  UK}, \bibinfo{year}{1985}.
\bibitem[{Woeginger and Sgall(2007)}]{woe-sga:j:complexity-cake-cutting}
\bibinfo{author}{G.~Woeginger}, \bibinfo{author}{J.~Sgall},
\newblock \bibinfo{title}{On the complexity of cake cutting},
\newblock \bibinfo{journal}{Discrete Optimization} \bibinfo{volume}{4}
  (\bibinfo{year}{2007}) \bibinfo{pages}{213--220}.
\bibitem[{Lindner and Rothe(2009)}]{lin-rot:c:dgef}
\bibinfo{author}{C.~Lindner}, \bibinfo{author}{J.~Rothe},
\newblock \bibinfo{title}{Degrees of guaranteed envy-freeness in finite bounded
  cake-cutting protocols},
\newblock in: \bibinfo{booktitle}{Proceedings of the 5th International Workshop
  on Internet and Network Economics}, \bibinfo{publisher}{Springer},
  \bibinfo{year}{2009}, pp. \bibinfo{pages}{149--159}.
\bibitem[{Walsh(2011)}]{wal:c:online-cake-cutting}
\bibinfo{author}{T.~Walsh},
\newblock \bibinfo{title}{Online cake cutting},
\newblock in: \bibinfo{booktitle}{{Proceedings of the 2nd International
  Conference on Algorithmic Decision Theory}}, \bibinfo{publisher}{Springer},
  \bibinfo{year}{2011}, pp. \bibinfo{pages}{292--305}.
\bibitem[{Bei et~al.(2012)Bei, Chen, Hua, Tao, and
  Yang}]{bei-che-hua-tao-yan:c:optimal-connected-cake-cutting}
\bibinfo{author}{X.~Bei}, \bibinfo{author}{N.~Chen}, \bibinfo{author}{X.~Hua},
  \bibinfo{author}{B.~Tao}, \bibinfo{author}{E.~Yang},
\newblock \bibinfo{title}{Optimal proportional cake cutting with connected
  pieces},
\newblock in: \bibinfo{booktitle}{{Proceedings of the 26th AAAI Conference on
  Artificial Intelligence}}, \bibinfo{publisher}{AAAI Press},
  \bibinfo{year}{2012}, pp. \bibinfo{pages}{1263--1269}.
\bibitem[{Cechl{\'a}rov{\'a} and
  Pill{\'a}rov{\'a}(2012)}]{cec-pil:j:computability-equitable-divisions}
\bibinfo{author}{K.~Cechl{\'a}rov{\'a}},
  \bibinfo{author}{E.~Pill{\'a}rov{\'a}},
\newblock \bibinfo{title}{On the computability of equitable divisions},
\newblock \bibinfo{journal}{Discrete Optimization} \bibinfo{volume}{9}
  (\bibinfo{year}{2012}) \bibinfo{pages}{249--257}.
\bibitem[{Brams et~al.(2012)Brams, Feldman, Lai, Morgenstern, and
  Procaccia}]{bra-fel-lai-mor-pro:c:maxsum-fair-cake-divisions}
\bibinfo{author}{S.~Brams}, \bibinfo{author}{M.~Feldman},
  \bibinfo{author}{J.~Lai}, \bibinfo{author}{J.~Morgenstern},
  \bibinfo{author}{A.~Procaccia},
\newblock \bibinfo{title}{On maxsum fair cake divisions},
\newblock in: \bibinfo{booktitle}{{Proceedings of the 26th AAAI Conference on
  Artificial Intelligence}}, \bibinfo{publisher}{AAAI Press},
  \bibinfo{year}{2012}, pp. \bibinfo{pages}{1285--1291}.
\bibitem[{Cechl{\'a}rov{\'a} et~al.(2013)Cechl{\'a}rov{\'a}, Dobo{\v{s}}, and
  Pill{\'a}rov{\'a}}]{cec-dob-pil:j:existence-equitable-divisions}
\bibinfo{author}{K.~Cechl{\'a}rov{\'a}}, \bibinfo{author}{J.~Dobo{\v{s}}},
  \bibinfo{author}{E.~Pill{\'a}rov{\'a}},
\newblock \bibinfo{title}{On the existence of equitable cake divisions},
\newblock \bibinfo{journal}{Information Sciences} \bibinfo{volume}{228}
  (\bibinfo{year}{2013}) \bibinfo{pages}{239--245}.
\bibitem[{Chen et~al.(2013)Chen, Lai, Parkes, and
  Procaccia}]{che-lai-par-pro:j:truth-justice-cake}
\bibinfo{author}{Y.~Chen}, \bibinfo{author}{J.~Lai},
  \bibinfo{author}{D.~Parkes}, \bibinfo{author}{A.~Procaccia},
\newblock \bibinfo{title}{Truth, justice, and cake cutting},
\newblock \bibinfo{journal}{Games and Economic Behavior} \bibinfo{volume}{77}
  (\bibinfo{year}{2013}) \bibinfo{pages}{284--297}.
\bibitem[{Br{\^a}nzei and
  Miltersen(2013)}]{bra-mil:c:equilibrium-analysis-cake-cutting}
\bibinfo{author}{S.~Br{\^a}nzei}, \bibinfo{author}{P.~Miltersen},
\newblock \bibinfo{title}{Equilibrium analysis in cake cutting},
\newblock in: \bibinfo{booktitle}{{Proceedings of the 12th International
  Conference on Autonomous Agents and Multiagent Systems}},
  \bibinfo{organization}{International Foundation for Autonomous Agents and
  Multiagent Systems}, \bibinfo{year}{2013}, pp. \bibinfo{pages}{327--334}.
\bibitem[{Aziz and
  Mackenzie(2016)}]{azi-mac:c:discrete-bounded-ef-protocol-four-players}
\bibinfo{author}{H.~Aziz}, \bibinfo{author}{S.~Mackenzie},
\newblock \bibinfo{title}{A discrete and bounded envy-free cake cutting
  protocol for four agents},
\newblock in: \bibinfo{booktitle}{Proceedings of the 48th Annual ACM Symposium
  on Theory of Computing}, \bibinfo{publisher}{ACM Press},
  \bibinfo{year}{2016}, pp. \bibinfo{pages}{454--464}.
\bibitem[{Edmonds and Pruhs(2006)}]{edm-pru:c:no-piece-of-cake}
\bibinfo{author}{J.~Edmonds}, \bibinfo{author}{K.~Pruhs},
\newblock \bibinfo{title}{Cake cutting really is not a piece of cake},
\newblock in: \bibinfo{booktitle}{Proceedings of the 17th Annual ACM-SIAM
  Symposium on Discrete Algorithms}, \bibinfo{year}{2006}, pp.
  \bibinfo{pages}{271--278}.
\bibitem[{Cechl{\'a}rov{\'a} and
  Pill{\'a}rov{\'a}(2012)}]{cec-pil:j:near-equitable-2-person-cake-cutting-algorithm}
\bibinfo{author}{K.~Cechl{\'a}rov{\'a}},
  \bibinfo{author}{E.~Pill{\'a}rov{\'a}},
\newblock \bibinfo{title}{A near equitable 2-person cake cutting algorithm},
\newblock \bibinfo{journal}{Optimization} \bibinfo{volume}{61}
  (\bibinfo{year}{2012}) \bibinfo{pages}{1321--1330}.
\bibitem[{Aumann and Dombb(2010)}]{aum-dom:c:efficiency-connected-pieces}
\bibinfo{author}{Y.~Aumann}, \bibinfo{author}{Y.~Dombb},
\newblock \bibinfo{title}{The efficiency of fair division with connected
  pieces},
\newblock in: \bibinfo{booktitle}{Proceedings of the 6th International Workshop
  on Internet \& Network Economics}, \bibinfo{publisher}{Springer-Verlag
  \textit{Lecture Notes in Computer Science \#6484}}, \bibinfo{year}{2010}, pp.
  \bibinfo{pages}{26--37}.
\bibitem[{Br{\^a}nzei et~al.(2013)Br{\^a}nzei, Procaccia, and
  Zhang}]{bra-pro-zha:c:externalities-cake-cutting}
\bibinfo{author}{S.~Br{\^a}nzei}, \bibinfo{author}{A.~Procaccia},
  \bibinfo{author}{J.~Zhang},
\newblock \bibinfo{title}{Externalities in cake cutting},
\newblock in: \bibinfo{booktitle}{{Proceedings of the 23rd International Joint
  Conference on Artificial Intelligence}}, \bibinfo{organization}{AAAI Press},
  \bibinfo{year}{2013}, pp. \bibinfo{pages}{55--61}.
\bibitem[{Stromquist and Woodall(1985)}]{str-woo:j:measures-agree}
\bibinfo{author}{W.~Stromquist}, \bibinfo{author}{D.~Woodall},
\newblock \bibinfo{title}{Sets on which several measures agree},
\newblock \bibinfo{journal}{{Journal of Mathematical Analysis and
  Applications}} \bibinfo{volume}{108} (\bibinfo{year}{1985})
  \bibinfo{pages}{241--248}.
\bibitem[{Deng et~al.(2009)Deng, Qi, and
  Saberi}]{den-qi-sab:t:complexity-envy-free}
\bibinfo{author}{X.~Deng}, \bibinfo{author}{Q.~Qi},
  \bibinfo{author}{A.~Saberi}, \bibinfo{title}{On the Complexity of Envy-free
  Cake Cutting}, \bibinfo{type}{Technical Report}
  \bibinfo{number}{arXiv:{\allowbreak}0907.1334v1 [cs.GT]}, ACM Computing
  Research Repository (CoRR), \bibinfo{year}{2009}.
\bibitem[{Segal-Halevi et~al.(2017)Segal-Halevi, Nitzan, Hassidim, and
  Aumann}]{seg-nit-has-aum:j:two-dimensional-cake-cutting}
\bibinfo{author}{E.~Segal-Halevi}, \bibinfo{author}{S.~Nitzan},
  \bibinfo{author}{A.~Hassidim}, \bibinfo{author}{Y.~Aumann},
\newblock \bibinfo{title}{Fair and square: {C}ake-cutting in two dimensions},
\newblock \bibinfo{journal}{Journal of Mathematical Economics}
  \bibinfo{volume}{70} (\bibinfo{year}{2017}) \bibinfo{pages}{1--28}.
\bibitem[{Reijnierse and
  Potters(1998)}]{rei-pot:j:finding-ef-pareto-optimal-division}
\bibinfo{author}{J.~Reijnierse}, \bibinfo{author}{J.~Potters},
\newblock \bibinfo{title}{On finding an envy-free {P}areto-optimal division},
\newblock \bibinfo{journal}{Mathematical Programming} \bibinfo{volume}{83}
  (\bibinfo{year}{1998}) \bibinfo{pages}{291--311}.
\bibitem[{Arzi et~al.(2011)Arzi, Aumann, and Dombb}]{arz-aum-dom:c:throw-cake}
\bibinfo{author}{O.~Arzi}, \bibinfo{author}{Y.~Aumann},
  \bibinfo{author}{Y.~Dombb},
\newblock \bibinfo{title}{Throw one's cake---and eat it too},
\newblock in: \bibinfo{booktitle}{{Proceedings of the 2nd International
  Conference on Algorithmic Decision Theory}}, \bibinfo{publisher}{Springer},
  \bibinfo{year}{2011}, pp. \bibinfo{pages}{69--80}.
\bibitem[{Brams et~al.(2003)Brams, Jones, and
  Klamler}]{bra-jon-kla:perfect-cake-cutting-with-money}
\bibinfo{author}{S.~Brams}, \bibinfo{author}{M.~Jones},
  \bibinfo{author}{C.~Klamler}, \bibinfo{title}{Perfect cake-cutting procedures
  with money}, \bibinfo{type}{Technical Report}, mimeo, \bibinfo{year}{2003}.
\bibitem[{Brams et~al.(2006)Brams, Jones, and
  Klamler}]{bra-jon-kla:j:better-cut-cake}
\bibinfo{author}{S.~Brams}, \bibinfo{author}{M.~Jones},
  \bibinfo{author}{C.~Klamler},
\newblock \bibinfo{title}{Better ways to cut a cake},
\newblock \bibinfo{journal}{Notices of the AMS} \bibinfo{volume}{53}
  (\bibinfo{year}{2006}) \bibinfo{pages}{1314--1321}.
\bibitem[{Brams et~al.(2008)Brams, Jones, and
  Klamler}]{bra-jon-kla:j:pie-cutting}
\bibinfo{author}{S.~Brams}, \bibinfo{author}{M.~Jones},
  \bibinfo{author}{C.~Klamler},
\newblock \bibinfo{title}{Proportional pie-cutting},
\newblock \bibinfo{journal}{International Journal of Game Theory}
  \bibinfo{volume}{36} (\bibinfo{year}{2008}) \bibinfo{pages}{353--367}.
\bibitem[{Brams et~al.(2013)Brams, Jones, and
  Klamler}]{bra-jon-kla:j:there-may-be-no-perfect-division}
\bibinfo{author}{S.~Brams}, \bibinfo{author}{M.~Jones},
  \bibinfo{author}{C.~Klamler},
\newblock \bibinfo{title}{N-person cake-cutting: There may be no perfect
  division},
\newblock \bibinfo{journal}{The American Mathematical Monthly}
  \bibinfo{volume}{120} (\bibinfo{year}{2013}) \bibinfo{pages}{35--47}.
\bibitem[{Webb(1997)}]{web:j:minimal-cuts}
\bibinfo{author}{W.~Webb},
\newblock \bibinfo{title}{How to cut a cake fairly using a minimal number of
  cuts},
\newblock \bibinfo{journal}{Discrete Applied Mathematics} \bibinfo{volume}{74}
  (\bibinfo{year}{1997}) \bibinfo{pages}{183--190}.
\bibitem[{Br{\^a}nzei et~al.(2016)Br{\^a}nzei, Caragiannis, Kurokawa, and
  Procaccia}]{bra-car-kur-pro:t:strategic-fair-division}
\bibinfo{author}{S.~Br{\^a}nzei}, \bibinfo{author}{I.~Caragiannis},
  \bibinfo{author}{D.~Kurokawa}, \bibinfo{author}{A.~Procaccia},
  \bibinfo{title}{An Algorithmic Framework for Strategic Fair Division},
  \bibinfo{type}{Technical Report}
  \bibinfo{number}{arXiv:{\allowbreak}1307.2225v2 [cs.GT]}, ACM Computing
  Research Repository (CoRR), \bibinfo{year}{2016}.
\bibitem[{Caragiannis et~al.(2011)Caragiannis, Lai, and
  Procaccia}]{car-lai-pro:c:more-expressive-cake-cutting}
\bibinfo{author}{I.~Caragiannis}, \bibinfo{author}{J.~Lai},
  \bibinfo{author}{A.~Procaccia},
\newblock \bibinfo{title}{Towards more expressive cake cutting},
\newblock in: \bibinfo{booktitle}{{Proceedings of the 22nd International Joint
  Conference on Artificial Intelligence}}, \bibinfo{publisher}{AAAI
  Press/IJCAI}, \bibinfo{year}{2011}, pp. \bibinfo{pages}{127--132}.
\bibitem[{Maccheroni and Marinacci(2003)}]{mac-mar:j:cut-pizza-fairly}
\bibinfo{author}{F.~Maccheroni}, \bibinfo{author}{M.~Marinacci},
\newblock \bibinfo{title}{How to cut a pizza fairly: {F}air division with
  decreasing marginal evaluations},
\newblock \bibinfo{journal}{Social Choice and Welfare} \bibinfo{volume}{20}
  (\bibinfo{year}{2003}) \bibinfo{pages}{457--465}.
\bibitem[{Sgall and
  Woeginger(2007)}]{sga-woe:j:linear-approximation-cake-cutting}
\bibinfo{author}{J.~Sgall}, \bibinfo{author}{G.~Woeginger},
\newblock \bibinfo{title}{An approximation scheme for cake division with a
  linear number of cuts},
\newblock \bibinfo{journal}{Combinatorica} \bibinfo{volume}{27}
  (\bibinfo{year}{2007}) \bibinfo{pages}{205--211}.
\bibitem[{Saberi and Wang(2009)}]{sab-wan:c:envy-free-cake-cutting-for-five}
\bibinfo{author}{A.~Saberi}, \bibinfo{author}{Y.~Wang},
\newblock \bibinfo{title}{Cutting a cake for five people},
\newblock in: \bibinfo{booktitle}{Proceedings of the 5th International
  Conference on Algorithmic Applications in Management},
  \bibinfo{publisher}{Springer}, \bibinfo{year}{2009}, pp.
  \bibinfo{pages}{292--300}.
\bibitem[{Manabe and
  Okamoto(2010)}]{man-oka:c:meta-envy-free-cake-cutting-protocols}
\bibinfo{author}{Y.~Manabe}, \bibinfo{author}{T.~Okamoto},
\newblock \bibinfo{title}{Meta-envy-free cake-cutting protocols},
\newblock in: \bibinfo{booktitle}{International Symposium on Mathematical
  Foundations of Computer Science}, \bibinfo{organization}{Springer},
  \bibinfo{year}{2010}, pp. \bibinfo{pages}{501--512}.
\bibitem[{Aumann et~al.(2014)Aumann, Dombb, and
  Hassidim}]{aum-dom-has:c:truthful-cake-auctions}
\bibinfo{author}{Y.~Aumann}, \bibinfo{author}{Y.~Dombb},
  \bibinfo{author}{A.~Hassidim},
\newblock \bibinfo{title}{Auctioning a cake: {T}ruthful auctions of
  heterogeneous divisible goods},
\newblock in: \bibinfo{booktitle}{{Proceedings of the 13th International
  Conference on Autonomous Agents and Multiagent Systems}},
  \bibinfo{organization}{International Foundation for Autonomous Agents and
  Multiagent Systems}, \bibinfo{year}{2014}, pp. \bibinfo{pages}{1045--1052}.
\bibitem[{Dubins and Spanier(1961)}]{dub-spa:j:cut-cake-fairly}
\bibinfo{author}{L.~Dubins}, \bibinfo{author}{E.~Spanier},
\newblock \bibinfo{title}{How to cut a cake fairly},
\newblock \bibinfo{journal}{American Mathematical Monthly} \bibinfo{volume}{68}
  (\bibinfo{year}{1961}) \bibinfo{pages}{1--17}.
\bibitem[{Barbanel(1996{\natexlab{a}})}]{bar:j:game-theoretic-algorithms-for-cake-divisions}
\bibinfo{author}{J.~Barbanel},
\newblock \bibinfo{title}{Game-theoretic algorithms for fair and strongly fair
  cake division with entitlements},
\newblock in: \bibinfo{booktitle}{Colloquium Mathematicae},
  volume~\bibinfo{volume}{69}, \bibinfo{year}{1996}{\natexlab{a}}, pp.
  \bibinfo{pages}{59--73}.
\bibitem[{Barbanel(1996{\natexlab{b}})}]{bar:j:super-envy-free-divisions}
\bibinfo{author}{J.~Barbanel},
\newblock \bibinfo{title}{Super envy-free cake division and independence of
  measures},
\newblock \bibinfo{journal}{{Journal of Mathematical Analysis and
  Applications}} \bibinfo{volume}{197} (\bibinfo{year}{1996}{\natexlab{b}})
  \bibinfo{pages}{54--60}.
\bibitem[{Zeng(2000)}]{zen:c:approximate-envy-free-procedures}
\bibinfo{author}{D.~Zeng},
\newblock \bibinfo{title}{Approximate envy-free procedures},
\newblock in: \bibinfo{booktitle}{Game Practice: {C}ontributions from Applied
  Game Theory}, \bibinfo{publisher}{Springer}, \bibinfo{year}{2000}, pp.
  \bibinfo{pages}{259--271}.
\bibitem[{Brams and Taylor(1995)}]{bra-tay:j:envy-free-protocol}
\bibinfo{author}{S.~Brams}, \bibinfo{author}{A.~Taylor},
\newblock \bibinfo{title}{An envy-free cake division protocol},
\newblock \bibinfo{journal}{The American Mathematical Monthly}
  \bibinfo{volume}{102} (\bibinfo{year}{1995}) \bibinfo{pages}{9--18}.
\bibitem[{Oxtoby(1980)}]{oxt:b:measure-and-category}
\bibinfo{author}{J.~Oxtoby}, \bibinfo{title}{Measure and Category}, Graduate
  Texts in Mathematics, \bibinfo{edition}{2nd} ed.,
  \bibinfo{publisher}{Springer}, \bibinfo{address}{New York, USA},
  \bibinfo{year}{1980}.
\bibitem[{Rudin(1991)}]{rudin:b:functional}
\bibinfo{author}{W.~Rudin}, \bibinfo{title}{Functional Analysis},
  \bibinfo{edition}{2nd} ed., \bibinfo{publisher}{McGraw Hill},
  \bibinfo{year}{1991}.
\bibitem[{Banach(1923)}]{ban:j:probleme-de-la-mesure}
\bibinfo{author}{S.~Banach},
\newblock \bibinfo{title}{Sur le probl\`{e}me de la m\'{e}sure},
\newblock \bibinfo{journal}{Fundamenta Mathematica} \bibinfo{volume}{4}
  (\bibinfo{year}{1923}) \bibinfo{pages}{7--33}.
\bibitem[{Banach(1932)}]{ban:b:theorie-des-operations-lineaires}
\bibinfo{author}{S.~Banach}, \bibinfo{title}{Th\'{e}orie des Op\'{e}rations
  Lin\'{e}aires}, \bibinfo{publisher}{Wydawnictwo Naukowe PWN},
  \bibinfo{address}{Warsaw, Poland}, \bibinfo{year}{1932}.
\bibitem[{Lebesgue(1904)}]{leb:b:leccons-sur-l-integration}
\bibinfo{author}{H.~Lebesgue}, \bibinfo{title}{Le\c{c}ons sur l'Int\'egration},
  \bibinfo{publisher}{Gauthier--Villars}, \bibinfo{address}{Paris, France},
  \bibinfo{year}{1904}.
\bibitem[{von Neumann(1929)}]{neu:j:analytische-eigenschaften}
\bibinfo{author}{J.~von Neumann},
\newblock \bibinfo{title}{\"{U}ber die analytischen {E}igenschaften von
  {G}ruppen linearer {T}ransformationen und ihrer {D}arstellungen},
\newblock \bibinfo{journal}{Mathematische Zeitschrift} \bibinfo{volume}{30}
  (\bibinfo{year}{1929}) \bibinfo{pages}{3--42}.
\bibitem[{Ciesielski(1989)}]{cie:j:how-good-is-lebesgue-measure}
\bibinfo{author}{K.~Ciesielski},
\newblock \bibinfo{title}{How good is {L}ebesgue measure?},
\newblock \bibinfo{journal}{The Mathematical Intelligencer}
  \bibinfo{volume}{11} (\bibinfo{year}{1989}) \bibinfo{pages}{54--58}.

\end{thebibliography}

\end{document}